\theoremstyle{plain}
\newtheorem{Thm}{Theorem}
\newtheorem{Pro}{Proposition} 
\newtheorem{Lem}{Lemma}
\newtheorem{Cor}{Corollary}
\theoremstyle{definition}
\newtheorem{Rem}{Remark}
\newtheorem{Def}{Definition}
\newcommand{\cl}{\mathcal}
\newcommand{\wh}{\widehat}
\newcommand{\wt}{\widetilde}
\newcommand{\mbf}{\mathbf}
\newcommand{\bb}{\mathbb}
\newcommand{\E}{\mathrm{E}}
\newcommand{\EqD}{\overset{d}{=}}
\newcommand{\ConvD}{\overset{d}{\rightarrow}}
 \newcommand{\ba}{\mbf{a}}
 \newcommand{\bw}{\mbf{w}}
\newcommand{\dsph}{\bb{S}_+^{d-1}}
\def\pp#1{ \left(#1\right) }
\def\pb#1{ \left[#1\right] }
\def\pc#1{ \left\{#1\right\} }
\def\Prt#1{ \Pr \pp{#1} }
\def\ind#1{  \mbf{1}  \pc{#1}}
\DeclareMathOperator\supp{supp}
\begin{document}

\begin{frontmatter}

\title{On estimation and order selection for  multivariate extremes via clustering}

\author[1]{Shiyuan Deng\corref{a}}
\author[1]{He Tang\corref{a}}
\author[1]{Shuyang Bai\corref{mycorrespondingauthor}}

\address[1]{Department of Statistics,  310 Herty Dr., University of Georgia, Athens, GA 30602}

\cortext[a]{The first two authors contributed equally to this work.}
 
\cortext[mycorrespondingauthor]{Corresponding author. Email address: \url{bsy9142@uga.edu}}

\begin{abstract}
We investigate the estimation of multivariate extreme models with a discrete spectral measure using spherical clustering techniques. The primary contribution involves devising a method for selecting the  order, that is, the number of clusters. The method consistently identifies the true order, i.e., the number of spectral atoms, and enjoys intuitive implementation in practice. Specifically, we introduce an extra penalty term to the well-known simplified average silhouette width, which penalizes small cluster sizes and small dissimilarities between cluster centers. Consequently,  we provide a consistent method for determining the order of a max-linear factor model, where a typical information-based approach is not viable. 
Our second contribution is a large-deviation-type analysis for estimating the discrete spectral measure through clustering methods, which serves as an assessment of the convergence quality of clustering-based estimation for multivariate extremes. Additionally, as a third contribution, we discuss how estimating the discrete measure can lead to parameter estimations of heavy-tailed factor models. We also present simulations and real-data studies that demonstrate order selection and factor model estimation.

\end{abstract}

\begin{keyword} 
clustering \sep
factor models \sep
max-linear models \sep
multivariate extremes \sep
order selection \sep
silhouettes

\MSC[2020] Primary 62G32 \sep
Secondary 60G70
\end{keyword}

\end{frontmatter}


\section{Introduction\label{sec:intro}}

The multivariate extreme value theory concerns the statistical pattern of concurrent extreme values of multiple variables \cite{beirlant2006statistics,haan2006extreme}.  
As a common approach to investigating this pattern, after standardizing the marginal distributions of the variables, one examines the angular distribution of the extreme samples, that is, data points with the largest norms. This angular distribution, under a natural assumption in the theory of multivariate extremes (i.e., the multivariate maximum domain of attraction), approximates a limit distribution on the unit sphere, known as the  {spectral (or angular) measure}.  See Section \ref{sec:background} below for more details.

Given that extremes inherently correspond to a reduced sample size, the challenge of handling high   dimensionality is of heightened importance in this context. As noted in the review article \cite{engelke2021sparse}, many efforts have focused on   employing parsimonious modeling assumptions and techniques to reduce complexity.    A particular parsimonious structure is a discrete spectral measure with a finite number of atoms; that is, the angular distribution of the extreme data points is approximately concentrated on a finite number of directions.      Despite its simplicity,   \cite{fougeres2013dense}     showed that any extremal dependence structure can be arbitrarily well approximated by such a discrete spectral measure.
In addition, a number of parametric models, including heavy-tailed max-linear and sum-linear models (see, e.g., \cite{einmahl2012m}), as well as the recently introduced transformed-linear    model of \cite{cooley2019decompositions}, are known to have a discrete spectral measure.

Recently, as attempts that can also be viewed as providing a parsimonious summary of the angular distribution of multivariate extremes, several authors considered applying clustering algorithms  over the sphere on which the spectral measure resides.   \citet{einmahl2012m} and \citet{janssen2020k} applied the spherical $k$-means algorithm based on cosine dissimilarity \cite{dhillon2001concept} and addressed its relation to the estimation of max-linear factor models.    \citet{fomichov2023spherical} proposed  the spherical $k$-principal-component ($k$-pc) algorithm which is based on a modified cosine dissimilarity, and discussed its superiority in terms of detecting the concentration of the spectral measure on lower-dimensional faces.     \citet{medina2024spectral} considered applying the spectral clustering algorithm \cite{ng2001spectral} to the $k$-nearest neighbor graph constructed from the angular part of the extreme samples, and related it to sum-linear factor models.

As readily observed in the aforementioned works, there is a natural connection between a discrete spectral measure and spherical clustering: Each atom in the spectral measure can be viewed as a cluster center (prototype), and the angular part of the extreme samples form clusters around these atoms. In fact, this intuition has been rigorously explored by \cite{janssen2020k, medina2024spectral}, where consistent recovery of the spectral measure based on their clustering algorithms was established (the consistency result of \cite{janssen2020k} also applies to the $k$-pc clustering of \cite{fomichov2023spherical}).  Since models such as the heavy-tailed max-linear and sum-linear factor models are essentially characterized by the spectral measure, the consistent estimation of spectral measure can be, in principle,  converted to the consistent estimation of parameters of the factor models.

So far, in all the theoretical analysis of the works linking a discrete spectral measure to a clustering algorithm,  the number of atoms, or equivalently speaking, the number of clusters,  is assumed to be known.  We refer to this number as the  {order}, since it also relates to the order of the factor models mentioned above. 
In \cite{janssen2020k,fomichov2023spherical,medina2024spectral},    ad hoc methods such as elbow plot and  scree plot  were used to guide the selection of the order in their real data analysis.  These ad hoc methods are based solely on human visuals to locate the vaguely defined ``elbow'' point, and lack  theoretical justification.

In this work, we further explore clustering-based estimation of multivariate extreme models with a discrete spectral measure.  The contributions of this work are threefold.
 The main contribution involves the development of an order selection method that, on the theoretical side, consistently recovers the true order, and on the practical side, enjoys intuitive and simple implementation. Our method is based on a variant of the well-known silhouette method \cite{rousseeuw1987silhouettes, hruschka2004evolutionary}.  In particular, we introduce an additional penalty term to the so-called simplified average silhouette width,  which discourages small cluster sizes and small dissimilarity between cluster centers.  As a consequence, we provide a method to consistently estimate the order of a max-linear factor model, for which a usual information-based method is not applicable due to the unavailability of likelihood (e.g., \cite{einmahl2012m,yuen2014crps}). 
Our second contribution concerns a large-deviation-type result on the discrete spectral measure estimation via the clustering methods such as  the spherical $k$-means and $k$-pc. This constitutes an attempt to address the quality of convergence for clustering-based estimation in the context of multivariate extremes. 
As a third contribution, we discuss how the discrete spectral measure estimation can be translated into parameter estimates of the heavy-tailed max-linear and sum-linear factor models. Simulation and real-data studies illustrating order selection and factor model estimation are also provided.

The paper is organized as follows. Section \ref{sec:background} provides background and preliminary results on multivariate extremes, spherical clustering, and their connection.  The penalized silhouette method for order selection is introduced in Section \ref{sec:order}. Section \ref{sec:rate} offers some large-deviation-type analysis of convergence of clustering-based spectral estimation.   Section \ref{sec:clust factor models} relates clustering-based spectral estimation to the estimation of certain heavy-tailed factor models. Section \ref{sec:simdata} presents simulation and real-data demonstrations of order selection and factor model estimation.  {Section \ref{sec:Summary} provides a summary and a discussion on potential future directions.} 
By default, all vectors are column vectors. The notation $\delta_\bw$ stands for a delta measure with unit mass at the point $\mbf{w}$ in an appropriate measurable space.

\section{Background}\label{sec:background}

In this section, we  provide some background information on multivariate extreme value theory, spherical clustering,  and their connection.    

\subsection{Multivariate extreme value theory}\label{sec:MEV}

In this section, we review some important elements of multivariate extreme value theory. We refer to \cite{beirlant2006statistics,  haan2006extreme, resnick2007heavy} for more details.

 Suppose that $\mathbf{X}=(X_1,\ldots,X_d)$ is a $d$-dimensional random vector taking values in $[0,\infty)^d$, where $d\ge 2$.  Many discussions in this paper can be extended to the case of $\bb{R}^d$-valued $\mbf{X}$, although for simplicity, we  restrict ourselves to the nonnegative orthant $[0,\infty)^d$, which is also most commonly encountered in practice.  As a conventional practice in the analysis of multivariate extremes, the modeling of marginals and extremal dependence is often separate. We assume that $\mathbf{X}$ has been marginally standardized to share a standard $\alpha$-Pareto-like tail asymptotically: 
 \begin{equation}\label{eq:equiv tail}
 \lim_{x\rightarrow\infty }x^\alpha\Pr(X_1> x)=  \cdots =    \lim_{x\rightarrow\infty }x^\alpha\Pr(X_d> x)=1,
 \end{equation}
  where $\alpha>0$ is known, often chosen as $\alpha=1$ or $\alpha=2$ in literature.  One may generalize \eqref{eq:equiv tail} to include slowly varying functions (see \cite{bingham:1989:regular}), whereas we choose not to do so for simplicity.
  The so-called multivariate regular variation (MRV) assumption  on $\mbf{X}$   requires
\begin{equation}\label{eq:Lambda}
u \Pr\left(  u^{-1/\alpha} \mathbf{X}   \in  \cdot  \right) \overset{v}{\rightarrow}   \Lambda(\cdot), \quad \text{as }u\rightarrow\infty,
\end{equation}
where $\overset{v}{\rightarrow}$ denotes vague convergence of measures  on $\bb{E}_d:= [0,\infty)^d\setminus \{\mathbf{0}\}$, and  $\Lambda$ is an infinite measure on   $ \bb{E}_d$ known as the  {exponent measure}.  For the notion of vague convergence, we follow the formulation of \cite{hult2006regular}   (termed $\cl{M}_0$-convergence there) that does not involve a compactification of $[0,\infty)^d$; see also \cite{kulik2020heavy} (termed $\text{vague}^{\#}$-convergence there).  In particular, convergence \eqref{eq:Lambda} is characterized by convergence at any Borel  subset $E$ of $\bb{E}_d$ whose boundary $\partial E$ is not charged by $\Lambda$ (i.e., $E$ is a  $\Lambda$-continuity set), and which is bounded away from the origin $\mbf{0}$.    

The MRV assumption on  $\mbf{X}$ is equivalent to $\mathbf{X}$ being in the multivariate max-domain of attraction, i.e.,   convergence in distribution of  the normalized component-wise maximum of i.i.d.\ samples of $\mathbf{X}$ towards a multivariate $\alpha$-Fr\'echet distribution with joint distribution function 
\begin{equation}\label{eq:Frechet}
 F_\alpha(\mbf{x}):=\exp\left[-\Lambda( [0,\infty)^d \setminus [\mbf{0},\mbf{x}] )\right],   
\end{equation}
where $[\mbf{0},\mbf{x}]=[0,x_1]\times
\cdots  \times[0,x_d]$, $x_i>0$, $i\in \pc{1,\ldots,d}$.   Moreover, the measure $\Lambda$ satisfies the homogeneity property $\Lambda(c \, \cdot )=c^{-\alpha}\Lambda(\cdot)$, $c>0$, and therefore admits a polar decomposition into a product of radial and angular parts. We shall follow the formulation in \cite[Section 8.2.5]{beirlant2006statistics}, which allows the use of different norms for the radial and angular components.   Suppose that $\| \cdot \|_{(r)}$ and $\| \cdot \|_{(s)}$ denote two arbitrary norms on $\bb{R}^d$.  Slightly abusing the notation, we still  use $\Lambda$  denote the push-forward measure of $\Lambda$ under the one-to-one mapping $\bb{E}_d\mapsto (0,\infty)\times \mathbb{S}_+^{d-1}$,   $\mathbf{x}\mapsto (r,\mathbf{w}=(w_1,\ldots,w_d)):= \left(\|\mathbf{x}\|_{(r)}, \mathbf{x}/\|\mathbf{x}\|_{(s)} \right)$, where
\begin{equation}\label{eq:unit sphere}
\mathbb{S}_+^{d-1}=\{\mbf{x}\in [0,\infty)^d:\  \|\mbf{x}\|_{(s)}=1 \},
\end{equation} 
we have 
\begin{equation}\label{eq:Lambda polar}
\Lambda(dr,d\mbf{w})
= c_{(r)}  \alpha r^{-\alpha-1}dr \times H(d\mathbf{w}),
\end{equation}
where  
\begin{equation}\label{eq:cr}
c_{(r)}=\Lambda(\{\mbf{x}\in  [0,\infty)^d:\ \|\mbf{x}\|_{(r)}\ge 1  \}), 
\end{equation}
and $H$ is a probability measure on $\mathbb{S}_+^{d-1}$ known as  the (normalized)  {spectral measure}. The measure $H$   describes the angular distribution of the concurrence of the extreme values and characterizes the extremal dependence of $\mbf{X}$.   As a consequence of the  marginal standardization, we have 
\begin{equation}\label{eq:stand cond}
\int_{\mathbb{S}_+^{d-1}} \pp{\frac{w_1} {\|\mbf{w}\|_{(r)}}}^\alpha  H(d\mbf{w})=\cdots=\int_{\mathbb{S}_+^{d-1}} \pp{\frac{w_d} {\|\mbf{w}\|_{(r)}}}^\alpha  H(d\mbf{w})=\frac{1}{c_{(r)}}.  
\end{equation} 
In practice, commonly used norms are  the  $p$-norm $\|\mbf{x}\|_p=\pp{\sum_{j=1}^d |x_j|^p}^{1/p}$, $p\in (0,\infty)$, and the sup-norm $\|\mbf{x}\|_\infty=\max\pp{|x_1|,\ldots,|x_d|}$.
 In view of \eqref{eq:Lambda}, the following weak convergence on $\bb{S}_{+}^{d-1}$ holds, as $u\rightarrow\infty$,
\begin{equation}\label{eq:conv spec}
\Pr\pp{  \mbf{X}/\|\mbf{X}\|_{(s)} \in \cdot \mid \|\mbf{X}\|_{(r)}\ge u }  \ConvD H( \cdot)= c_{(r)}^{-1} \Lambda  \pp{ \{\mbf{x}\in \bb{E}_d:\ \mbf{x}/\|\mbf{x}\|_{(s)}\in \cdot \ ,\   \|\mbf{x}\|_{(r)}\ge 1\}} .
\end{equation}

\subsection{Spherical clustering}\label{sec:sph clust}

The spherical clustering algorithms that have been considered so far are performed exclusively on the unit sphere $\bb{S}_+^{d-1}$ with respect to the $2$-norm (Euclidean norm), that is, take $\|\cdot \|_{(s)}$ in \eqref{eq:unit sphere} as $\|\cdot \|_2$. We do not make this assumption for generality unless discussing specific examples.
We equip $\bb{S}_+^{d-1}$ with the subspace topology inherited from $\bb{R}^d$.  
Next, we introduce a  {dissimilarity measure} $D$ on $\bb{S}_+^{d-1}$ that is essential for clustering. 

 \begin{Def}\label{Def:dis}
A  {dissimilarity measure} $D$ on $\bb{S}_+^{d-1}$ is a bivariate function $D:\bb{S}_+^{d-1}\times \bb{S}_+^{d-1}\rightarrow [0,1]$ that is continuous and satisfies the following properties:   
(i) $D(\mbf{w}_1,\mbf{w}_2)=0$ if and only if $\mbf{w}_1=\mbf{w}_2$, and  (ii) $D(\mbf{w}_1,\mbf{w}_2)=D(\mbf{w}_2,\mbf{w}_1)$, where $\mbf{w}_i\in \bb{S}_+^{d-1}$, $i\in \pc{1,2}$.
\end{Def}

\begin{Rem}\label{Rem:D property}
 Without loss of generality, we shall assume that $D$ is properly normalized so that $D$ is surjective over $[0,1]$. 
   A nonnegative function $D$ satisfying (i) and (ii) is often referred to as a  {semimetric}, which  lacks the triangular inequality axiom  of a metric.
   With the assumptions imposed,  we have $\mbf{w}_n\rightarrow \mbf{w}$  on $\dsph$ if and only if $D(\mbf{w}_n,\mbf{w}) \rightarrow 0$ as $n\rightarrow\infty$, and
  the $D$-neighborhoods 
  $$B_D(\mbf{w},r):=\{\mbf{u}\in \dsph: \ D(\mbf{w},\mbf{u})<r   \},$$ $\mbf{w}\in \dsph, r>0$,  form a topological basis of $\dsph$; see, e.g.,  \cite{wilson1931semi,galvin1984completeness}.  Note that due to the compactness of $\dsph$ and the  continuity of $D$, the function    
  \begin{equation}\label{eq:D*}
  D^{\dagger}(\mbf{w}_1,\mbf{w}_2):=\sup_{\mbf{w}\in \dsph}|D(\mbf{w}, \mbf{w}_1 )-D(\mbf{w}, \mbf{w}_2 )| 
\end{equation}
  is also a semimetric that is continuous on $\dsph\times \dsph$ and maps surjectively   to $[0,1]$, which we refer to as the  {dual} of $D$.   Following from its definition,  we have $D^{\dagger}\ge D$, and   a   triangular-like inequality holds:
\begin{equation}\label{eq:ps tri}
 D(\mbf{w}_1, \mbf{w}_3 )\le D(\mbf{w}_1, \mbf{w}_2 ) + D^{\dagger} \pp{\mbf{w}_2,\mbf{w}_3}.
\end{equation}
Additionally, we have the following property: if $D(\mbf{w}_{n},\mbf{w})\to 0$ as $n\rightarrow\infty$, then $D^{\dagger}(\mbf{w}_{n},\mbf{w})\to 0$ as well.
\end{Rem}

  Some common dissimilarity measures are only semimetrics but not metrics.  Below,  we consider $\|\cdot \|_{(s)}=\|\cdot\|_{2}$ so that $\dsph$ is the $2$-norm sphere. The cosine dissimilarity adopted in the  spherical $k$-means of \cite{dhillon2001concept,janssen2020k} is given by 
\begin{equation}\label{eq:cos dis}
    D_{\cos}(\mbf{w}_1,\mbf{w}_2) =1-\mbf{w}_1^{\top} \mbf{w}_2,  
\end{equation}
where $\mbf{w}_1, \mbf{w}_2\in \dsph \subset \bb{R}^d$ .   The dissimilarity measure corresponding to the $k$-pc algorithm of \cite{fomichov2023spherical}  is given by
\begin{equation}\label{eq:pc dis}
D_{\mathrm{pc}}(\mbf{w}_1,\mbf{w}_2) =  1-  \pp{ \mbf{w}_1^{\top}\mbf{w}_2}^2. 
\end{equation}
These two dissimilarity measures enjoy computational advantages, although  neither of them is a metric. 
Note that  since $\left|\pp{\mbf{w}_1^{\top} \mbf{w}_2}^2-\pp{\mbf{w}_1^{\top} \mbf{w}_3}^2\right|\le 2|\mbf{w}_1^{\top} \mbf{w}_2-\mbf{w}_1^{\top} \mbf{w}_3|\le 2\|\mbf{w}_2-\mbf{w}_3\mbf\|_2$,  $\mbf{w}_i\in \dsph \subset \bb{R}^d$, one obtains a bound for the dual semimetric as $D^\dagger\pp{\mbf{w}_2,\mbf{w}_3}\le c \|\mbf{w}_2-\mbf{w}_3\|_2 $ for $D=D_{\mathrm{cos}}$ or $D_{\mathrm{pc}}$, with constant $c=1$ or $2$  respectively.

To simplify the mathematical description of clustering of sample data, it is convenient to use the notion of  {multiset} (e.g., \cite{kettleborough2013optimising}).
 Recall that a multiset $W$ on $\dsph$ is a set that allows repetition of its elements, whose support, denoted as $\supp{W}$, is a subset of $\dsph$  in the usual sense that eliminates repetitions in $W$. For instance,  with two distinct points $\bw_1$ and $\bw_2$ on $\dsph$, one can have $W=\{\mbf{w}_1,\mbf{w}_1,\mbf{w}_2\}$  with $\supp{W}=\{\bw_1,\bw_2\}$. A multiset $W$ can be characterized by the multiplicity function $m_W:\dsph\mapsto\{0,1,\ldots\}$,  where $m_W(\mbf{w})$ equals the number of repetitions of element $\mbf{w}\in \dsph$ ($m_W(\mbf{w})=0$ if $\mbf{w}\notin \supp{W}$).     A subset of $\dsph$ in the usual sense can be understood as a multiset with the multiplicity taking value either $0$ or $1$, with the empty set corresponding to a multiplicity function that is identically $0$.   When the notation $\mbf{w}\in W$ is used for a multiset $W$, it means that $\mbf{w}$ is an element in $\supp{W}$. 
 For multisets $W_1,W_2$ with multiplicity functions $m_1$ and $m_2$ respectively, their union $W_1\cup W_2$ is given by the multiset characterized by the multiplicity function $m_{1}\vee m_{2}$, and their intersection $W_1\cap W_2$ is given by the multiset characterized by $m_{1}\wedge m_{2}$. The relation $W_1\subset W_2$ is understood as $m_1\le m_2$. Furthermore, if $\supp{W}$ is a finite set, a summation $\sum_{\mbf{w}\in W}f(\bw )$ for a suitable function $f$ is understood as $\sum_{\mbf{w}\in \supp{W}}f(\bw ) m_W(\mbf{w})$. For example, the cardinality of $W$ is defined as 
 $$|W|= \sum_{\bw \in \supp{W}} m_W(\mbf{w}).
 $$ 
Also we write $D(\mbf{w},W)= \inf_{\mbf{s}\in \supp{W}} D(\mbf{w},\mbf{s})$.  

Now suppose $W$ is a multiset on $\bb{S}_+^{d-1}$ with cardinality $|W|<\infty$.   Suppose $k\in \bb{Z}_+$ and $k\le |W|$.   Let $A_k^*=\pc{\mbf{a}_1^* ,\ldots,\mbf{a}_k^*}$  be a multiset  on  $\dsph$ with cardinality $k$,  which satisfies 
\begin{equation}\label{eq:k-clust opt}
\sum_{\mbf{w}\in W} D\pp{\mbf{w},A_k^*}= \inf\left\{\sum_{\mbf{w}\in W} D(\mbf{w},A): \  \supp{A}\subset \bb{S}_+^{d-1}, \  |A|=k  \right\}.
\end{equation}
 The existence of $A_k^*$ is guaranteed by the continuity of $D$ and the compactness of $\bb{S}_+^{d-1}$, although it may not be unique. Notice that when $|\supp{W}|\ge k$, the infimum in \eqref{eq:k-clust opt} must be achieved with a distinct set of $\mbf{a}_i^*$'s.
  Below when multisets $C_1,\ldots,C_k$ with multiplicity functions $m_1,\ldots,m_k$  are said to form a partition of a multiset $W$ with multiplicity function $m$, it means that $m=m_1+\cdots+m_k$, and $m_i\neq 0$ for all $i\in\{1,\ldots,k\}$. 
\begin{Def}\label{Def:k-clust}
A   {$k$-clustering} of a multiset  $W$, $1\le k\le |W|$, with respect to the dissimilarity measure $D$ 
 refers to a pair $(A_k^*, \mathfrak{C}_k)$. Here $A_k^*=\pc{\mbf{a}_1^* ,\cdots,\mbf{a}_k^*}$ is as described above, and $\mathfrak{C}_k=\{C_1,\ldots, C_k\}$ is a partition of $W$ into a collection of       
 multisets $C_i$'s  such that $D\pp{\mbf{w}, A_k^* }=D\pp{\mbf{w}, \mbf{a}_i^*}$ for all $\mbf{w}\in C_i$, $i\in \pc{1,\ldots,k}$.  We refer to  $A_k^*$ as the  {set of centers}  and each $C_i$ as a cluster.
\end{Def}
\begin{Rem}\label{Rem:nonemp clust}
A $k$-clustering of $W$ always exists, although it may not be unique even when $A_k^*$ is unique: there may be points in $\supp{W}$ with the same $D$-dissimilarity to multiple centers. On the other hand, it is always possible to ensure non-emptiness of each cluster $C_i$ when $k\le |W|$. 
\end{Rem}
With the choices $D=D_{\cos}$ and $D_{\rm{pc}}$ in \eqref{eq:cos dis} and \eqref{eq:pc dis}, respectively, a $k$-clustering corresponds to the spherical $k$-means and $k$-pc clustering of \cite{dhillon2001concept} and \cite{fomichov2023spherical}, respectively.   Solving a $k$-clustering problem can be computationally hard,  and typically, the solution can only be approximated by a heuristic algorithm such as a Lloyd-type iterative algorithm as in \cite{dhillon2001concept} and \cite{fomichov2023spherical}. In the theoretical analysis of this paper, we assume that a $k$-clustering can be found accurately. 
In addition,   when $W$ is a random multiset,  for a $k$-clustering of $W$, we assume that  the  elements in $A_k^*$ and the labels $\ind{\mbf{w}\in C_j}$, $\mbf{w}\in W$, $j\in\{1,\ldots,k\}$,  are   measurable.

\subsection{Spherical clustering for multivariate extremes}\label{eq:sph clust MEV}\label{sec:sph clust MEV}

We follow \cite{janssen2020k} and \cite{fomichov2023spherical} to relate the spherical clustering to the analysis of multivariate extremes.  Suppose that $\pp{\mbf{X}_1,\ldots,\mbf{X}_n}$, $n\in \bb{Z}_+$, are i.i.d.\ samples of $\mbf{X}$, which is marginally standardized and regularly varying on $\bb{E}_d$ with spectral measure $H$ on $\bb{S}_+^{d-1}$ as assumed in Section \ref{sec:MEV}. We shall also follow the notation introduced in the same section.  Let $\ell_n$ be an intermediate sequence satisfying $\ell_n\rightarrow \infty$ and $\ell_n/n\rightarrow 0$ as $n\rightarrow\infty$.   Introduce  a  multiset on $\bb{S}_+^{d-1}$ representing the extremal subsample:
\begin{equation}\label{eq:W_n}
{W}_n=\left\{\mbf{X}_i/\|\mbf{X}_i\|_{(s)}: \  \|\mbf{X}_i\|_{(r)}\ge 
 \pp{n/ \ell_n}^{1/\alpha}, \ i\in \pc{1,\ldots,n}  \right\}.
\end{equation}
In  words, the  extremal subsample is selected by sample points with largest $\|\cdot \|_{(r)}$ norms   projected  onto the $\|\cdot \|_{(s)}$-norm sphere $\dsph$.
The choice of $\ell_n$ and the regular variation assumption  together imply 
\begin{equation}\label{eq:W_n exp val}
\E |W_n| = n\Prt{ \|\mbf{X}_n\|_{(r)} \ge \pp{ n/\ell_n}^{1/\alpha} }\sim \ell_n  c_{(r)}\rightarrow\infty
\end{equation}
as $n\rightarrow\infty$, where $ c_{(r)}$ is in \eqref{eq:cr}.  Notice that the set of the form $ \{\mbf{x}\in \bb{E}_d:\  \|\mbf{x}\|_{(r)}\ge x\}$, $x>0$, is always a $\Lambda$-continuity set due to the homogeneity of $\Lambda$.  Then by a triangular-array version of  the Strong Law of Large Numbers (see, e.g., \cite{hsu1947complete}), we have
\begin{equation}\label{eq:LLN W_n}
|W_n|/\ell_n  = \frac{1}{\ell_n} \sum_{i=1}^n \ind{  \|\mbf{X}_i\|_{(r)}\ge  \pp{n/\ell_n}^{1/\alpha} }  \rightarrow \Lambda\pp{ \{\mbf{x}\in \bb{E}_d:\  \|\mbf{x}\|_{(r)}\ge 1\} } 
 =c_{(r)} 
\end{equation}  
almost surely as $n\rightarrow\infty$.

Next,  define the following empirical spectral measure on $\bb{S}_+^{d-1}$ as
\begin{equation}\label{eq:emp spec}
H_n=  \frac{1}{|W_n|}\sum_{\bw\in W_n} \delta_\bw,
\end{equation}
where $H_n$ is understood as a zero measure if $|W_n|=0$.
 Then we have the following basic consistency result.
 \begin{Pro}\label{Pro:basic consist}
   Suppose $\mbf{X}$ satisfies conditions \eqref{eq:equiv tail} and    \eqref{eq:Lambda} with a spectral measure $H$ on $\bb{S}_+^{d-1}$ as defined in \eqref{eq:Lambda polar}. Let $W_n$ denote the extremal subsample as in \eqref{eq:W_n} and $H_n$ denote the empirical spectral measure as in \eqref{eq:emp spec}. Then for any $S$ that is a $H$-continuity Borel subset of $\bb{S}_{+}^{d-1}$, we have $H_n(S)\rightarrow H( S)$ almost surely as $n\rightarrow\infty$. 
 \end{Pro}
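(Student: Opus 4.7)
The plan is to reduce the claim to an application of the convergence \eqref{eq:LLN W_n} together with an analogous limit for the restricted count over $S$, and then form the quotient. Specifically, write
\begin{equation*}
H_n(S) = \frac{N_n(S)}{|W_n|}, \qquad N_n(S):=\sum_{i=1}^n \ind{ \mbf{X}_i/\|\mbf{X}_i\|_{(s)}\in S,\ \|\mbf{X}_i\|_{(r)}\ge (n/\ell_n)^{1/\alpha}}.
\end{equation*}
Since \eqref{eq:LLN W_n} already gives $|W_n|/\ell_n \to c_{(r)} > 0$ almost surely (so in particular $|W_n|>0$ eventually), it suffices to establish $N_n(S)/\ell_n \to c_{(r)} H(S)$ almost surely, after which the ratio yields $H(S)$.

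The first substantive step is to lift $S\subset\bb{S}_+^{d-1}$ to a conic subset of $\bb{E}_d$ that is bounded away from $\mbf{0}$: set
\begin{equation*}
A_S=\pc{ \mbf{x}\in \bb{E}_d:\ \mbf{x}/\|\mbf{x}\|_{(s)}\in S,\ \|\mbf{x}\|_{(r)}\ge 1},
\end{equation*}
so that $N_n(S)=\sum_{i=1}^n \ind{(\ell_n/n)^{1/\alpha}\mbf{X}_i\in A_S}$. I would verify that $A_S$ is a $\Lambda$-continuity set by decomposing its boundary into an angular piece (contained in the preimage of $\partial S$ under the polar map) and a radial piece (where $\|\mbf{x}\|_{(r)}=1$); applying the polar decomposition \eqref{eq:Lambda polar} shows the first piece has $\Lambda$-mass proportional to $H(\partial S)=0$ and the second piece has $\Lambda$-mass zero by homogeneity. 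A direct integration against \eqref{eq:Lambda polar} then gives $\Lambda(A_S)=c_{(r)}H(S)$, and the MRV assumption \eqref{eq:Lambda} evaluated at $u=n/\ell_n$ yields
\begin{equation*}
\frac{n}{\ell_n}\Prt{(\ell_n/n)^{1/\alpha}\mbf{X}\in A_S}\longrightarrow c_{(r)}H(S).
\end{equation*}

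To upgrade this into the required almost sure statement for the triangular array $\{\ind{(\ell_n/n)^{1/\alpha}\mbf{X}_i\in A_S}\}_{i\le n}$, I would invoke the same triangular-array strong law of large numbers used to justify \eqref{eq:LLN W_n}, as supported by \cite{hsu1947complete}: each row consists of i.i.d.\ bounded Bernoulli variables whose common mean satisfies the displayed asymptotic, so that $N_n(S)/\ell_n \to c_{(r)} H(S)$ almost surely. Combining this with \eqref{eq:LLN W_n} on an intersection of probability-one events and passing to the ratio completes the proof.

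The main obstacle I anticipate is the continuity verification for $A_S$: one must justify that $H(\partial S)=0$ controls $\Lambda(\partial A_S)$, which requires a careful argument that the topological boundary of the conic lift is captured by the preimage of $\partial S$ together with a radial sphere of $\Lambda$-measure zero. The rest is a clean combination of vague convergence, the polar decomposition, and a triangular-array law of large numbers already used earlier in the excerpt.
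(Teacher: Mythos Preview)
Your proposal is correct and follows essentially the same approach as the paper: the paper's proof is a one-line appeal to a triangular-array strong law of large numbers combined with the relations \eqref{eq:Lambda}, \eqref{eq:conv spec}, \eqref{eq:W_n exp val} and \eqref{eq:LLN W_n}, and your argument is precisely a detailed unpacking of that appeal---writing $H_n(S)=N_n(S)/|W_n|$, lifting $S$ to the $\Lambda$-continuity cone $A_S$, and applying the same triangular-array SLLN (via \cite{hsu1947complete}) used to derive \eqref{eq:LLN W_n}.
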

 \begin{proof}[\textbf{\upshape Proof:}]
  It follows from a triangular-array  Strong Law of Large Numbers with the relations
\eqref{eq:Lambda},   \eqref{eq:conv spec},     \eqref{eq:W_n exp val} and \eqref{eq:LLN W_n}. 
 \end{proof}

Now we consider applying the $k$-clustering in Definition \ref{Def:k-clust} to the random subsample ${W}_n$.  
In view of Proposition \ref{Pro:basic consist}, the distribution of $W_n$ on $\bb{S}_{+}^{d-1}$ serves as a good approximation to $H$. When $H$ is a discrete measure with finitely many atoms, clustering can be applied to $W_n$ in such a way that the cluster centers closely approximate the locations of the atoms, and the cluster proportions provide accurate estimates of the associated probabilities. Furthermore, these estimators are shown to be consistent, as demonstrated in the following Corollary.
 
 \begin{Cor}\label{Cor:disc consist}
  Suppose $\mbf{X}$  is as in Proposition \ref{Pro:basic consist}, and furthermore,  has  a   spectral measure of the following form:
\begin{equation}\label{eq:disc spec}
    H= \sum_{i=1}^k p_i \delta_{\mbf{a}_i},
\end{equation}
where
$\mbf{a}_i$'s are distinct points on $\dsph$,  and $p_i>0$, $p_1+\cdots+p_k=1$. 
Let $W_n$ denote the extremal subsample as in \eqref{eq:W_n},  and  $\pp{{A}_{k,n}=\pp{{\mbf{a}}_{1,n}^k,\ldots,{\mbf{a}}_{k,n}^k} ,\mathfrak{C}_{k,n}=\pc{{C}_{1,n}^k,\ldots, {C}_{k,n}^k}}$ form a $k$-clustering of $W_n$ as defined in Definition \ref{Def:k-clust} with respect to a dissimilarity measure $D$ defined in Definition \ref{Def:dis}. Let 
\begin{equation}\label{eq:est p}
  p_{i,n}^k=\frac{|C_{i,n}^k|}{|W_n|},
\end{equation}  
if $|W_n|>0$, and set $p_{i,n}^k$ as $0$ if $|W_n|=0$. 
Then there exist  bijections $\pi_n: \{1,\ldots,k\}\mapsto \{1,\ldots,k\}$, $n\in \bb{Z}_+$, such that 
\[
\mbf{a}_{\pi_n(i),n}^k\rightarrow \mbf{a}_i,\quad  p_{\pi_n(i),n}^k \rightarrow p_{i},\quad i\in \{1,\ldots,k\},
\]
almost surely.
\end{Cor}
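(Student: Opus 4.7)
The plan is to combine Proposition \ref{Pro:basic consist} (which ensures that the empirical spectral measure of $W_n$ approximates $H$) with the optimality of $A_{k,n}$ to force the $k$ clustering centers to align, one by one, with the $k$ atoms $\mbf{a}_i$. Fix a small $\epsilon>0$ outside the finite forbidden set $\{D(\mbf{a}_i,\mbf{a}_j): i\ne j\}$, so the $D$-balls $B_D(\mbf{a}_i,\epsilon)$ are pairwise disjoint and each is an $H$-continuity set. Proposition \ref{Pro:basic consist} then yields, almost surely, $H_n(B_D(\mbf{a}_i,\epsilon))\to p_i$ and $H_n\bigl(\dsph\setminus\bigcup_i B_D(\mbf{a}_i,\epsilon)\bigr)\to 0$. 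Evaluating the clustering objective at the multiset $A^\circ=\{\mbf{a}_1,\ldots,\mbf{a}_k\}$ and using $D\le 1$ gives
\begin{equation*}
\frac{1}{|W_n|}\sum_{\mbf{w}\in W_n} D(\mbf{w},A^\circ) \le \epsilon + H_n\Bigl(\dsph\setminus\bigcup_i B_D(\mbf{a}_i,\epsilon)\Bigr),
\end{equation*}
so by the optimality of $A_{k,n}$ and the arbitrariness of $\epsilon$, the average cost $\tfrac{1}{|W_n|}\sum_{\mbf{w}\in W_n}D(\mbf{w},A_{k,n})\to 0$ almost surely.

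Next I would show that $\min_j D(\mbf{a}_i,\mbf{a}_{j,n}^k)\to 0$ a.s.\ for every $i$. If this failed, then on an event of positive probability there would exist $i$, $\delta>0$ and a subsequence along which $D(\mbf{a}_i,\mbf{a}_{j,n}^k)>\delta$ for all $j$. By uniform continuity of $D$ on the compact $\dsph$ (equivalently by the last property in Remark \ref{Rem:D property}), there exists $\eta>0$ such that $D(\mbf{w},\mbf{a}_i)<\eta$ implies $D^{\dagger}(\mbf{w},\mbf{a}_i)<\delta/2$; by the definition of $D^{\dagger}$ in \eqref{eq:D*}, this forces $D(\mbf{w},\mbf{a}_{j,n}^k)>\delta/2$ for every $j$. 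Choosing $\eta$ to be an $H$-continuity radius, we obtain the lower bound $\tfrac{1}{|W_n|}\sum_{\mbf{w}\in W_n}D(\mbf{w},A_{k,n})\ge (\delta/2)\,H_n(B_D(\mbf{a}_i,\eta))\to p_i\delta/2>0$, contradicting the previous step.

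Because the atoms are distinct and each attracts some center, for $n$ large the map $\pi_n(i)=\mrm{argmin}_j D(\mbf{a}_i,\mbf{a}_{j,n}^k)$ is well-defined and a bijection, and $\mbf{a}_{\pi_n(i),n}^k\to\mbf{a}_i$ a.s. For the proportions, combining the pseudo-triangular inequality \eqref{eq:ps tri} with $D^{\dagger}(\mbf{a}_{\pi_n(i),n}^k,\mbf{a}_i)\to 0$ shows that any $\mbf{w}\in W_n\cap B_D(\mbf{a}_i,\epsilon)$ is eventually closer to $\mbf{a}_{\pi_n(i),n}^k$ than to any other center, hence lies in $C_{\pi_n(i),n}^k$; therefore $\liminf_n p_{\pi_n(i),n}^k\ge p_i$, and the constraint $\sum_i p_{\pi_n(i),n}^k=\sum_i p_i=1$ upgrades this to $p_{\pi_n(i),n}^k\to p_i$ a.s. The main obstacle is the second step: because $D$ is only a semimetric, I cannot invoke a standard triangle inequality and must instead route through the dual semimetric $D^{\dagger}$ and compactness of $\dsph$ to convert smallness of $D(\mbf{w},\mbf{a}_i)$ into smallness of $|D(\mbf{w},\mbf{a}_{j,n}^k)-D(\mbf{a}_i,\mbf{a}_{j,n}^k)|$ uniformly in $j$.
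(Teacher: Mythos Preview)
Your argument is correct, but it proceeds differently from the paper in two notable respects.

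For the convergence of centers, the paper simply cites \cite[Theorem 3.1]{janssen2020k} (Hausdorff convergence of $A_{k,n}$ to $\{\mbf{a}_1,\ldots,\mbf{a}_k\}$) combined with Proposition~\ref{Pro:basic consist}. You instead give a self-contained derivation: bounding the optimal cost by the cost at $A^\circ=\{\mbf{a}_1,\ldots,\mbf{a}_k\}$ to force the average cost to vanish, then using the dual semimetric $D^\dagger$ to show each atom must attract a center. This is more work but makes the corollary independent of the external reference, and the use of $D^\dagger$ to circumvent the missing triangle inequality is exactly the mechanism the paper exploits elsewhere (e.g.\ Lemmas~\ref{Lem:m<k ASW}--\ref{Lem:m=k and}).

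For the proportions, the paper sandwiches $p_{\pi_n(i),n}^k$ between $H_n(B_D(\mbf{a}_i,\epsilon))$ and $H_n\bigl(B_D(\mbf{a}_i,\epsilon)\cup\bigcap_{j\ne i}B_D(\mbf{a}_j,\epsilon)^c\bigr)$, both of which converge to $p_i$. You establish only the lower bound $\liminf_n p_{\pi_n(i),n}^k\ge p_i$ and then invoke the constraint $\sum_i p_{\pi_n(i),n}^k=1$ to upgrade to convergence. Your route is slightly slicker; the paper's two-sided sandwich is more explicit about where the remaining mass of $C_{\pi_n(i),n}^k$ can hide. Both are valid and rely on the same inclusion $W_n\cap B_D(\mbf{a}_i,\epsilon)\subset C_{\pi_n(i),n}^k$ obtained via \eqref{eq:ps tri}. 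One small point: when you define $\pi_n(i)=\mrm{argmin}_j D(\mbf{a}_i,\mbf{a}_{j,n}^k)$, you should note that ties are broken arbitrarily (measurably); the bijection conclusion is unaffected since for large $n$ the nearest center to each $\mbf{a}_i$ is unique.
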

\begin{proof}[\textbf{\upshape Proof:}]
The convergence of $\mbf{a}_{\pi_n(i),n}^k$ follows from   \cite[Theorem 3.1]{janssen2020k} (stated as convergence in Hausdorff distance between ${A}_{k,n} $ and $\{\mbf{a}_1,\ldots,\mbf{a}_k\}$)  and Proposition \ref{Pro:basic consist} above; see also the discussion in   \cite[Section 4]{janssen2020k}). 
It remains to show  the convergence of  $p_{\pi_n(i),n}^k$, $i\in \{1,\ldots,k\}$.  Set 
\begin{equation}\label{eq:r_A}
r_A=\sup \pc{ r>0:    B(\mbf{a}_i,r),\  i\in \pc{1,\ldots,k},   \text{ are  disjoint} }>0.
\end{equation}    
Fix  $\epsilon\in (0,r_A/3)$. By what has been proved and the continuity of $D^\dagger$, at almost every outcome $\omega$ of the sample space $\Omega$,   for $n$ sufficiently large,  we have the dual dissimilarity $D^\dagger(\mbf{a}_{\pi_n(i),n}^k,\mbf{a}_{i})<\epsilon $ , $i\in \{1,\ldots,k\}$.  Fix for now such an $\omega$ and let $n$ be sufficiently large (possibly depending on $\omega$).  Then  by the triangular inequality \eqref{eq:ps tri}, we have   $B_D(\mbf{a}_i,\epsilon)\subset B_D\pp{\mbf{a}_{\pi_n(i),n}^k,2\epsilon} \subset B_D(\mbf{a}_i,3\epsilon)$, $i\in \{1,\ldots,k\}$. Note that $ B_D\pp{\mbf{a}_{\pi_n(i),n}^k,2 \epsilon}\cap W_n$ are disjoint for $i\in \{1,\ldots,k\}$ due to the choice of $\epsilon$. So in view of Definition \ref{Def:k-clust},  we have  
$ B_D(\mbf{a}_i,\epsilon) \cap W_n \subset B_D\pp{\mbf{a}_{\pi_n(i),n}^k,2 \epsilon}\cap W_n  \subset  C_{\pi_n(i),n}^k\subset \pp{ B_D(\mbf{a}_i,\epsilon) \cup  \cap_{j\neq i} B_D(\mbf{a}_j,\epsilon)^c} \cap  W_n $, and hence
\[
H_n\pp{B_D(\mbf{a}_i,\epsilon)}\le p_{\pi_n(i),n}^k\le H_n\pp{B_D(\mbf{a}_i,\epsilon) \cup  \cap_{j\neq i} B_D(\mbf{a}_j, \epsilon)^c}, \quad i\in \{1,\ldots,k\}.
\]
Recall the inequality above is derived for almost every $\omega$ with $n$ sufficiently large.
The conclusion then follows from Proposition \ref{Pro:basic consist} since both sides above converges almost surely to $p_i$ as $n\rightarrow\infty$.
\end{proof}

\begin{Rem}
Comparing   \cite[Proposition 3.3]{janssen2020k} with   Proposition \ref{Pro:basic consist} and Corollary \ref{Cor:disc consist} here, we have chosen to work directly under the marginal standardization assumption in \eqref{eq:equiv tail} and not to treat the empirical marginal transformations as in \cite[Eq.\ (3.5)]{janssen2020k} for simplicity.  Nevertheless, the consistency result of the order selection below (Theorem \ref{Thm:sil cons})  can be extended to the setup of \cite{janssen2020k} based on  the results there.
 \end{Rem}

\section{Order selection via penalized silhouette \label{sec:order}}
\subsection{The method}

Following the notation and setup in Section \ref{sec:sph clust}, suppose $W$ is a multiset on $ \bb{S}_+^{d-1}$ and $1\le k\le |  W|<\infty$.   Let $\pp{A_k^*=\pc{\mbf{a}_1^* ,\ldots, \mbf{a}_k^*}, 
 \mathfrak{C}_k=\pc{C_1,\ldots,C_k}}$ be a $k$-clustering of $W$   with respect to 
a dissimilarity measure $D$ as in Definition \ref{Def:k-clust}.  
  Define for $\mbf{w}\in W$ that  
\[
  a(\mbf{w}) =D\pp{\mbf{w}, A_k^*} , \quad   b(\mbf{w})= \bigvee_{i=1}^k D\pp{\mbf{w}, A_k^*\setminus\pc{\mbf{a}_i^*}},  \]    which are respectively the dissimilarities of $\mbf{w}$  to the closest center (i.e., the center of the cluster it belongs to) and  to the second closest center.  When $k=1$. we understand $b(\mbf{w})=1$.  
  The (simplified) average silhouette width  (ASW)  \cite{hruschka2004evolutionary} of this $k$-clustering is then defined as
 \begin{equation}\label{eq:ASW}
\bar{S}= \bar{S}\left(W;  A_k^* \right) = 
\frac{1}{|W|}\sum_{\mbf{w}\in W} \frac{b(\mbf{w})-a(\mbf{w})}{ b(\mbf{w}) }= 1-  \frac{1}{|W|} \sum_{\mbf{w}\in W} \frac{a(\mbf{w})}{b(\mbf{w})}.
\end{equation}
 
A well-clustered dataset is expected to have small  $ a(\mbf{w})$ values relative to  $b(\mbf{w})$ across the majority of $\mbf{w}$ points.   
    Hence, one often uses    $\bar{S}$ to guide the selection of the number of clusters, that is, to choose $k$ which maximizes $\bar{S}$.  However, when   experimenting applying the ASW to multivariate extremes with a discrete spectral measure as described in Section \ref{sec:sph clust MEV}, the performance is   unsatisfactory: it tends to respond insensitively when the number of clusters exceeds the true $k$, i.e., the number of atoms of the spectral measure;   see, for example, the curve corresponding to $t=0$ in  Fig.\ \ref{kmeanelbow}. In particular, we observe   two behaviors of ASW that lead to the issue: 1) it tends to treat a tiny fraction of isolated points as a cluster;   2)  it sometimes splits a single cluster center into multiple centers that are close to each other.   

Motivated by these observations, we propose to introduce a penalty term that discourages small cluster size and small $D$ dissimilarity between centers. 
Recall that for a $k$-clustering $\pp{A_k^*=\pc{\mbf{a}_1^* ,\ldots, \mbf{a}_k^*},\mathfrak{C}_k=\pc{C_1,\ldots,C_k}}$ of a multiset $W$, the set $\pc{\mbf{a}_1^* ,\ldots, \mbf{a}_k^*}$ represents the cluster centers and $\pc{C_1,\ldots,C_k}$ denotes the partition of $W$. Then $\min_{i=1,\ldots,k} \pp{  |C_i|/( |W|/k) }$ is the smallest cluster size  relative to the average cluster size $ |W|/k$, and  $\min_{1\le i< j\le k}D\pp{\mbf{a}_i^*,\mbf{a}_j^*}$ is the smallest pairwise dissimilarity between centers. Note that both quantities are between $0$ and $1$ (recall that $D$ maps onto $[0,1]$), and we want to introduce a criterion that penalizes either of them being small.
There is arguably some arbitrariness in designing this penalty criterion. Through some mathematical heuristics and extensive experiments, we find that the following penalty works relatively well. 
Let $t\ge 0$ be a tuning parameter.   Set   
\begin{equation}\label{eq:pen}
P_t=P_t\left(W;   A_k^*, \mathfrak{C}_k\right)=1- \pp{  \min_{i=1,\ldots,k} \pp{   \frac{   |C_i|}{ |W|/k} } }^t\pp{ \min_{1\le i< j\le k}D\pp{\mbf{a}_i^*,\mbf{a}_j^*} }^{t},
\end{equation}
where   $\min_{1\le i< j\le k}D\pp{\mbf{a}_i^*,\mbf{a}_j^*}$ is understood as $1$ when $k=1$.
Then we form the  {penalized ASW} defined by \begin{equation}
   S_t= S_t(W; A_k^*, \mathfrak{C}_k)= \bar{S}-P_t  =\pp{  \min_{i=1,\ldots,k}  \pp{ \frac{ |C_i|}{  |W|/k }  }}^t\pp{\min_{1\le i< j\le k}D\pp{\mbf{a}_i^*,\mbf{a}_j^*) }}^{t}- \frac{1}{|W|} \sum_{\mbf{w}\in W} \frac{a(\mbf{w})}{b(\mbf{w})}.
\end{equation}
 Notice that when $t=0$, we have $P_0=0$ and hence $S_0=\bar{S}$. As $t>0$ increases, the penalty $P_t$ increases and hence $S_t$ decreases.

 We have the following consistency result regarding applying the penalized ASW for order selection for a multivariate extreme model with a discrete spectral measure. We follow the notation in Section \ref{sec:sph clust MEV}.
 \begin{Thm}\label{Thm:sil cons}
Suppose $\mbf{X}$ satisfying \eqref{eq:equiv tail} and \eqref{eq:Lambda} has a discrete spectral measure of the form $ H= \sum_{i=1}^k p_i \delta_{\mbf{a}_i}$, $k\in\bb{Z}_+$
where $\mbf{a}_i$'s are distinct points on $\dsph$,  and $p_i>0$, $p_1+\cdots+p_k=1$. Let $W_n$ denote the extremal subsample as in \eqref{eq:W_n} with $\ell_n\rightarrow\infty$, $\ell_n/n\rightarrow 0$,  and  $\pp{A_{m,n},\mathfrak{C}_{m,n}}$, $m\in \bb{Z}_+$, form an $m$-clustering of $W_n$ as defined in Definition \ref{Def:k-clust} with respect to a dissimilarity measure $D$ defined in Definition \ref{Def:dis}. 
Let $r_A$ be defined as in \eqref{eq:r_A}
and define
\begin{equation}\label{eq:pmin}
 p_{\min}=
 \min_{1\le i \le k} p_i.
 \end{equation} 
Then for any $t\in (0,t_0)$, where 
\begin{equation}\label{eq:t0}
 t_0:=\ln\pp{1-r_A p_{\min}}/\ln\pp{r_Akp_{\min}},
 \end{equation} 
we have
\[
 \liminf_n \pc{ S_t\pp{W_n; A_{k,n},\mathfrak{C}_{k,n} } - S_t\pp{W_n; A_{m,n}, \mathfrak{C}_{m,n}}}\ge \Delta_t 
\]
almost surely for any $m\neq k$, where $\Delta_t:= \pp{r_A k p_{\min}}^t- 1+r_A p_{\min} >0 $ when $t\in (0,t_0)$. 
\end{Thm}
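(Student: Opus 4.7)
The plan is to establish the theorem by separately bounding $\liminf_n S_t(W_n;A_{k,n},\mathfrak{C}_{k,n})$ from below and $\limsup_n S_t(W_n;A_{m,n},\mathfrak{C}_{m,n})$ from above for every $m\neq k$, and then invoking $\liminf_n(X_n-Y_n)\ge\liminf_n X_n-\limsup_n Y_n$. The case $m<k$ turns out to be binding and produces exactly the gap $\Delta_t$, while $m>k$ yields a strictly larger gap, so the value of $\Delta_t$ is dictated by the $m<k$ branch.

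For the lower bound at the true order $m=k$, I would apply Corollary~\ref{Cor:disc consist} to obtain relabelings $\pi_n$ with $\mbf{a}_{\pi_n(i),n}^k\to\mbf{a}_i$ and $p_{\pi_n(i),n}^k\to p_i$ almost surely. Continuity of $D$ then gives $\min_{i<j}D(\mbf{a}_{i,n}^k,\mbf{a}_{j,n}^k)\to\min_{i<j}D(\mbf{a}_i,\mbf{a}_j)\ge r_A$ (the inequality follows from $\mbf{a}_j\notin B_D(\mbf{a}_i,r_A)$ by~\eqref{eq:r_A}), while $\min_i |C_{i,n}^k|/(|W_n|/k)\to kp_{\min}$. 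For the silhouette contribution, any $\mbf{w}\in W_n$ eventually $D$-close to some $\mbf{a}_i$ is assigned to the cluster with center $\mbf{a}_{\pi_n(i),n}^k$, so $a(\mbf{w})\to 0$, while $b(\mbf{w})\ge r_A-o(1)$ because the second-nearest center sits near a different atom. Proposition~\ref{Pro:basic consist} applied to arbitrarily small $D$-neighborhoods of the atoms then yields $|W_n|^{-1}\sum_{\mbf{w}\in W_n}a(\mbf{w})/b(\mbf{w})\to 0$, giving $\liminf_n S_t(W_n;A_{k,n},\mathfrak{C}_{k,n})\ge(r_A k p_{\min})^t$.

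For $m<k$, I would exploit the disjointness of the balls $B_D(\mbf{a}_i,r_A)$ built into~\eqref{eq:r_A}: each of the $m$ centers lies in at most one such ball, so at least $k-m\ge 1$ atoms $\mbf{a}_{i_0}$ satisfy $D(\mbf{a}_{i_0},\mbf{a}_{j,n}^m)\ge r_A$ for every $j$. Using the triangle-like inequality~\eqref{eq:ps tri} together with the implication $D(\mbf{w},\mbf{a}_{i_0})\to 0\Rightarrow D^\dagger(\mbf{w},\mbf{a}_{i_0})\to 0$ from Remark~\ref{Rem:D property}, any $\mbf{w}$ in a shrinking $D$-neighborhood of $\mbf{a}_{i_0}$ has $a(\mbf{w})\ge r_A-o(1)$, and hence $a(\mbf{w})/b(\mbf{w})\ge r_A-o(1)$ since $b(\mbf{w})\le 1$. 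Proposition~\ref{Pro:basic consist} ensures that the proportion of such $\mbf{w}$ in $W_n$ tends to $p_{i_0}\ge p_{\min}$, so $\liminf_n|W_n|^{-1}\sum_{\mbf{w}\in W_n}a(\mbf{w})/b(\mbf{w})\ge r_A p_{\min}$. Combining this with the trivial upper bound $1$ on the penalty product yields $\limsup_n S_t(W_n;A_{m,n},\mathfrak{C}_{m,n})\le 1-r_A p_{\min}$, and hence the $\Delta_t$ gap.

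For $m>k$, I would argue by a subsequence dichotomy. Either $\liminf_n\min_{1\le i<j\le m}D(\mbf{a}_{i,n}^m,\mbf{a}_{j,n}^m)=0$ along a subsequence, in which case the penalty product already vanishes; or the centers are $\delta$-separated for some $\delta>0$, and the pigeonhole map $i\mapsto\arg\min_j D(\mbf{a}_i,\mbf{a}_{j,n}^m)$ from $\{1,\ldots,k\}$ has image of size at most $k<m$, producing a ``spurious'' center $\mbf{a}_{j_*,n}^m$ not nearest to any atom. Using the $\delta$-separation together with the $D$--$D^\dagger$ comparability in Remark~\ref{Rem:D property} to transport $D$-closeness of $\mbf{w}$ to some $\mbf{a}_i$ into uniform comparability of $D(\mbf{w},\cdot)$ across all $m$ centers, one shows that $C_{j_*,n}^m$ can only capture the vanishing fraction of $W_n$ not near any atom, so the min-cluster-size factor tends to zero. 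In either subcase $\limsup_n S_t(W_n;A_{m,n},\mathfrak{C}_{m,n})\le 0$, giving a gap of at least $(r_A k p_{\min})^t>\Delta_t$. Positivity $\Delta_t>0$ for $t\in(0,t_0)$ is an elementary rearrangement using $r_A k p_{\min}\le 1$ and the sign of $\ln(r_A k p_{\min})$. I expect the spurious-center analysis in the $m>k$ branch to be the most delicate step, as one must uniformly rule out, over $n$ along the $\delta$-separated subsequence, that a center outside the pigeonhole image retains non-vanishing mass under the weaker semimetric topology.
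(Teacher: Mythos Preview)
Your overall strategy matches the paper's: bound $S_t$ at $m=k$ from below and at $m\neq k$ from above, then combine via $\liminf(X_n-Y_n)\ge\liminf X_n-\limsup Y_n$, with the $m<k$ branch being binding. Your arguments for $m=k$ (via Corollary~\ref{Cor:disc consist}) and $m<k$ (pigeonhole on the disjoint balls $B_D(\mbf{a}_i,r_A)$ plus the triangle-like inequality~\eqref{eq:ps tri}) are correct and coincide in substance with the paper's Lemmas~\ref{Lem:m<k ASW} and~\ref{Lem:m=k and}, assembled in Propositions~\ref{Pro:ASW m<=k} and~\ref{Pro:p_t m>=k}.

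The $m>k$ branch, however, has a genuine gap. Your pigeonhole map $i\mapsto\arg\min_j D(\mbf{a}_i,\mbf{a}_{j,n}^m)$ produces a center $\mbf{a}_{j_*,n}^m$ that is not the \emph{nearest} center to any atom, but nothing prevents $\mbf{a}_{j_*,n}^m$ from being \emph{close} to some atom---merely marginally farther than another center. In that scenario $C_{j_*,n}^m$ can capture a nonvanishing fraction of the points near that atom, and neither the $\delta$-separation of centers nor the $D$--$D^\dagger$ comparability rules this out: $\delta$-separation of $\mbf{a}_{j_*,n}^m$ from \emph{other centers} says nothing about its distance to the \emph{atoms}, and the comparability only transports $D(\mbf{w},\cdot)\approx D(\mbf{a}_i,\cdot)$ without producing the needed positive gap $D(\mbf{a}_i,\mbf{a}_{j_*,n}^m)-D(\mbf{a}_i,\mbf{a}_{\ell(i),n}^m)$.

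What is missing is the optimality of the $m$-clustering. The paper's Lemma~\ref{Lem:m>=k center} uses optimality---comparing the achieved objective in~\eqref{eq:k-clust opt} against any $A\supset\{\mbf{a}_1,\ldots,\mbf{a}_k\}$---to show that for $m\ge k$ every atom has a center within $\epsilon'\to 0$. With this in hand, Lemma~\ref{Lem:m>k either} gives the dichotomy you want: each extra center is either bounded away from all atoms (forcing its cluster size below $k\delta$) or close to an already-covered atom (forcing two centers within $\epsilon'+2r_A^\dagger(\epsilon)+r_A^\dagger(\epsilon')$). Your subsequence/compactness framing can be made to work, but only after inserting an optimality step of this kind---e.g., passing to convergent centers along the $\delta$-separated subsequence and using optimality to force the limit set to contain all atoms, whence the surplus limit centers are genuinely bounded away from every $\mbf{a}_i$.
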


The theorem implies that as long as the tuning parameter is in an appropriate range, with probability tending to $1$ as $n\rightarrow\infty$,  the true order $m=k$  uniquely maximizes the penalized ASW.  
The proof of Theorem \ref{Thm:sil cons} can be found in Section \ref{sec:proof consist}.  In Proposition \ref{Pro:sil con rate} below, we will provide a rate of how fast the probability of false order selection decays to zero.

In practice, we suggest plotting the penalized ASW $S_t$  as a function of $m\in\{1,2,\ldots\}$ for a range of small $t$ values. {The idea is to start with $t$ near $0$, gradually increase it, and see how the penalized ASW curve responds. If some obvious spurious clusters (i.e., those with tiny size or centers that are too close together) are present, the curve tends to respond sensitively and bends at the appropriate order.  We then identify the turning point $m$  as the choice of the order $k$.}
   As a quick illustration, we follow a simulation setup of $(d=6,k=6)$ described in Section \ref{sec:sim} below to simulate a max-linear factor model (Section \ref{sec:factor models}).  See Fig.\ \ref{kmeanelbow}.  {Increasing $t$ to a very large value is not informative and is  not recommended in practice.} On the other hand, it would be desirable to develop a   data-driven method for choosing $t$, which we leave for a future work to explore.

\begin{figure}[h]
\centering
  \includegraphics[width=0.9\linewidth]{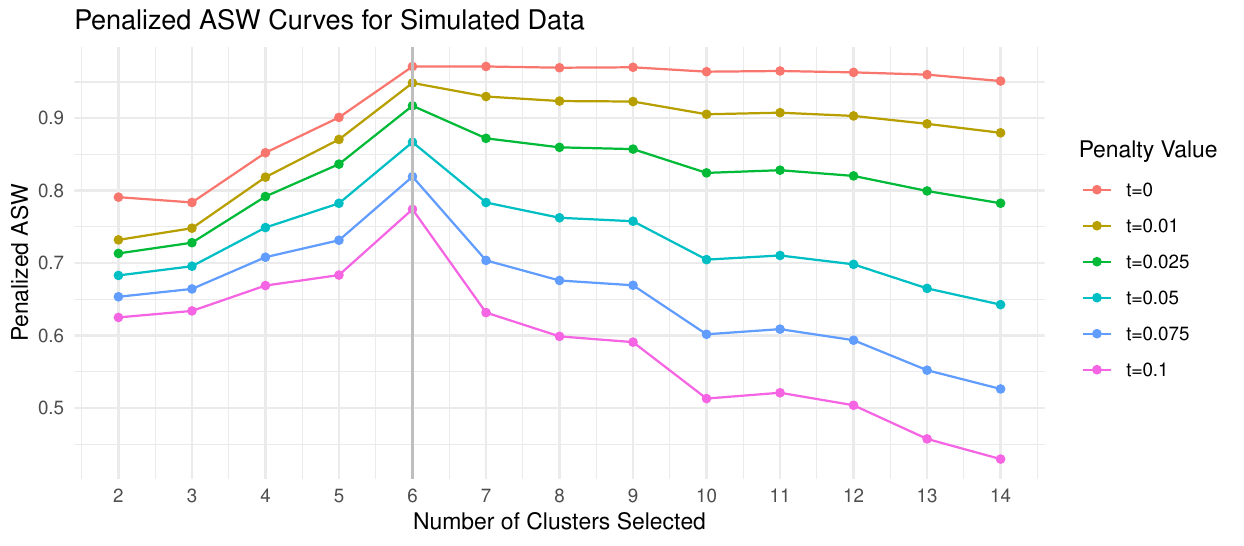}
  \caption{A simulation instance taken from  Section \ref{sec:sim} $d=6$, $k=6$ setup. Penalized Average Silhouette Width (ASW)  $S_t$ (vertical axis) for  spherical $k$-means clustering is plotted as a function of test order $m$ (horizontal axis). The different penalty values of $t$ are illustrated by different colors.  The true discrete  spectral measure in \eqref{eq:disc spec} is  given by $(\mbf{a}_1, p_1 )=((0.29, 0.21, 0.50, 0.45, 0.43, 0.49)^\top, 0.22)$, $(\mbf{a}_2, p_2 ) =((0.74, 0.00, 0.59, 0.00, 0.32, 0.00)^\top, 0.10)$, $(\mbf{a}_3, p_3 ) =((0.00, 0.27, 0.00, 0.47, 0.00, 0.84)^\top, 0.13)$, $(\mbf{a}_4, p_4 )= ((0.33, 0.70, 0.63, 0.00, 0.00, 0.00)^\top,0.14 )$, $(\mbf{a}_5, p_5 ) =((0.00, 0.00, 0.00, 0.81, 0.47, 0.34)^\top, 0.09)$, $(\mbf{a}_6, p_6 ) =((0.48, 0.49, 0.25, 0.33, 0.53, 0.29)^\top, 0.32)$.}\label{kmeanelbow}
\end{figure}

\subsection{Consistency of order selection via penalized silhouette}\label{sec:proof consist}

In this section, we prove Theorem \ref{Thm:sil cons}.

\subsubsection{Some deterministic estimates}
We first prepare some deterministic estimates regarding the $k$-clustering in Definition \ref{Def:k-clust} and ASW in \eqref{eq:ASW}.  
Based on Proposition \ref{Pro:basic consist}, if the true spectral measure $H$ consists of finitely many atoms, and we select the subset $S$ in the proposition as a union of neighborhoods around each atom, then, for sufficiently large $n$, almost all points in the extremal subsample $W_n$ in \eqref{eq:W_n} will lie near one of these atoms with respect to the dissimilarity measure $D$. This motivates us to address a scenario where we aim to cluster a multiset where the majority of points are concentrated around a finite number of centers.


 Fix throughout this section  distinct points $\pc{\mbf{a}_1,\ldots,\mbf{a}_k}=:A$  on $\dsph$,  $k\in \bb{Z}_+$,  $p_i>0$ with $p_1+\cdots+p_k=1$, and  a dissimilarity measure $D$ defined in Definition \ref{Def:dis}.
We say a multiset $W$ on $ \dsph$ with $|W|<\infty$ satisfies the concentration condition $\mathcal{A}(\epsilon,\delta )$   with $\epsilon,\delta>0$, if  $k\le |W|\wedge \delta^{-1}$ such that $\epsilon\in (0,r_A)$ with $r_A$ in \eqref{eq:r_A}, $\delta \in \pp{0, p_{\min}  }$ with $p_{\min}$ in \eqref{eq:pmin}, and 
\begin{equation}\label{eq:Ass concen bound}
 \frac{|W \cap B_D(\mbf{a}_i,\epsilon)|}{|W|}\ge  p_i-\delta  ,  \quad  i\in \pc{1,\ldots,k}.
 \end{equation}


In proving the following lemmas, we need to control the dual dissimilarity $D^{\dagger}$  of points within the same $D$-neighborhood. Therefore, we introduce the following uniform upper bound.
  Set for the aforementioned $A=\pc{\mbf{a}_1,\ldots,\mbf{a}_k}$ and $s>0$ that 
   \begin{equation} \label{rad}
   r_A^\dagger(s)=\sup \pc{ D^{\dagger}(\mbf{a}_i,\mbf{w}):  i\in \pc{1,\ldots,k},\  \mbf{w}\in B_D(\mbf{a}_i,s ) }.
 \end{equation}  
\begin{Rem}\label{Rem:r_A dagger}
Note that $r_A^\dagger(s)>0$ for any $s>0$, and $r_A^\dagger(s)\rightarrow 0$ as $s\rightarrow 0$; see Remark \ref{Rem:D property}. For sufficiently small $\epsilon$, we always have $r_A^\dagger(\epsilon)<r_A$. 
\end{Rem}


If $W$ satisfies the concentration condition $\mathcal{A}(\epsilon,\delta )$ but is partitioned into fewer clusters than $k$, then at least two of the true clusters will be merged into a single cluster. This merge can be effectively identified by the ASW based on the reduced separability, as demonstrated in the next lemma.


\begin{Lem}\label{Lem:m<k ASW}
  Suppose a multiset $W$ satisfies the condition $\mathcal{A}(\epsilon,\delta )$ and $1\le  m<k$. Let  $(A^*_m,\frak{C}_m)$ be an $m$-clustering of $W$ as defined in Definition \ref{Def:k-clust}. Then the (unpenalized) ASW $\bar{S}$ satisfies
    \[
\bar{S}=\bar{S}(W;A^*_m,\frak{C}_m)\le 1- \pp{  p_{\min} -\delta}\pp{r_A-r_A^\dagger(\epsilon)},
    \]
    where $r_A^\dagger$ is as in \eqref{rad}.
\end{Lem}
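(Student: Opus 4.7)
Recall that by definition
\[
\bar S = 1 - \frac{1}{|W|}\sum_{\mbf{w}\in W}\frac{a(\mbf{w})}{b(\mbf{w})},
\]
and since $D$ maps into $[0,1]$ we have $b(\mbf{w})\le 1$, so it suffices to show the lower bound
\[
\frac{1}{|W|}\sum_{\mbf{w}\in W} a(\mbf{w}) \ge (p_{\min}-\delta)\pp{r_A - r_A^{\dagger}(\epsilon)}.
\]
The plan is to locate one atom $\mbf{a}_{i^*}$ whose entire neighborhood $B_D(\mbf{a}_{i^*},\epsilon)\cap W$ (which carries mass at least $p_{\min}-\delta$ by $\mathcal{A}(\epsilon,\delta)$) sits at $D$-distance at least $r_A - r_A^{\dagger}(\epsilon)$ from every element of $A_m^*$.

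The main structural claim is: when $m<k$, there exists some $i^*\in\{1,\ldots,k\}$ such that $D(\mbf{a}_{i^*},\mbf{a}_j^*)\ge r_A$ for every $j\in\{1,\ldots,m\}$. I would prove this by contradiction. If no such $i^*$ exists, then for every $i$ one can pick some $f(i)\in\{1,\ldots,m\}$ with $D(\mbf{a}_i,\mbf{a}_{f(i)}^*)<r_A$. Since $m<k$, pigeonhole produces $i_1\neq i_2$ with $f(i_1)=f(i_2)=j_0$. Choose $r''$ with $\max\pc{D(\mbf{a}_{i_1},\mbf{a}_{j_0}^*),D(\mbf{a}_{i_2},\mbf{a}_{j_0}^*)}<r''<r_A$; then $\mbf{a}_{j_0}^*\in B_D(\mbf{a}_{i_1},r'')\cap B_D(\mbf{a}_{i_2},r'')$, contradicting the disjointness of the balls $B_D(\mbf{a}_i,r'')$ for $r''<r_A$ guaranteed by the definition \eqref{eq:r_A} of $r_A$.

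With such an $i^*$ in hand, fix any $\mbf{w}\in B_D(\mbf{a}_{i^*},\epsilon)\cap W$. By definition of $r_A^{\dagger}(\epsilon)$ in \eqref{rad}, $D^{\dagger}(\mbf{a}_{i^*},\mbf{w})\le r_A^{\dagger}(\epsilon)$. Applying the symmetric form of the triangular-like inequality \eqref{eq:ps tri}, namely $D(\mbf{a}_{i^*},\mbf{a}_j^*)\le D^{\dagger}(\mbf{a}_{i^*},\mbf{w})+D(\mbf{w},\mbf{a}_j^*)$, yields
\[
D(\mbf{w},\mbf{a}_j^*)\ge D(\mbf{a}_{i^*},\mbf{a}_j^*) - D^{\dagger}(\mbf{a}_{i^*},\mbf{w}) \ge r_A - r_A^{\dagger}(\epsilon)
\]
for every $j$, so $a(\mbf{w})\ge r_A - r_A^{\dagger}(\epsilon)$. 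Combining this pointwise bound with the mass bound $|B_D(\mbf{a}_{i^*},\epsilon)\cap W|/|W|\ge p_{i^*}-\delta\ge p_{\min}-\delta$ from \eqref{eq:Ass concen bound}, restricting the sum over $W$ to $B_D(\mbf{a}_{i^*},\epsilon)\cap W$, and using $b(\mbf{w})\le 1$, gives the desired lower bound and hence the claimed upper bound on $\bar S$.

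The main subtlety is the structural claim in the second paragraph: because $D$ is only a semimetric, one cannot immediately derive $D(\mbf{a}_{i_1},\mbf{a}_{i_2})<2r_A$ by the triangle inequality, but the definition of $r_A$ via disjoint open balls bypasses this issue cleanly. Everything else is a direct application of \eqref{eq:ps tri} and the hypotheses $\mathcal{A}(\epsilon,\delta)$.
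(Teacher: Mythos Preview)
Your proof is correct and follows essentially the same route as the paper's. The paper obtains the structural claim in one line---since the $k$ balls $B_D(\mbf{a}_i,r_A)$ are disjoint and $|A_m^*|=m<k$, some ball $B_D(\mbf{a}_\ell,r_A)$ contains no center---which is exactly your pigeonhole argument stated directly; the remaining steps (triangular-like inequality \eqref{eq:ps tri}, the mass bound from $\mathcal{A}(\epsilon,\delta)$, and $b(\mbf{w})\le 1$) are identical.
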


\begin{proof}[\textbf{\upshape Proof:}]
Since $m<k$ and  $B_D(\mbf{a}_i,r_A)$'s are disjoint, $i\in \pc{1,\ldots,k}$,  there exists $\ell \in \{1,\ldots,k\}$ such that $B_D(\mbf{a}_\ell ,r_A )\cap A_m^* =\emptyset$. Hence for any $\mbf{w}\in W\cap B_D(\mbf{a}_\ell,\epsilon)$, we have by the triangular inequality \eqref{eq:ps tri} that 
\[
a(\mbf{w})=D\pp{\mbf{w}, A_m^*} \ge D\pp{\mbf{a}_\ell, A_m^*} - D^{\dagger}\pp{\mbf{w},\mbf{a}_\ell}\ge r_A - r_A^\dagger(\epsilon). 
\]  
Then since $b(\mbf{w})\le 1$, we have
\begin{align*}
    \frac{1}{|W|} \sum_{\mbf{w}\in W} \frac{a(\mbf{w})}{b(\mbf{w})} \ge \frac{1}{|W|} \sum_{\mbf{w}\in W}  a(\mbf{w})    \ge    \frac{|W\cap B_D(\mbf{a}_\ell,\epsilon) |}{|W|}  \pp{r_A-r_A^\dagger(\epsilon)}  \ge  \pp{  p_{\min}  -\delta}\pp{r_A-r_A^\dagger(\epsilon)},    
\end{align*}
which implies the desired result.
\end{proof}

The next lemma states that if $W$ satisfies the concentration condition $\mathcal{A}(\epsilon,\delta )$ and is partitioned into a  number of clusters that is greater than or equal to $k$, then for each true center of $W$, there exists at least one cluster center that is close to it with respect to the dissimilarity measure $D$.
\begin{Lem}\label{Lem:m>=k center}
      Suppose a multiset $W$ satisfies the condition $\mathcal{A}(\epsilon,\delta )$. Let $\pp{A^*_m=\{\mbf{a}_1^*,\ldots,\mbf{a}_m^*\},\frak{C}_m}$ be an $m$-clustering of $W$ as defined in Definition \ref{Def:k-clust}, $m\ge k$.  Then for any $i\in \{1,\ldots,k\}$, there exists $j\in \{1,\ldots,m\}$, such that $D\left(\mbf{a}_j^*,\mbf{a}_i\right)<\epsilon'$, where
\begin{equation}\label{eq:epsilon prime}
      \epsilon'=\epsilon'(\epsilon,\delta)=  \frac{ (1-k\delta ) \epsilon +k\delta  }{p_{\min} -\delta} +  r_A^\dagger(\epsilon).
\end{equation}
  In particular, when $m=k$ and $\epsilon'<r_A$, there exists a bijection $\pi:\{1,\ldots,k\}\mapsto \{1,\ldots,k\}$, such that $D\left(\mbf{a}_{\pi(i)}^*,\mbf{a}_i\right) <\epsilon'$ for all $i\in \{1,\ldots,k\}$.
\end{Lem}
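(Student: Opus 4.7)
The plan is to argue by contradiction via a cost-comparison argument that exploits the optimality of the $m$-clustering. Fix $i\in\{1,\ldots,k\}$ and suppose, toward a contradiction, that $D(\mbf{a}_j^*, \mbf{a}_i)\ge \epsilon'$ for every $j\in\{1,\ldots,m\}$. I will derive both a lower bound and an upper bound on the total cost $\sum_{\mbf{w}\in W} D(\mbf{w}, A_m^*)$ under this hypothesis. With the formula for $\epsilon'$ in \eqref{eq:epsilon prime}, both bounds will land on exactly $|W|\bigl((1-k\delta)\epsilon + k\delta\bigr)$; since one will be weak and the other strict, this yields the contradiction.

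For the lower bound, for any $\mbf{w}\in W\cap B_D(\mbf{a}_i,\epsilon)$ I will apply the triangular-like inequality \eqref{eq:ps tri} in the form $D(\mbf{w},\mbf{a}_j^*)\ge D(\mbf{a}_j^*,\mbf{a}_i) - D^{\dagger}(\mbf{w},\mbf{a}_i)$, combined with $D^{\dagger}(\mbf{w},\mbf{a}_i)\le r_A^\dagger(\epsilon)$ from \eqref{rad}, to conclude that $D(\mbf{w}, A_m^*)\ge \epsilon' - r_A^\dagger(\epsilon)$ for every such $\mbf{w}$. The concentration bound $|W\cap B_D(\mbf{a}_i,\epsilon)|\ge (p_{\min}-\delta)|W|$ from $\mathcal{A}(\epsilon,\delta)$ together with the chosen form of $\epsilon'$ then gives $\sum_{\mbf{w}} D(\mbf{w}, A_m^*)\ge |W|\bigl((1-k\delta)\epsilon + k\delta\bigr)$, where the key cancellation is between the $(p_{\min}-\delta)$ factor and the denominator in \eqref{eq:epsilon prime}.

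For the upper bound, since $m\ge k$ the set $A'=\{\mbf{a}_1,\ldots,\mbf{a}_k\}$ padded with $m-k$ arbitrary extra points on $\dsph$ is an admissible competitor for the $m$-clustering problem. The balls $B_D(\mbf{a}_j,\epsilon)$, $j=1,\ldots,k$, are pairwise disjoint because $\epsilon<r_A$, and $\mathcal{A}(\epsilon,\delta)$ ensures that their union contains at least $(1-k\delta)|W|$ points of $W$. Each such point contributes strictly less than $\epsilon$ to the cost (the $D$-balls are open), while the at-most-$k\delta|W|$ remaining points contribute at most $1$. This gives $\sum_{\mbf{w}} D(\mbf{w}, A')<|W|\bigl((1-k\delta)\epsilon + k\delta\bigr)$, using $\epsilon\le 1$ (which follows from $\epsilon<r_A\le 1$ in the nontrivial case $k\ge 2$). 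By the optimality defining $A_m^*$ in \eqref{eq:k-clust opt}, the same strict bound holds for $A_m^*$, contradicting the lower bound.

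For the bijection claim when $m=k$ and $\epsilon'<r_A$: the main conclusion supplies, for each $i$, some $\mbf{a}_{j(i)}^*\in B_D(\mbf{a}_i,\epsilon')$; since the $\epsilon'$-balls around distinct $\mbf{a}_i$'s are pairwise disjoint (as $\epsilon'<r_A$), the assignment $i\mapsto j(i)$ from $\{1,\ldots,k\}$ into $\{1,\ldots,m\}=\{1,\ldots,k\}$ is injective, hence a bijection. The main obstacle is the algebraic calibration: the exact form of $\epsilon'$ in \eqref{eq:epsilon prime} is engineered precisely so that the two bounds match, and the whole contradiction hinges on the interplay between the weak inequality in the lower bound and the strict inequality in the upper bound (the latter coming from the openness of $B_D(\mbf{a}_i,\epsilon)$ combined with $p_{\min}-\delta>0$, which guarantees nonemptiness).
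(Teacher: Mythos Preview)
Your proposal is correct and follows essentially the same approach as the paper: a cost-comparison contradiction using the triangular-like inequality \eqref{eq:ps tri} for the lower bound, the competitor $A\subset S$ with $|S|=m$ for the upper bound, and disjointness of the $\epsilon'$-balls for the bijection. The only cosmetic difference is that you substitute the formula for $\epsilon'$ into the lower bound first and then contrast a weak versus a strict inequality, whereas the paper keeps $\epsilon'$ symbolic and derives the contradictory strict inequality $\epsilon'<\frac{(1-k\delta)\epsilon+k\delta}{p_{\min}-\delta}+r_A^\dagger(\epsilon)$ at the end.
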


\begin{proof}[\textbf{\upshape Proof:}]
    We prove the first claim by contradiction. Suppose there exists  $i\in \{1,\ldots,k\}$ such that $D\pp{\ba_j^*,\ba_i}\ge\epsilon'$ for all $j\in \{1,\ldots,m\}$.  Then for any $\bw\in W\cap B_D(\ba_i,\epsilon)$, we have by the triangular inequality \eqref{eq:ps tri} that
    \[
  D\pp{\bw,A_m^*}\ge D\pp{\ba_i, A_m^*} - D^\dagger\pp{\bw,\ba_i} \ge\epsilon'  -   r_A^\dagger(\epsilon).
    \]
Hence combining this and \eqref{eq:Ass concen bound},  
\begin{equation}\label{eq:D A_m* ave}
\frac{1}{|W|} \sum_{\bw \in W} D\pp{\bw, A_m^*}\ge\frac{1}{|W|} \sum_{\bw \in W\cap B_D(\ba_i,\epsilon)} D\pp{\bw, A_m^*} \ge  \pp{p_i-\delta}\pp{\epsilon'- r_A^\dagger(\epsilon)}\ge  \pp{ p_{\min}-\delta  } \pp{\epsilon'-r_A^\dagger(\epsilon)}.
\end{equation}
Next, suppose that a multiset $S$  on
$\dsph$ contains $A$ and  $|S|=m$,  which is only possible when $m\ge k$ as assumed.  Then we  have  $D(\bw,S)\le D(\bw,A)$. Set $U_\epsilon:=W\cap \pp{\cup_{i=1}^k B_D(\ba_i,\epsilon)}$, we have that
\begin{equation}\label{eq:D B ave}
 \frac{1}{|W|} \sum_{\bw \in W} D(\bw, S)\le  \frac{1}{|W|} \pp{ \sum_{\bw \in U_\epsilon} D(\bw, A)  +\sum_{\bw \in W\setminus U_\epsilon} 1 }  < \pp{1-k\delta}\epsilon + k\delta,
\end{equation}
where the last inequality is obtained by maximizing $|W\setminus U_\epsilon|$ with the constraint  \eqref{eq:Ass concen bound}.
Now in view of  \eqref{eq:k-clust opt}, the first expression in \eqref{eq:D A_m* ave} is   less than  or equal to the first expression in \eqref{eq:D B ave},  and hence these two inequalities imply:
\[
 \epsilon'< \pc{(1-k\delta ) \epsilon +k\delta}/\pp{p_{\min}-\delta} +  r_A^\dagger(\epsilon),
\]
which contradicts the choice of $\epsilon'$.

For the second claim, note that $B_D(\mbf{a}_i,r_A)$'s are disjoint, $i\in \pc{1,\ldots,k}$. So if $\epsilon'<r_A$, it is impossible that $D\left(\mbf{a}_j^*,\mbf{a}_i\right)<\epsilon'$ and $D\left(\mbf{a}_j^*,\mbf{a}_{i'}\right)<\epsilon'$ hold simultaneously when $i\neq i'$. The conclusion 
 then follows.
\end{proof}


As a consequence of the previous lemma, if $W$ is concentrated around $k$ centers but is partitioned into more than $k$ clusters, then either some clusters will have a small size, or at least two of the centers will be close to each other, as articulated in the next lemma.
\begin{Lem}\label{Lem:m>k either}
    Suppose a multiset $W$ satisfies the condition $\mathcal{A}(\epsilon,\delta )$. Assume  additionally that $\epsilon'$ in \eqref{eq:epsilon prime} satisfies $\epsilon'<r_A$. 
    Let  $\pp{A^*_m=\{\mbf{a}_1^*,\ldots,\mbf{a}_m^*\},\frak{C}_m=\{C_1,\ldots,C_m\}}$ be an  $m$-clustering of $W$ as defined in Definition \ref{Def:k-clust}, $m> k$.  
    Then either of the following happens: 
    \[
    \min_{i=1,\ldots,m} \frac{|C_i|}{|W|}\le k\delta \quad   \text{ or } \quad  \min_{ 1\le i<j\le m} D\pp{\ba_i^*,\ba_j^*}\le \epsilon'+ 2 r_A^\dagger(\epsilon) + r_A^{\dagger}(\epsilon'). 
    \]
\end{Lem}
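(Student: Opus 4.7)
The plan is to argue by contradiction, assuming that every cluster has relative size strictly greater than $k\delta$ and every pair of cluster centers satisfies $D(\mbf{a}_i^*,\mbf{a}_j^*) > \epsilon' + 2r_A^\dagger(\epsilon) + r_A^\dagger(\epsilon')$, then deriving a clash of inequalities.

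First, I would invoke Lemma \ref{Lem:m>=k center} to get, for each atom $\mbf{a}_i$, a center $\mbf{a}_{\pi(i)}^* \in A_m^*$ with $D(\mbf{a}_{\pi(i)}^*,\mbf{a}_i) < \epsilon'$. The hypothesis $\epsilon' < r_A$ together with the disjointness of the balls $B_D(\mbf{a}_i,r_A)$ forces the assignment $\pi:\{1,\ldots,k\}\to\{1,\ldots,m\}$ to be an injection (no single $\mbf{a}_j^*$ can be within $\epsilon'$ of two different $\mbf{a}_i$'s). Since $m>k$, I can fix an ``extra'' center $\mbf{a}_{j_0}^*$ with $j_0 \notin \pi(\{1,\ldots,k\})$.

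Next, I would control the location of the points in $C_{j_0}$. From $\mathcal{A}(\epsilon,\delta)$ and the disjointness of the $B_D(\mbf{a}_i,\epsilon)$, the total relative mass outside $\cup_{i=1}^k B_D(\mbf{a}_i,\epsilon)$ is at most $k\delta$. Under the contradiction hypothesis $|C_{j_0}|/|W|>k\delta$, the cluster $C_{j_0}$ must therefore contain some $\mbf{w}\in B_D(\mbf{a}_i,\epsilon)$ for some $i\in\{1,\ldots,k\}$. Writing $j^*=\pi(i)$, the key chain of bounds using \eqref{eq:ps tri} is then
\begin{align*}
D(\mbf{w},\mbf{a}_{j_0}^*) &\le D(\mbf{w},\mbf{a}_{j^*}^*) \le D(\mbf{a}_i,\mbf{a}_{j^*}^*)+D^\dagger(\mbf{a}_i,\mbf{w}) < \epsilon' + r_A^\dagger(\epsilon),\\
D(\mbf{a}_{j_0}^*,\mbf{a}_i) &\le D(\mbf{a}_{j_0}^*,\mbf{w})+D^\dagger(\mbf{w},\mbf{a}_i) < \epsilon' + 2r_A^\dagger(\epsilon),\\
D(\mbf{a}_{j_0}^*,\mbf{a}_{j^*}^*) &\le D(\mbf{a}_{j_0}^*,\mbf{a}_i)+D^\dagger(\mbf{a}_i,\mbf{a}_{j^*}^*) < \epsilon' + 2r_A^\dagger(\epsilon) + r_A^\dagger(\epsilon'),
\end{align*}
where in the first line I use $\mbf{a}_{j^*}^*\in B_D(\mbf{a}_i,\epsilon')$ to get $D^\dagger(\mbf{a}_i,\mbf{a}_{j^*}^*)\le r_A^\dagger(\epsilon')$, and $\mbf{w}\in B_D(\mbf{a}_i,\epsilon)$ to get $D^\dagger(\mbf{a}_i,\mbf{w})\le r_A^\dagger(\epsilon)$, per the definition \eqref{rad}. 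Since $j_0\neq j^*$, this contradicts the assumed lower bound on pairwise center dissimilarities, completing the proof.

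The only subtle point is bookkeeping: one must apply \eqref{eq:ps tri} in exactly the right order so that the $\epsilon$-scale (through $\mbf{w}$) contributes twice while $\epsilon'$ (through $\mbf{a}_{j^*}^*$) contributes once, matching the stated bound $\epsilon'+2r_A^\dagger(\epsilon)+r_A^\dagger(\epsilon')$. A naive routing through $\mbf{w}$ and then back through $\mbf{a}_i$ via $D^\dagger(\mbf{w},\mbf{a}_{j^*}^*)$ would give the weaker $\epsilon + r_A^\dagger(\epsilon)+2r_A^\dagger(\epsilon')$, so using the ``short leg'' $D(\mbf{a}_i,\mbf{a}_{j^*}^*)<\epsilon'$ as the final link rather than as a $D^\dagger$-term is what keeps the asymmetric coefficients correct.
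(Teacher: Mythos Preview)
Your proof is correct and follows essentially the same approach as the paper: invoke Lemma~\ref{Lem:m>=k center} to pin $k$ of the centers near the atoms, then use the triangular-like inequality \eqref{eq:ps tri} repeatedly to bound the distance from any ``extra'' center to one of these $k$ centers. The only organizational difference is that the paper argues directly via a case split on whether some extra center satisfies $D(\mbf{a}_j^*,A)>\epsilon'+2r_A^\dagger(\epsilon)$, whereas you run the same inequalities as a single contradiction argument by first extracting a witness point $\mbf{w}\in C_{j_0}\cap B_D(\mbf{a}_i,\epsilon)$ from the assumed lower bound on $|C_{j_0}|$; the resulting chain of estimates is identical.
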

\begin{proof}[\textbf{\upshape Proof:}]
Since  $B_D\pp{\ba_i,\epsilon'}$, $i\in \pc{1,\ldots,k}$, are disjoint (because $\epsilon'<r_A$),   by  Lemma   \ref{Lem:m>=k center},   we can, without loss of generality,  assume that $\ba_i^* \in B_D\pp{\ba_i,\epsilon'}$, $i\in \pc{1,\ldots,k}$ .
We now divide into two cases as follows.

\noindent
\emph{Case 1:} there exists one $j\in \{k+1,\ldots,m\}$ (fixed below in the discussion of this case) which satisfies  $D\pp{\ba_j^*,A}>\epsilon'+ 2r_A^\dagger(\epsilon)$. 
Then for any $i\in \{1,\ldots,k\}$ and any $\bw\in W\cap B_D\pp{\ba_i,\epsilon}$, we have by the triangular inequality \eqref{eq:ps tri} that 
$D\pp{\bw,\ba_i^*}\le D\pp{\ba_i,\ba_i^*}+D^\dagger\pp{\ba_i,\bw}\le \epsilon'+r_A^{\dagger}(\epsilon)$,
and hence
\[
D\pp{\bw,\ba_j^*}\ge D\pp{\ba_j^*,\ba_i}-D^\dagger\pp{\bw,\ba_i}> \epsilon' + 2 r_A^\dagger(\epsilon) - r_A^\dagger(\epsilon)\ge D\pp{\bw,\ba_i^*}. 
\]
This in view of  Definition \ref{Def:k-clust} implies that  $W\cap B_D\pp{\ba_i,\epsilon}\subset W\cap  C_j^c$ for all $i\in \pc{1,\ldots,k}$. Therefore, we have by \eqref{eq:Ass concen bound} that 
\[
\min_{i=1,\ldots,m} \frac{|C_i|}{|W|}\le  \frac{|C_j|}{|W|} \le   \frac{|W\cap  \bigcap_{i=1,\ldots,k} B_D\pp{\ba_i,\epsilon}^c |}{|W|}\le k\delta.
\]

\noindent \emph{Case 2}:  for any $j\in \{k+1,\ldots,m\}$, we have $D\pp{\ba_j^*, \ba_i}\le \epsilon' + 2r_A^{\dagger}(\epsilon)$ for some $i\in \{1,\ldots,k\}$. Then for any such pair of $j$ and $i$, we have
\[
 D\pp{\ba_i^*,\ba_j^*} \le 
  D\pp{\ba_i,\ba_j^*}+ D^{\dagger}(\ba_i,\ba_i^*) \le \epsilon'+2r_A^\dagger(\epsilon) + r_A^{\dagger}(\epsilon').
\]
\end{proof}

The next lemma says, if $W$ is concentrated around $k$ centers and is partitioned into exactly $k$ clusters, the (unpenalized) ASW $\bar{S}$ will favor this clustering by providing a high score that is close to $1$. Additionally, none of the clusters will have a size too small, and no two cluster centers will be too close to each other.

\begin{Lem}\label{Lem:m=k and}  Suppose a multiset $W$ satisfies the condition $\mathcal{A}(\epsilon,\delta )$. Let $\pp{A^*_k=\{\mbf{a}_1^*,\ldots,\mbf{a}_k^*\},\frak{C}_k=\{C_1,\ldots,C_k\}}$ be a  $k$-clustering of $W$ as defined in Definition \ref{Def:k-clust}. Suppose in addition   \begin{equation}\label{eq:r_A lb} r_A> \epsilon'+ 2 r_A^\dagger(\epsilon) + r_A^{\dagger}(\epsilon')
   \end{equation}
   with $\epsilon'$ in \eqref{eq:epsilon prime}. Then the (unpenalized) ASW $\bar{S}$ satisfies 
  \[
  \bar{S}=\bar{S}(W;A^*_k,\frak{C}_k)\ge  1- \pp{1-k\delta } \frac{  \epsilon' +  r_A^\dagger(\epsilon) }{r_A - r_A^\dagger(\epsilon)   - r_A^\dagger(\epsilon')}   -      k\delta.
  \]
    In addition,  with the same permutation $\pi:\{1,\ldots,k\}\mapsto\{1,\ldots,k\}$ found in Lemma \ref{Lem:m>=k center}, we have
    \[\frac{|C_{\pi(i)}|}{|W|}\ge p_i-\delta \text{ for each }i\quad   \text{ and } \quad  \min_{ 1\le i<j\le k} D\pp{\ba_i^*,\ba_j^*}\ge  r_A- 2 r_A^{\dagger}(\epsilon'),
    \]
    where when $k=1$, $\min_{ 1\le i<j\le k} D\pp{\ba_i^*,\ba_j^*}$ is understood as 1, and  the inequalities still hold.
\end{Lem}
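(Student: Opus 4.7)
The approach is to invoke Lemma \ref{Lem:m>=k center} with $m=k$ to obtain a bijection $\pi:\{1,\ldots,k\}\to\{1,\ldots,k\}$ with $D(\mbf{a}_{\pi(i)}^*,\mbf{a}_i)<\epsilon'$, and then exploit the separation hypothesis \eqref{eq:r_A lb} to show that every point of $W\cap B_D(\mbf{a}_i,\epsilon)$ is unambiguously assigned to the cluster whose center is $\mbf{a}_{\pi(i)}^*$. After relabelling I may assume $\pi=\mathrm{id}$, so that $\mbf{a}_i^*\in B_D(\mbf{a}_i,\epsilon')$ for each $i$.

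First I would apply the triangular-like inequality \eqref{eq:ps tri} twice for each $\mbf{w}\in W\cap B_D(\mbf{a}_i,\epsilon)$. In one direction it gives $D(\mbf{w},\mbf{a}_i^*)\le D(\mbf{a}_i,\mbf{a}_i^*)+D^\dagger(\mbf{a}_i,\mbf{w})\le \epsilon'+r_A^\dagger(\epsilon)$. In the other direction, using symmetry of $D$ and $D^\dagger$, it yields $D(\mbf{a}_i,\mbf{a}_l^*)\le D(\mbf{w},\mbf{a}_l^*)+D^\dagger(\mbf{w},\mbf{a}_i)$, which combined with $D(\mbf{a}_i,\mbf{a}_l^*)\ge D(\mbf{a}_i,\mbf{a}_l)-D^\dagger(\mbf{a}_l,\mbf{a}_l^*)\ge r_A-r_A^\dagger(\epsilon')$ produces $D(\mbf{w},\mbf{a}_l^*)\ge r_A-r_A^\dagger(\epsilon)-r_A^\dagger(\epsilon')$ for $l\neq i$. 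Under \eqref{eq:r_A lb} the upper bound on $D(\mbf{w},\mbf{a}_i^*)$ is strictly smaller than this lower bound, so $\mbf{a}_i^*$ is the unique nearest center and hence $\mbf{w}\in C_i$.

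The second and third claims then follow quickly. For the cluster sizes, $|C_i|\ge |W\cap B_D(\mbf{a}_i,\epsilon)|\ge (p_i-\delta)|W|$ by \eqref{eq:Ass concen bound}. For the center separation, two applications of \eqref{eq:ps tri} combined with symmetry give $D(\mbf{a}_i,\mbf{a}_j)\le D(\mbf{a}_i^*,\mbf{a}_j^*)+2r_A^\dagger(\epsilon')$, whence $D(\mbf{a}_i^*,\mbf{a}_j^*)\ge r_A-2r_A^\dagger(\epsilon')$. For the ASW bound, I would split the sum defining $\bar{S}=1-|W|^{-1}\sum_{\mbf{w}\in W}a(\mbf{w})/b(\mbf{w})$ along $U_\epsilon:=W\cap\bigcup_i B_D(\mbf{a}_i,\epsilon)$ and its complement: on $U_\epsilon$ the estimates above yield $a(\mbf{w})/b(\mbf{w})\le (\epsilon'+r_A^\dagger(\epsilon))/(r_A-r_A^\dagger(\epsilon)-r_A^\dagger(\epsilon'))$, while on the complement I bound the ratio crudely by $1$. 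Since $|U_\epsilon|/|W|\ge 1-k\delta$ by \eqref{eq:Ass concen bound}, a weighted-average computation gives exactly the stated lower bound on $\bar{S}$.

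The main obstacle is selecting which form of the triangular-like inequality \eqref{eq:ps tri} to use at each step, so as to respect the asymmetric roles played by $D$ and $D^\dagger$: lower-bounding a $D$-distance typically costs a $D^\dagger$-term, and it is exactly this mismatch that fixes the particular combinations $\epsilon'+r_A^\dagger(\epsilon)$ in the numerator, $r_A^\dagger(\epsilon)+r_A^\dagger(\epsilon')$ in the denominator, and $2r_A^\dagger(\epsilon')$ in the center-separation estimate. Once this bookkeeping is executed consistently under hypothesis \eqref{eq:r_A lb}, everything else is routine.
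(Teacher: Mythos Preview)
Your proposal is correct and follows essentially the same route as the paper's proof: invoke Lemma \ref{Lem:m>=k center} to get the bijection, use \eqref{eq:ps tri} to derive the two key inequalities $D(\mbf{w},\mbf{a}_{\pi(i)}^*)<\epsilon'+r_A^\dagger(\epsilon)$ and $D(\mbf{w},\mbf{a}_{\pi(j)}^*)\ge r_A-r_A^\dagger(\epsilon)-r_A^\dagger(\epsilon')$ for $\mbf{w}\in B_D(\mbf{a}_i,\epsilon)$, split the ASW sum over $U_\epsilon$ and its complement, and read off the cluster-size and center-separation bounds. The only cosmetic difference is the order of presentation (you establish the unambiguous cluster assignment before the ASW bound, whereas the paper does the ASW bound first).
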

\begin{proof}[\textbf{\upshape Proof:}]
  Since $r_A>\epsilon'$, by Lemma \ref{Lem:m>=k center}, there exists a permutation $\pi: \{1,\ldots,k\}\mapsto\{1,\ldots,k\}$, such that $D\pp{\ba_i,\ba_{\pi(i)}^*}< \epsilon'$, $i\in\pc{1,\ldots,k}$.  Then for each $i$ and any $\mbf{w}\in B_D(\mbf{a}_i,\epsilon)$, we have by the triangular inequality \eqref{eq:ps tri} that
   \begin{equation}\label{eq:D ineq 1}
D\pp{\mbf{w},\ba_{\pi(i)}^*}\le D\pp{\mbf{a}_i,\mbf{a}_{\pi(i)}^*}+D^{\dagger}\pp{\mbf{w},\mbf{a}_i} <\epsilon' +  r_A^\dagger(\epsilon),
   \end{equation}
and for $j\neq i$  that 
\begin{equation}\label{eq:D ineq 2}
D\left(\mbf{w},\mbf{a}_{\pi(j)}^*\right)\ge D\left( \mbf{a}_i,\mbf{a}_j \right) -D^{\dagger}(\ba_j,\ba_{\pi(j)}^*) -D^{\dagger}(\bw,\ba_i)\ge r_A - r_A^\dagger(\epsilon')-r_A^\dagger(\epsilon),
\end{equation}
where if $k=1$, the left-hand side $D\left(\mbf{w},\mbf{a}_{\pi(j)}^*\right)$ in \eqref{eq:D ineq 2} is understood as 1, and the inequality still holds.
Writing as before $U_\epsilon=\bigcup_{1 \le i\le k} B_D(\mbf{a}_i,\epsilon) \cap W $. In view of \eqref{eq:Ass concen bound} and the inequalities above, we have
$$
\bar{S}= \frac{1}{|W|} \pp{ \sum_{\mbf{w}\in U_\epsilon } +\sum_{\mbf{w}\in W \setminus U_\epsilon } } 
 \pp{1-\frac{a(\mbf{w})}{b(\bw)} }\ge  \pp{1- \frac{  \epsilon' +  r_A^\dagger(\epsilon) }{r_A - r_A^\dagger(\epsilon)   - r_A^\dagger(\epsilon')} } \pp{1-k\delta }  +0,
$$
which implies the  first claim.

For the second claim, in view of Definition \ref{Def:k-clust}, \eqref{eq:r_A lb},  \eqref{eq:D ineq 1}, \eqref{eq:D ineq 2}, we have $W\cap B_D\pp{\ba_i,\epsilon}\subset  C_{\pi(i)}$, $i\in \pc{1,\ldots,k}$. Hence by \eqref{eq:Ass concen bound},
\[\frac{|C_{\pi(i)}|}{|W|}\ge \frac{|W\cap B_D\pp{\ba_i,\epsilon}|}{|W|}\ge  p_{i} - \delta. 
\]
Furthermore, for any $1\le i<j\le k$ and $k>1$,
\[D\pp{\ba_{\pi(i)}^*,\ba_{\pi(j)}^*}\ge D\pp{\ba_i,\ba_j}- D^\dagger\left(\ba_i,\ba_{\pi(i)}^*\right)-D^\dagger\left(\ba_j,\ba_{\pi(j)}^*\right)\ge r_A-2r_A^{\dagger}(\epsilon').
\]
\end{proof}

\subsubsection{Proof of Theorem \ref{Thm:sil cons}}

We first state a result regarding the ASW $\bar{S}$ in \eqref{eq:ASW} when the number of clusters is less than or equal to $k$, the true order of the discrete spectral measure \eqref{eq:disc spec}.   
\begin{Pro}\label{Pro:ASW m<=k}
Suppose $\mbf{X}$ satisfying \eqref{eq:equiv tail}  and \eqref{eq:Lambda} has a discrete spectral measure of the form $ H= \sum_{i=1}^k p_i \delta_{\mbf{a}_i}$,
where
$\mbf{a}_i$'s are distinct points on $\dsph$,  and $p_i>0$, $p_1+\cdots+p_k=1$. Let $W_n$ denote the extremal subsample as in \eqref{eq:W_n}, and  $\pp{A_{m,n},\mathfrak{C}_{m,n}}$ form an $m$-clustering of $W_n$ as defined in Definition \ref{Def:k-clust} with respect to a dissimilarity measure $D$ defined in Definition \ref{Def:dis}. 
If $m<k$,  then almost surely,
    \[
  \limsup_n \bar{S}\pp{W_n; A_{m,n}, \frak{C}_{m,n}  }\le  1- r_A p_{\min},
    \]
where $r_A$ is as in \eqref{eq:r_A} and $p_{\min}$ is as in \eqref{eq:pmin}. If $m=k$, then almost surely,
\[
\lim_n \bar{S}\pp{W_n; A_{k,n}, \frak{C}_{k,n}  } = 1.
\]
\end{Pro}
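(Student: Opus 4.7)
The plan is to reduce both assertions to the deterministic bounds of Lemmas \ref{Lem:m<k ASW} and \ref{Lem:m=k and}. The bridge is Proposition \ref{Pro:basic consist}, which implies that the extremal multiset $W_n$ asymptotically satisfies the concentration condition $\mathcal{A}(\epsilon,\delta)$ almost surely for any sufficiently small parameters.

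First I would pin down a countable sequence $(\epsilon_\ell, \delta_\ell) \to (0,0)$ with $\epsilon_\ell \in (0, r_A)$, $\delta_\ell \in (0, p_{\min})$, and $\epsilon_\ell \notin \{D(\mbf{a}_i, \mbf{a}_j) : 1 \le i < j \le k\}$, so that each $B_D(\mbf{a}_i, \epsilon_\ell)$ is an $H$-continuity Borel set (no atom of $H$ lies on its boundary) with $H(B_D(\mbf{a}_i, \epsilon_\ell)) = p_i$. Proposition \ref{Pro:basic consist} then yields $H_n(B_D(\mbf{a}_i, \epsilon_\ell)) \to p_i$ almost surely for each $(i,\ell)$; combined with $|W_n| \to \infty$ from \eqref{eq:LLN W_n}, this places us on a single almost sure event $\Omega_0$ on which, for every $\ell$, the multiset $W_n$ satisfies $\mathcal{A}(\epsilon_\ell, \delta_\ell)$ for all $n$ sufficiently large (the subsidiary constraint $k \le \delta_\ell^{-1}$ is automatic since $\delta_\ell < p_{\min} \le 1/k$).

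For $m < k$, Lemma \ref{Lem:m<k ASW} applied on $\Omega_0$ gives
\[
\bar{S}(W_n; A_{m,n}, \mathfrak{C}_{m,n}) \le 1 - (p_{\min} - \delta_\ell)\bigl(r_A - r_A^\dagger(\epsilon_\ell)\bigr)
\]
eventually. Taking $\limsup_n$ and then $\ell \to \infty$, using $r_A^\dagger(\epsilon_\ell) \to 0$ by Remark \ref{Rem:r_A dagger}, yields the first claim. For $m = k$, the definition \eqref{eq:epsilon prime} of $\epsilon'$ together with Remark \ref{Rem:r_A dagger} shows that $\epsilon'_\ell \to 0$, so the supplementary hypothesis \eqref{eq:r_A lb} holds for all sufficiently large $\ell$. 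Lemma \ref{Lem:m=k and} then delivers a lower bound on $\bar{S}(W_n; A_{k,n}, \mathfrak{C}_{k,n})$ that tends to $1$ as $\ell \to \infty$, which combined with the trivial upper bound $\bar{S} \le 1$ gives the second claim.

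There is no really hard step; the main issue is bookkeeping. In particular, one must be careful that the $H$-continuity of the neighborhoods (which lets us invoke Proposition \ref{Pro:basic consist}) holds simultaneously for all relevant parameters, which is handled by excluding the finite set of bad $\epsilon$ at which some $\mbf{a}_j$ sits on the boundary, and that the almost sure convergence can be arranged uniformly across $(i,\ell)$, which is handled by fixing a single countable sequence $(\epsilon_\ell,\delta_\ell)$ from the outset and taking a countable intersection of null sets.
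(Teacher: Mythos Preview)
Your proposal is correct and follows essentially the same route as the paper: verify the concentration condition $\mathcal{A}(\epsilon,\delta)$ almost surely via Proposition \ref{Pro:basic consist}, apply Lemmas \ref{Lem:m<k ASW} and \ref{Lem:m=k and}, and then send $(\epsilon,\delta)\to(0,0)$. Your bookkeeping is slightly more explicit (fixing a countable sequence and a single null set); note, however, that the exclusion of ``bad'' $\epsilon$ is in fact unnecessary, since $\epsilon<r_A$ already forces $D(\mbf{a}_i,\mbf{a}_j)\ge r_A>\epsilon$ for $i\neq j$, so no atom of $H$ can lie on $\partial B_D(\mbf{a}_i,\epsilon)$.
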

\begin{proof}[\textbf{\upshape Proof:}]
Note that $\epsilon'\to 0$ as $\epsilon,\delta\rightarrow 0$, we may choose them small enough such that \eqref{eq:r_A lb} is satisfied. Define the event
\begin{equation}\label{eq:E_eps del}
E_n(\epsilon,\delta) =\left\{|W_n\cap B_D(\ba_i,\epsilon)|\ge |W_n|(p_i-\delta),\ i\in \pc{1,\ldots,k}\right\}.
\end{equation}
By Proposition \ref{Pro:basic consist} with $S=B_D(\ba_i,\epsilon)$ and the choice $\epsilon<r_A$, we have each $H_n(S)=|W_n\cap B_D(\ba_i,\epsilon)|/|W_n|$ converges almost surely to $H(S)=p_i$, $i\in \{1,\ldots,k\}$.
Hence, with probability $1$,  the event $E_n(\epsilon,\delta)$ happens eventually as $n\rightarrow\infty$, namely, 
$\Pr\pp{ \liminf_n \ind{E_n(\epsilon,\delta)}=1 }=1$.
Since $W_n$ satisfies the condition $\mathcal{A}(\epsilon,\delta )$ on $E_n(\epsilon,\delta)$, by Lemmas \ref{Lem:m<k ASW} and \ref{Lem:m=k and}, for almost every outcome $\omega$ in the sample space $\Omega$,    when  $n$ is sufficiently large, we have  when $m<k$ that 
\[
\bar{S}(W_n;A_{m,n},\frak{C}_{m,n}) \ind{E_n(\epsilon,\delta)}\le \pc{ 1- \pp{ p_{\min} -\delta}\pp{r_A-r_A^\dagger(\epsilon)}}\ind{E_n(\epsilon,\delta)}
\]
and
\[
\bar{S}(W_n;A_{k,n},\frak{C}_{k,n}) \ind{E_n(\epsilon,\delta)}\ge  
\pc{1- \pp{1-k\delta } \frac{  \epsilon' +  r_A^\dagger(\epsilon) }{r_A - r_A^\dagger(\epsilon)   - r_A^\dagger(\epsilon')}   -      k\delta}\ind{E_n(\epsilon,\delta)}.
\]
The desired results follow  if one takes $\limsup_n$ and $\liminf_n$ respectively in the two inequalities above, and then lets $\delta,\epsilon\rightarrow 0$ ({see also Remark \ref{Rem:r_A dagger}}).
\end{proof}

 Next, we state a result on the penalty $P_t$ in \eqref{eq:pen}  when the number of clusters exceeds or equals $k$.
\begin{Pro}\label{Pro:p_t m>=k}
Suppose $\mbf{X}$ satisfying \eqref{eq:equiv tail}  and \eqref{eq:Lambda} has a discrete spectral measure of the form $ H= \sum_{i=1}^k p_i \delta_{\mbf{a}_i}$,
where
$\mbf{a}_i$'s are distinct points on $\dsph$, and $p_i>0$, $p_1+\cdots+p_k=1$. Let $W_n$ denote the extremal subsample as in \eqref{eq:W_n},  and  $\pp{A_{m,n},\mathfrak{C}_{m,n}}$ form an $m$-clustering of $W_n$ as defined in Definition \ref{Def:k-clust} with respect to a dissimilarity measure $D$ defined in Definition \ref{Def:dis}. 
Suppose $t>0$.
If $m>k$, we have  almost surely
\[
\lim_n P_t(W_n;   A_{m,n}, \mathfrak{C}_{m,n})=1.
\]
If $m=k$,  we have almost surely  
\[
\limsup_n P_t(W_n;   A_{k,n}, \mathfrak{C}_{k,n})\le 1- \pp{  r_A  k p_{\min} }^t,
\]
where $r_A$ is as in \eqref{eq:r_A} and $p_{\min}$ is as in \eqref{eq:pmin}.
\end{Pro}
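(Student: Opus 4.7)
The plan is to mirror the structure of Proposition \ref{Pro:ASW m<=k}: reduce the random problem to the deterministic concentration setup on a high-probability event, apply the appropriate deterministic lemma from Section \ref{sec:proof consist}, and then let the concentration parameters shrink to zero. First, I would pick $\epsilon,\delta>0$ small enough that $\epsilon'$ in \eqref{eq:epsilon prime} satisfies $\epsilon'<r_A$ and that the inequality \eqref{eq:r_A lb} holds, and then observe that on the event $E_n(\epsilon,\delta)$ from \eqref{eq:E_eps del} the extremal multiset $W_n$ satisfies the concentration condition $\mathcal{A}(\epsilon,\delta)$. By exactly the Borel--Cantelli argument already used in the proof of Proposition \ref{Pro:ASW m<=k}, $E_n(\epsilon,\delta)$ holds eventually with probability one.

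For the case $m>k$, I would invoke Lemma \ref{Lem:m>k either}. Writing $\alpha_n=\min_{i=1,\ldots,m}|C_{i,n}|/|W_n|$ and $\beta_n=\min_{1\le i<j\le m}D(\mbf{a}_{i,n}^*,\mbf{a}_{j,n}^*)$, the lemma forces either $\alpha_n\le k\delta$ or $\beta_n\le \epsilon'+2r_A^\dagger(\epsilon)+r_A^\dagger(\epsilon')$. Since also $m\alpha_n\le 1$ and $\beta_n\le 1$ trivially, the product appearing in the penalty obeys
\[
(m\alpha_n)^t\,\beta_n^t\;\le\;\max\!\left\{(mk\delta)^t,\ \bigl(\epsilon'+2r_A^\dagger(\epsilon)+r_A^\dagger(\epsilon')\bigr)^t\right\}
\]
on $E_n(\epsilon,\delta)$. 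Taking $\liminf_n$ on the probability-one event that $E_n(\epsilon,\delta)$ eventually holds, and then letting $\epsilon,\delta\to 0$ along a countable sequence (using that $r_A^\dagger(s)\to 0$ as $s\to 0$ by Remark \ref{Rem:r_A dagger}, which also yields $\epsilon'\to 0$), drives the right-hand side to $0$. Combined with the trivial upper bound $P_t\le 1$, this forces $\lim_n P_t=1$ almost surely.

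For the case $m=k$, I would instead invoke Lemma \ref{Lem:m=k and}, which produces a permutation $\pi$ under which $|C_{\pi(i),n}|/|W_n|\ge p_i-\delta$ for each $i$ and $\min_{1\le i<j\le k}D(\mbf{a}_{i,n}^*,\mbf{a}_{j,n}^*)\ge r_A-2r_A^\dagger(\epsilon')$. Because the minimum over clusters is invariant under relabeling by $\pi$, this gives on $E_n(\epsilon,\delta)$ the upper bound
\[
P_t(W_n;A_{k,n},\mathfrak{C}_{k,n})\;\le\;1-\bigl[k(p_{\min}-\delta)\bigr]^t\bigl[r_A-2r_A^\dagger(\epsilon')\bigr]^t,
\]
and taking $\limsup_n$ followed by $\epsilon,\delta\to 0$ delivers the stated bound $1-(r_Akp_{\min})^t$. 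I do not anticipate a real obstacle: the deterministic lemmas have already done the clustering-geometric work, and the only minor bookkeeping is to intersect the probability-one events over a countable sequence of shrinking $(\epsilon,\delta)$ before passing to the limit, which is standard.
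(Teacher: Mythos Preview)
Your proposal is correct and follows essentially the same route as the paper: restrict to the event $E_n(\epsilon,\delta)$, apply Lemma~\ref{Lem:m>k either} for $m>k$ and Lemma~\ref{Lem:m=k and} for $m=k$ to obtain deterministic bounds on $P_t$, then let $\epsilon,\delta\to 0$. The only cosmetic difference is that the paper writes the bound in the $m>k$ case as $P_t\ge 1-(k^2\delta)^t\vee(\epsilon'+2r_A^\dagger(\epsilon)+r_A^\dagger(\epsilon'))^t$ whereas you get $(mk\delta)^t$ in the first term; both vanish as $\delta\to 0$, so this is immaterial. (A minor quibble: the eventual occurrence of $E_n(\epsilon,\delta)$ comes from the almost sure convergence in Proposition~\ref{Pro:basic consist}, not Borel--Cantelli per se, but you correctly point to the argument in Proposition~\ref{Pro:ASW m<=k}.)
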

\begin{proof}[\textbf{\upshape Proof:}]
 The argument is similar to that of Proposition \ref{Pro:ASW m<=k}. In particular, under the restriction to the event $E_n(\epsilon,\delta)$ in \eqref{eq:E_eps del},  we have  by Lemma  \ref{Lem:m>k either} that for $m>k$    
\[
P_t(W_n;   A_{m,n}, \mathfrak{C}_{m,n}) \ge 1- \pp{ k^2\delta}^t \vee \pp{\epsilon'+2r_A^{\dagger}(\epsilon)+r_A^{\dagger}(\epsilon')}^t,
\]
and by Lemma \ref{Lem:m=k and} that
\[
P_t(W_n;   A_{k,n}, \mathfrak{C}_{k,n})\le 1-[k(p_{\min}-\delta) (r_A-2r_A^\dagger(\epsilon'))]^t. 
\]
We omit the rest of the details.  
\end{proof}

Now we are ready to prove Theorem \ref{Thm:sil cons}.
\begin{proof}[\textbf{\upshape Proof of Theorem \ref{Thm:sil cons}:}] 
Putting together Propositions \ref{Pro:ASW m<=k} and \ref{Pro:p_t m>=k},   and using the facts that $\bar{S}\in [0,1]$ and $P_t\in [0,1]$, we have almost surely that
\[
\begin{cases} 
    \limsup_n  S_t(W_n; A_{m,n}, \mathfrak{C}_{m,n})\le 1- r_A p_{\min},   
 & \text{ if } m<k;\\
     \liminf_n S_t(W_n; A_{k,n}, \mathfrak{C}_{k,n}) \ge   \pp{r_A k p_{\min} }^t, & \text{ if } m=k; \\
      \limsup_n S_t(W_n; A_{m,n}, \mathfrak{C}_{m,n}) \le 0,  &\text{ if } m>k. 
\end{cases}
\]
Therefore, the desired claim follows. 
\end{proof}

\section{Large deviation analysis of clustering-based spectral estimation \label{sec:rate}}

In this section, we provide a quantitative assessment of the consistency result in Corollary \ref{Cor:disc consist} through large-deviation-type bounds. This analysis is made possible through certain estimates used in the proof of Theorem \ref{Thm:sil cons} (see Section \ref{sec:proof consist}).

 First, we prepare a Chernoff-Hoeffding-type bound for the sum of a Binomial random number of Bernoulli random variables, which may be of some independent interest.
\begin{Lem}\label{Lem:hoef}
Suppose $B_i$, $i\in \bb{Z}_+$, are independent Bernoulli random variables with $\Pr(B_i=1)=q_1\in (0,1)$ and $N$ is a Binomial$(n,q_2)$ random variable which is independent of $B_i$'s, $n\in \bb{Z}_+$.  Then  we have for any $r\in (0,1-q_1)$,
\begin{align}
\Pr\pp{ \frac{1}{N} \sum_{i=1}^N B_i  > q_1+r }  &\le \exp \pc{ n q_2 \pb{ e^{ -\cl{D}\pp{q_1+r  \parallel  q_1   } }-1}}\le \exp\pc{nq_2 \pp{e^{-2r^2}-1} },   \label{eq:upper tail Hoef} 
\end{align}
and for any $r\in (0,q_1)$,
\begin{align}
\Pr\pp{ \frac{1}{N} \sum_{i=1}^N B_i  < q_1-r }  &\le \exp \pc{ n q_2 \pb{e^{ -\cl{D}\pp{q_1-r  \parallel  q_1   } }-1}}\le \exp\pc{nq_2 \pp{e^{-2r^2}-1} } , \label{eq:lower tail Hoef}
\end{align}
where $\cl{D}(x \parallel y)= x \ln(x/y) + (1-x)\ln\pc{\pp{1-x}/\pp{1-y}}$ if $x,y\in (0,1)$ (the Kullback–Leibler divergence between two Bernoulli distributions). Here  $\sum_{i=1}^m B_i/m$  is understood as $0$ when $m=0$.
\end{Lem}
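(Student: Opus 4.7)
The natural strategy is to condition on $N$ so as to reduce the problem to a classical Chernoff bound for a fixed-sized Bernoulli sum, and then to average out the randomness of $N$ using the probability generating function $\E[z^N] = (1-q_2+q_2 z)^n$ of the Binomial law. The independence of $N$ and the $B_i$'s makes this decomposition immediate.

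For each fixed $m \ge 1$, the standard Chernoff bound applied to $B_1,\ldots, B_m$ yields
\[
\Pr\pp{ \frac{1}{m} \sum_{i=1}^m B_i > q_1 + r } \le e^{-m \cl{D}(q_1+r \parallel q_1)},
\]
and symmetrically for the lower tail with $q_1 - r$. The boundary value $m = 0$ requires only a brief check under the stated convention: on the upper tail the event $\{0 > q_1 + r\}$ is empty since $r \in (0,1-q_1)$ forces $q_1 + r > 0$, while on the lower tail it occurs with probability one; either way the conditional bound continues to hold trivially with the reading $e^{-0 \cdot \cl{D}} = 1$.

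Summing over $m$ against the Binomial PMF and writing $\lambda = \cl{D}(q_1+r \parallel q_1)$ then gives
\[
\Pr\pp{ \frac{1}{N} \sum_{i=1}^N B_i > q_1+r } \le \sum_{m=0}^n \binom{n}{m} q_2^m (1-q_2)^{n-m} e^{-m \lambda} = \pp{ 1 + q_2 ( e^{-\lambda} - 1) }^n,
\]
after which the elementary inequality $1 + x \le e^x$ delivers the first bound in \eqref{eq:upper tail Hoef}; the lower-tail inequality \eqref{eq:lower tail Hoef} follows by the same argument with $-r$. For the second inequalities in both displays, it suffices to invoke the well-known Bernoulli KL lower bound $\cl{D}(q_1 \pm r \parallel q_1) \ge 2r^2$ (a refinement of Pinsker's inequality, due to Hoeffding), combined with the fact that $x \mapsto \exp\pp{n q_2 (e^{-x}-1)}$ is decreasing.

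I do not anticipate any real obstacle: the only mild subtlety is the $m = 0$ boundary imposed by the stated convention, which, as noted above, is absorbed cleanly by the trivial case of the Chernoff bound. Once the strategy of conditioning on $N$ and averaging via the Binomial PGF is adopted, the remainder is essentially bookkeeping.
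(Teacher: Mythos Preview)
Your proposal is correct and follows essentially the same route as the paper: condition on $N$, apply the Chernoff--Hoeffding bound $\Pr(m^{-1}\sum_{i=1}^m B_i>q_1+r)\le e^{-m\cl D(q_1+r\|q_1)}$ for each fixed $m$, average against the Binomial PGF to obtain $(1+q_2(e^{-\cl D}-1))^n$, apply $1+x\le e^x$, and finish with $\cl D(q_1\pm r\|q_1)\ge 2r^2$. Your handling of the $m=0$ boundary is slightly more explicit than the paper's, but the argument is otherwise identical.
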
 
\begin{proof}[\textbf{\upshape Proof:}]
We only prove the \eqref{eq:upper tail Hoef} and the proof of \eqref{eq:lower tail Hoef} is similar. It follows from a version of Hoeffding's inequality for Binomial \cite[Equation (2.1)]{hoeffding1963probability} that for any $m \ge 0 $,
\[
\Pr\pp{ \frac{1}{m} \sum_{i=1}^m B_i  > q_1+r }\le  e^{-m \cl{D}\pp{q_1+r \parallel q_1 }}.
\]
Hence
\begin{align*}
    \Pr\pp{ \frac{1}{N} \sum_{i=1}^N B_i  > q_1+r }  \le &  \sum_{m=0}^n {n\choose m } q_2^m  e^{-m \cl{D}\pp{q_1+r \parallel q_1 }}  (1-q_2)^{n-m} 
    \\= &  \pb{q_2  \pc{  e^{-  \cl{D}\pp{q_1+r \parallel q_1 }} -1} +1}^n \le \exp \pc{ n q_2 \pb{e^{ -\cl{D}\pp{q_1+r  \parallel  q_1   } }-1}},
\end{align*}
where in the last inequality  we have used the inequality $x+1\le \exp\pp{x}$, $x\in \bb{R}$. To obtain the second inequality in \eqref{eq:upper tail Hoef}, it suffices to note that  in view of \cite[Equation (2.3)]{hoeffding1963probability} one has $\cl{D}(q_1 - r |q_1)\ge 2 r^2$.   
\end{proof}
\begin{Rem}
 Note that   when $r$ is small, this simplified bound is approximately $\exp(-2nq_2 r^2)$, a form identical to the usual Hoeffding's inequality (recall $nq_2$ is the effective sample size here).    
\end{Rem}

Let $H=\sum_{i=1}^k p_i \delta_{\ba_i}$ be as defined in \eqref{eq:disc spec}. Let $\pp{{A}_{k,n} =\pp{{\mbf{a}}_{1,n}^k,\ldots,{\mbf{a}}_{k,n}^k},\mathfrak{C}_{k,n}=\pc{{C}_{1,n}^k,\ldots, {C}_{k,n}^k}}$ form a $k$-clustering of the extremal subsample $W_n$ as in \eqref{eq:W_n}. By Corollary \ref{Cor:disc consist}, there exists permutation $\pi$, such that $\mbf{a}_{\pi_n(i),n}^k$ and $p_{\pi_n(i),n}^k $ in \eqref{eq:est p} are consistent estimators for $\ba_i$ and $p_i $, respectively.
Note that an accurate estimation can be interpreted as that for small $x,y>0$, $D(\ba_{\pi(i),n}^k,\mbf{a}_i)<x$ and $|p_{\pi(i),n}^k -p_i|<y$ for all $i\in \{1,\ldots,k\}$. Now consider the complement ``large deviation'' event
 \begin{equation}\label{event_e}
     E(x,y)= \bigcap_{\pi}\bigcup_{i=1}^k \pc{ |\ba_{\pi(i),n}^k -\mbf{a}_i|>x  }\cup \pc{|p_{\pi(i),n}^k -p_i|>y}.
 \end{equation}
where the intersection $\cap_\pi$ is over all permutations $\pi:\{1,\ldots,k\}\mapsto \{1,\ldots,k\}$.
We have the following result.
\begin{Pro}\label{Pro:large dev}
Suppose $\mbf{X}$ satisfying \eqref{eq:equiv tail} and \eqref{eq:Lambda} has  a   spectral measure of the following form $ H= \sum_{i=1}^k p_i \delta_{\mbf{a}_i}$,
where
$\mbf{a}_i$'s are distinct points on $\dsph$,  and $p_i>0$, $p_1+\cdots+p_k=1$. Let $W_n$ denote the extremal subsample as in \eqref{eq:W_n},  and  $\pp{{A}_{k,n}=\pp{{\mbf{a}}_{1,n}^k,\ldots,{\mbf{a}}_{k,n}^k} ,\mathfrak{C}_{k,n}=\pc{{C}_{1,n}^k,\ldots, {C}_{k,n}^k}}$ form an $k$-clustering of $W_n$ as defined in Definition \ref{Def:k-clust} with respect to a dissimilarity measure $D$ defined in Definition \ref{Def:dis}. Let $E(x,y)$ be the event defined in \ref{event_e}. Then for any $x,y>0$,
 \[\limsup_n  \frac{1}{c_{(r)} \ell_n} \ln \Pr\pp{E(x,y)} \le \exp\pp{-2\Delta(x,y)^2}-1
 \]
 where 
 \[\Delta(x,y)=\begin{cases}
     \max\{y/c_k,p_{min}x/(k+x)\},&x<\epsilon_0 \text{, }  y<c_k p_{\min} \epsilon_0/(k+ \epsilon_0),\\
     p_{\min} \epsilon_0/(k+ \epsilon_0),&otherwise,
 \end{cases}\]
 where $ \epsilon_0:=\sup\{\epsilon>0:r_A> \epsilon + r_A^{\dagger}(\epsilon)\}$ and $c_k:=(k\vee 2-1)$.
\end{Pro}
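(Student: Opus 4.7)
The plan is to reduce the deviation event $E(x,y)$ to the complement of the concentration event $E_n(\epsilon,\delta)$ of \eqref{eq:E_eps del} for a suitable pair $(\epsilon,\delta)$ depending on $(x,y)$, bound $\Pr(E_n(\epsilon,\delta)^c)$ using Lemma \ref{Lem:hoef}, and finally optimize the admissible $(\epsilon,\delta)$ to extract the rate.

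The deterministic step is to show $E(x,y)\subset E_n(\epsilon,\delta)^c$ whenever $(\epsilon,\delta)$ satisfies $\epsilon'(\epsilon,\delta)\le x$ and $c_k\delta\le y$ (together with the smallness requirements needed to invoke Lemma \ref{Lem:m=k and}). On $E_n(\epsilon,\delta)$ the multiset $W_n$ satisfies the concentration condition $\mathcal{A}(\epsilon,\delta)$, so Lemma \ref{Lem:m=k and} supplies a permutation $\hat{\pi}_n$ with $D(\mathbf{a}^k_{\hat{\pi}_n(i),n},\mathbf{a}_i)\le \epsilon'$ and $|C^k_{\hat{\pi}_n(i),n}|/|W_n|\ge p_i-\delta$ for every $i$. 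The identity $\sum_j |C^k_{\hat{\pi}_n(j),n}|/|W_n|=1$ upgrades the one-sided proportion inequality to the two-sided bound $|p^k_{\hat{\pi}_n(i),n}-p_i|\le c_k\delta$. Thus both deviations for $\hat{\pi}_n$ land inside $(x,y)$, so $\hat{\pi}_n$ witnesses $E(x,y)^c$, which is the claimed inclusion.

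For the probabilistic step, I would write $E_n(\epsilon,\delta)^c=\bigcup_{i=1}^k\{|W_n\cap B_D(\mathbf{a}_i,\epsilon)|/|W_n|<p_i-\delta\}$ and apply Lemma \ref{Lem:hoef} termwise. The relevant Bernoulli parameter is the conditional probability $\Pr(\mathbf{X}/\|\mathbf{X}\|_{(s)}\in B_D(\mathbf{a}_i,\epsilon)\mid \|\mathbf{X}\|_{(r)}\ge (n/\ell_n)^{1/\alpha})$, which by \eqref{eq:conv spec} tends to $H(B_D(\mathbf{a}_i,\epsilon))=p_i$ as $n\to\infty$ (for $\epsilon<r_A$), while the binomial index in Lemma \ref{Lem:hoef} has $nq_2\sim \ell_n c_{(r)}$ by \eqref{eq:W_n exp val}. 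A union bound together with Lemma \ref{Lem:hoef} then yields $\Pr(E_n(\epsilon,\delta)^c)\le k\exp\!\bigl(\ell_n c_{(r)}(1+o(1))(e^{-2(\delta-o(1))^2}-1)\bigr)$; dividing by $c_{(r)}\ell_n$ and taking $\limsup$ produces the asymptotic rate $e^{-2\delta^2}-1$ for every admissible $\delta$.

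The last step is to push $\delta$ as large as the two constraints allow. Letting $\epsilon\downarrow 0$ drives $r_A^\dagger(\epsilon)\downarrow 0$ (Remark \ref{Rem:r_A dagger}), so in the limit $\epsilon'(\epsilon,\delta)\to k\delta/(p_{\min}-\delta)$; setting this equal to $x$ yields the location-driven threshold $\delta=p_{\min}x/(k+x)$, and the size constraint gives the threshold $\delta=y/c_k$. The regime $x<\epsilon_0$, $y<c_k p_{\min}\epsilon_0/(k+\epsilon_0)$ is precisely the range in which both thresholds stay below the cap $p_{\min}\epsilon_0/(k+\epsilon_0)$ dictated by the hypothesis \eqref{eq:r_A lb} of Lemma \ref{Lem:m=k and}; outside this regime, $\delta$ saturates at the cap, which produces the second case of $\Delta(x,y)$. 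The main obstacle I anticipate is the triangular-array bookkeeping: the Bernoulli parameter in Lemma \ref{Lem:hoef} depends on $n$ and matches $p_i$ only asymptotically, so one must sandwich it as $p_i\pm \eta$, run Hoeffding with effective deviation $\delta-\eta$, and send $\eta\downarrow 0$ after $n\to\infty$; coordinating this with $\epsilon\downarrow 0$ (to neutralize $r_A^\dagger(\epsilon)$) and with the permutation $\hat{\pi}_n$ supplied by Lemma \ref{Lem:m=k and} is the delicate part.
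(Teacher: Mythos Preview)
Your outline coincides with the paper's proof: reduce $E(x,y)$ to the complement of the concentration event $E_n(\epsilon,\delta)$ via Lemmas~\ref{Lem:m>=k center} and~\ref{Lem:m=k and}, control $\Pr(E_n(\epsilon,\delta)^c)$ by a union bound together with Lemma~\ref{Lem:hoef}, and then optimize in $\delta$ after sending $\epsilon\downarrow 0$. The probabilistic step and the limiting bookkeeping you describe are correct and match the paper.

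The gap lies in the final optimization. You (correctly) require \emph{both} $\epsilon'(\epsilon,\delta)\le x$ \emph{and} $c_k\delta\le y$ for the inclusion $E(x,y)\subset E_n(\epsilon,\delta)^c$; but under these two simultaneous constraints together with the cap $\epsilon'<\epsilon_0$ from \eqref{eq:r_A lb}, the largest admissible $\delta$ is $\min\{y/c_k,\ p_{\min}x/(k+x),\ p_{\min}\epsilon_0/(k+\epsilon_0)\}$, not the stated $\Delta(x,y)$, which carries a $\max$ in the first regime and equals the cap in the second. In particular, your claim that ``outside this regime $\delta$ saturates at the cap'' is not supported by your own constraints (take $x$ large but $y$ tiny: the binding constraint is $\delta\le y/c_k$, well below the cap). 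The paper reaches the $\max$ by asserting that the inclusion already holds whenever $\epsilon'\le x$ \emph{or} $c_k\delta\le y$; this is precisely what would turn the $\min$ into a $\max$, but that ``or'' is not justified---if only $\epsilon'\le x$ holds, the permutation from Lemma~\ref{Lem:m=k and} controls the centers but still permits $|p^k_{\pi(i),n}-p_i|\in(y,\,c_k\delta]$, so $E(x,y)$ can occur on $E_n(\epsilon,\delta)$. In short, your argument is internally sound and delivers the bound with $\min$ in place of $\max$; it does not, as written, reach the stated $\Delta(x,y)$, and the step in the paper's proof that does reach it appears itself to be in error.
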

\begin{proof}[\textbf{\upshape Proof:}]
If $H_n\pp{B_D(\mbf{a}_i,\epsilon)}=|W_n\cap B_D(\mbf{a}_i,\epsilon)|/|W_n|\ge p_i-\delta$ for all $i\in \{1,\ldots,k\}$, by Lemmas \ref{Lem:m>=k center} and \ref{Lem:m=k and}, as long as \eqref{eq:r_A lb} holds,  there exists a permutation $\pi:\{1,\ldots,k\}\mapsto\{1,\ldots,k\}$,  such that $D(\ba_{\pi(i),n}^k,\mbf{a}_i)<\epsilon'$ and $| p_{\pi(i),n}^k-p_i|\le c_k\delta$ for all $i\in \{1,\ldots,k\}$. Hence under \eqref{eq:r_A lb}, whenever $\epsilon'\le x$ or $ c_k\delta\le y$,
\[
\Pr\pp{E(x,y)}\le\Pr\left(\bigcap_{\pi}\bigcup_{i=1}^k \pc{ D(\ba_{\pi(i),n}^k,\mbf{a}_i)>\epsilon' }\cup \pc{|p_{\pi(i),n}^k -p_i|>c_k\delta}\right)\le \Pr\left(\bigcup_{i=1}^k\{H_n\pp{B_D(\mbf{a}_i,\epsilon)}< p_i-\delta\}\right),
 \]
 where $H_n$ is the empirical spectral measure in \eqref{eq:emp spec}.
 Observe that  for any $i\in \{1,\ldots,k\}$, 
 \[\pp{|W_n|, \pp{ \ind{\mbf{X}_j/\|\mbf{X}_j\|_{(s)}\in B_D(\mbf{a}_i,\epsilon),\  \|\mbf{X}_j\|_{(r)}\ge 
 \pp{n/ \ell_n}^{1/\alpha}}}_{j=1,\ldots,n}}\EqD (N, (B_j)_{j=1,\ldots,n}), \]
 where $N$ and $B_j$'s are as in Lemma \ref{Lem:hoef} with respective parameters $q_1$ and $q_2$ given as follows:
 \begin{equation}\label{eq:q_1}
q_1=q_1(i,\epsilon,n):=\Pr\pp{{\mbf{X}_1/\|\mbf{X}_1\|_{(s)}\in B_D(\mbf{a}_i,\epsilon),\  \|\mbf{X}_1\|_{(r)}\ge 
 \pp{n/ \ell_n}^{1/\alpha}}}/ \Pr\pp{\|\mbf{X}_1\|_{(r)}\ge 
 \pp{n/ \ell_n}^{1/\alpha}} \rightarrow p_i
 \end{equation} 
 as $n\rightarrow\infty$, 
 where the last convergence holds due to \eqref{eq:conv spec} and the fact that  $B_D(\mbf{a}_i,\epsilon)$'s are disjoint under $\epsilon<r_A$, and 
 \begin{equation}\label{eq:q_2}
 q_2=q_2(n)=\Pr\pp{\|\mbf{X}_1\|_{(r)}\ge 
 \pp{n/ \ell_n}^{1/\alpha}}\sim c_{(r)} \pp{\ell_n/n}  
 \end{equation}
 as $n\rightarrow\infty$.    Now applying Lemma \ref{Lem:hoef}, we have
 \begin{align}\label{eq:Pr H_n bound}
 \Pr\left(\bigcup_{i=1}^k\{H_n\pp{B_D(\mbf{a}_i,\epsilon)}< p_i-\delta\}\right)&\le \sum_{i=1}^k \Pr\pp{H_n(B_D(\mbf{a}_i,\epsilon) )< p_i-\delta }\le k \exp \pp{ n q_2(n) \pb{\exp\pc{ - 2\delta^2 }-1}}.
 \end{align}
 Therefore in view of also \eqref{eq:q_1} and \eqref{eq:q_2}, we have  
 \begin{align*}
  \limsup_n  \frac{1}{\ell_n} \ln\pc{\Pr\left(E(x,y)\right)}
  \le   c_{(r)}\pc{\exp(-2\delta^2)-1}.
 \end{align*}
 The next step is to determine the largest value of $\delta$ as possible. Recall $ \epsilon_0 =\sup\{\epsilon'>0:r_A> \epsilon' + r_A^{\dagger}(\epsilon')\}$. Then when $\epsilon'\in(0, \epsilon_0)$,  for all $\epsilon$ small enough we have $r_A> \epsilon'+ 2 r_A^\dagger(\epsilon) + r_A^{\dagger}(\epsilon')$, namely, \eqref{eq:r_A lb} holds. 
Hence by taking $\epsilon\downarrow 0$ in \eqref{eq:epsilon prime}, we get from $\epsilon'<\epsilon_0$ the restriction $\delta<p_{\min} \epsilon_0/(k+ \epsilon_0)$. Similarly, from $\epsilon'\le x$ we get the restriction $\delta<p_{min}x/(k+x)$. In addition, from $c_k\delta\le y$ we get the restriction $\delta\le y/c_k$. At least one of the last two conditions should be satisfied. Therefore, 
 \[\begin{cases}
     \delta<p_{\min} \epsilon_0/(k+ \epsilon_0),&\text{if }x\ge\epsilon_0,\\
     \delta<p_{\min} \epsilon_0/(k+ \epsilon_0),&\text{if }x<\epsilon_0, y\ge c_kp_{\min} \epsilon_0/(k+ \epsilon_0),\\
     \delta<\max\{y/c_k,p_{min}x/(k+x)\},&\text{if }x<\epsilon_0, y<c_kp_{\min} \epsilon_0/(k+ \epsilon_0).
 \end{cases}\]
 The result then follows.
\end{proof}

\begin{Rem}
  The large-deviation-type estimates in Proposition \ref{Pro:large dev} say that 
   the probability $\Pr\pp{E(x,y)}$ 
  decays exponentially in the  expected extremal subsample size $c_{(r)}\ell_n$. It is worth observing that the expression of  $\Delta(x,y)$ 
  reflects the following: The difficulty of clustering-based estimation  measured by the aforementioned large error probabilities depends negatively on $p_{min}$ and $r_A$ (note that $\epsilon_0$ is an increasing function of $r_A$) and positively on $k$. In other words, the estimation is more accurate when the true discrete spectral measure has fewer atoms, the dissimilarity between the atoms is large, or the probability mass concentrated on each atom is significant. 
\end{Rem}

We also have the following result which states that in the context of Theorem \ref{Thm:sil cons}, the probability of false order election tends to $0$ exponentially fast.
\begin{Pro}\label{Pro:sil con rate}
Suppose $\mbf{X}$ satisfying \eqref{eq:equiv tail} and \eqref{eq:Lambda} has a discrete spectral measure of the form $ H= \sum_{i=1}^k p_i \delta_{\mbf{a}_i}$, $k\in\bb{Z}_+$
where
$\mbf{a}_i$'s are distinct points on $\dsph$,  and $p_i>0$, $p_1+\cdots+p_k=1$. Let $W_n$ denote the extremal subsample as in \eqref{eq:W_n},  and  $\pp{A_{m,n},\mathfrak{C}_{m,n}}$, $m\in \bb{Z}_+$, form an $m$-clustering of $W_n$ as defined in Definition \ref{Def:k-clust} with respect to a dissimilarity measure $D$ defined in Definition \ref{Def:dis}.
Let $r_A$ be defined as in \eqref{eq:r_A}, $p_{\min}$ be defined as in \eqref{eq:pmin} and $t_0$ be defined as in \eqref{eq:t0}. Then fix $t\in (0,t_0)$,
\[
\limsup_n \frac{1}{c_{(r)}\ell_n}\ln\pc{\Pr \pp{S_t(W_n; A_{k,n}, \mathfrak{C}_{k,n})\le  S_t(W_n; A_{m,n}, \mathfrak{C}_{m,n})   \text{ for all } m\neq k }}\le \exp\pp{-2\delta_t(k,p_{\min},r_A)^2}-1
\]
where $\delta_t(k,p_{\min},r_A)>0$ is the solution $\delta$ of the equation $[k(p_{\min}-\delta)r_A]^t -k\delta= \pp{k^2\delta}^t \vee\pp{ \pp{1-\pp{p_{\min}-\delta}r_A} \ind{k\ge 2}}$.
\end{Pro}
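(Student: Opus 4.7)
The plan is to combine the deterministic cluster--concentration estimates from Section~\ref{sec:proof consist} with the Chernoff--Hoeffding bound of Lemma~\ref{Lem:hoef} in exactly the same fashion as the proof of Proposition~\ref{Pro:large dev}. The starting point is to reduce the ``false selection'' event
\[
F_n := \bigl\{S_t(W_n; A_{k,n}, \mathfrak{C}_{k,n}) \le S_t(W_n; A_{m,n}, \mathfrak{C}_{m,n}) \text{ for some } m \neq k\bigr\}
\]
to a failure of the concentration condition $\mathcal{A}(\epsilon,\delta)$ on $W_n$. Concretely, I would reuse the event
\[
E_n(\epsilon, \delta) = \bigcap_{i=1}^k \bigl\{H_n(B_D(\mbf{a}_i, \epsilon)) \ge p_i - \delta\bigr\}
\]
from \eqref{eq:E_eps del}; on $E_n(\epsilon,\delta)$ the multiset $W_n$ satisfies $\mathcal{A}(\epsilon,\delta)$, so that all four deterministic lemmas of Section~\ref{sec:proof consist} become available.

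Next I would assemble the deterministic comparisons of $S_t$ across the three regimes. Using Lemma~\ref{Lem:m=k and} we obtain, after sending $\epsilon\downarrow 0$ with $\delta$ fixed, the lower bound
\[
S_t(W_n; A_{k,n}, \mathfrak{C}_{k,n}) \ge \bigl[k(p_{\min} - \delta)\, r_A\bigr]^t - k\delta
\]
(up to vanishing terms involving $r_A^\dagger(\epsilon)$ and $r_A^\dagger(\epsilon')$). Using Lemma~\ref{Lem:m<k ASW} together with $P_t \ge 0$ gives, for $m < k$, the upper bound $S_t(W_n; A_{m,n}, \mathfrak{C}_{m,n}) \le 1 - (p_{\min} - \delta)\,r_A$; this case is vacuous when $k=1$, which is what the indicator $\ind{k\ge 2}$ records. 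Using Lemma~\ref{Lem:m>k either} together with $\bar{S} \le 1$ gives, for $m > k$, the upper bound $S_t(W_n; A_{m,n}, \mathfrak{C}_{m,n}) \le (k^2\delta)^t$ in the same limit. Consequently, on $E_n(\epsilon,\delta)$ a strict separation $S_t(k,n) > S_t(m,n)$ for every $m\neq k$ is guaranteed as soon as
\[
\bigl[k(p_{\min} - \delta)\, r_A\bigr]^t - k\delta \;>\; (k^2 \delta)^t \;\vee\; \bigl(1 - (p_{\min} - \delta)\,r_A\bigr)\ind{k \ge 2},
\]
which holds for every $\delta$ strictly below the solution $\delta_t(k, p_{\min}, r_A)$ of the corresponding equality. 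Nonemptiness of this range for $t\in (0,t_0)$ follows from the same inequality that underlies Theorem~\ref{Thm:sil cons}. Hence $F_n \subset E_n(\epsilon,\delta)^c$ for all sufficiently small $\epsilon>0$.

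The final step is a direct large-deviation bound on $\Pr(E_n(\epsilon,\delta)^c)$. This is the argument already carried out in the proof of Proposition~\ref{Pro:large dev}: applying Lemma~\ref{Lem:hoef} to the Bernoulli indicators $\ind{\mbf{X}_j/\|\mbf{X}_j\|_{(s)} \in B_D(\mbf{a}_i, \epsilon),\ \|\mbf{X}_j\|_{(r)} \ge (n/\ell_n)^{1/\alpha}}$, with $q_2(n)\sim c_{(r)}\ell_n/n$ from \eqref{eq:q_2} and $q_1(i,\epsilon,n)\to p_i$ from \eqref{eq:q_1}, yields
\[
\Pr(E_n(\epsilon, \delta)^c) \le k\, \exp\bigl(n q_2(n)\,[\exp(-2\delta^2) - 1]\bigr).
\]
Dividing by $c_{(r)}\ell_n$, taking $\limsup_n$, and then letting $\delta\uparrow \delta_t(k, p_{\min}, r_A)$ and $\epsilon\downarrow 0$ produces the stated rate $\exp(-2\delta_t^2)-1$.

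The main obstacle is keeping the deterministic comparison tight as $\epsilon \downarrow 0$: unlike the consistency proof of Theorem~\ref{Thm:sil cons}, here $\delta$ is held at a strictly positive value, so the auxiliary quantity $\epsilon'$ from \eqref{eq:epsilon prime} tends to $k\delta/(p_{\min}-\delta)\neq 0$ and $r_A^\dagger(\epsilon')$ does not vanish. One must verify that the simplified bounds on the RHS of the equation defining $\delta_t$ (dropping the $r_A^\dagger(\epsilon)$, $r_A^\dagger(\epsilon')$ contributions and the ``merged-centers'' branch in Lemma~\ref{Lem:m>k either}) are achievable in the limit; if not, one still obtains the same exponential order but with a slightly diminished $\delta_t$, so the structural statement survives.
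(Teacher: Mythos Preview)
Your proposal follows essentially the same route as the paper: reduce the false-selection event to the complement of $E_n(\epsilon,\delta)$ via the deterministic bounds from Lemmas~\ref{Lem:m<k ASW}--\ref{Lem:m=k and}, then apply Lemma~\ref{Lem:hoef} exactly as in \eqref{eq:Pr H_n bound} and let $\delta\uparrow\delta_t$. The caveat you flag about $\epsilon'\to k\delta/(p_{\min}-\delta)\neq 0$ as $\epsilon\downarrow 0$ is real, but the paper's own proof handles it in the same informal way, simply asserting that the event is empty ``as long as $\delta>0$ satisfies'' the clean inequality ``and $\epsilon$ is sufficiently small (depending on $\delta$)''; so you are not missing anything relative to the published argument.
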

\begin{proof}[\textbf{\upshape Proof:}]
Writing $S_t(m)=S_t(W_n; A_{m,n}, \mathfrak{C}_{m,n})$, we have
\begin{align*}
\Pr \pp{S_t(k)\le  S_t(m),\ m\neq k }\le \Pr \pp{\{S_t(k)\le  S_t(m),\ m\neq k \}\cap E_n(\epsilon,\delta)} + \Pr \pp{E_n(\epsilon,\delta)^c},
\end{align*}
where $E_n(\epsilon,\delta)$ is in \eqref{eq:E_eps del}. Combining the inequalities regarding $\bar{S}$ in the proof of Proposition \ref{Pro:ASW m<=k}, and the inequalities regarding $P_t$ in the proof of Proposition \ref{Pro:p_t m>=k},  the event in the first  probability on the right-hand side above is empty as long as   $\delta>0$ satisfies
\[
[k(p_{\min}-\delta)r_A]^t -k\delta> \pp{k^2\delta}^t\vee \pp{ \pp{1-\pp{p_{\min}-\delta}r_A} \ind{k\ge 2}}
\]
and  $\epsilon$ is sufficiently small (depending on $\delta$).
Note that the inequality above holds when $\delta$ is sufficiently small due to  $0<t<t_0=\ln\pp{1-r_A p_{\min}}/\ln\pp{r_Akp_{\min}}$, and its left-hand side is decreasing (to negative values) and its right-hand side is increasing with as $\delta$ increases to $p_{\min}$. Then for any $\delta\in(0,\delta_t(k,p_{\min},r_A))$,  we have in view of  \eqref{eq:q_2} and \eqref{eq:Pr H_n bound}   that
\[
 \lim_n  \frac{1}{c_{(r)}\ell_n} \ln\pc{\Pr \pp{S_t(k)\le  S_t(m),\ m\neq k }}\le \exp\pp{-2 \delta^2}-1.
\]
The proof is concluded by letting $\delta \uparrow \delta_t(k,p_{\min},r_A)$. 
\end{proof}

\section{Clustering and heavy-tailed factor models}\label{sec:clust factor models}

\subsection{The models}\label{sec:factor models}
As  observed by \citet{einmahl2012m} and \citet{janssen2020k}, one may relate a $k$-clustering algorithm to the estimation of certain factor-like models that are often considered in the analysis of multivariate extremes.    Suppose $B=\left(b_{ij}\right)_{i\in \{1,\ldots, d\}, j\in \{1,\ldots,k\} }=\pp{\mbf{b}_1,\ldots,\mbf{b}_k} $, where $\mbf{b}_j=(b_{1j},\ldots,b_{dj})^\top$, $j\in\{1,\ldots,k\}$, are $k$ distinct $d$-dimensional vectors,  $b_{ij}\ge 0$, and that each column and row vector of $B$ is nonzero (otherwise, the dimension $d$ or the factor order $k$ can  be reduced).
 Assume that $\mbf{Z}= (Z_1,\ldots,Z_k)^\top$ is a vector of i.i.d.\  positive  random variables satisfying $\Pr(Z_1> z)\sim z^{-\alpha}$ as  $z\rightarrow\infty$, $\alpha\in (0,\infty)$.  
 Then the  sum-linear   model is given as 
 \begin{equation} \label{eq:sum linear}
 \mbf{X} =\pp{X_1,\ldots,X_d}^\top  =  \pp{\sum_{j=1}^k b_{1j} Z_j,\ldots, \sum_{j=1}^k b_{ dj} Z_j}^\top = B \mbf{Z}. 
 \end{equation}
 On the other hand, we also have the max-linear  model as
 \begin{equation}\label{eq:max linear}
 \mbf{X}=\pp{X_1,\ldots,X_d}^\top   =\pp{\bigvee_{j=1}^k b_{ 1j} Z_j,\ldots, \bigvee_{j=1}^k b_{ dj} Z_j}^\top =    B  \odot \mbf{Z},
 \end{equation}
 where $\odot$ is interpreted as the matrix product with the sum operation replaced by the maximum operation. Note that due to the exchangeability of $\pp{Z_1,\ldots,Z_k}$, either model is identifiable only up to a permutation of the vectors $\mbf{b}_j$, $j\in\{1,\ldots,k\}$, i.e., the distribution of $\mbf{X}$ is unchanged if $B$ is replaced by $B_\pi:=\pp{\mbf{b}_{\pi(1)},\ldots,\mbf{b}_{\pi(k)}}$ for any permutation $\pi:\{1,\ldots,k\}\mapsto \{1,\ldots,k\}$.      The models of types \eqref{eq:sum linear} and \eqref{eq:max linear} have recently attracted considerable interest in connection with causal structural equations for extremes; see, e.g.,  \cite{gissibl2018max, gnecco2021causal}.

 It is known that both models above satisfy MRV \eqref{eq:Lambda}, and have a discrete spectral measure   as in \eqref{eq:disc spec} with
 \begin{equation}\label{eq:spec coef rel}
p_j=\frac{\|\mbf{b}_j\|_{(r)}^\alpha}{\sum_{\ell=1}^k \|\mbf{b}_\ell\|_{(r)}^\alpha},\quad \mbf{a}_j=\frac{\mbf{b}_j}{\|\mbf{b}_j\|_{(s)}}, \quad  j\in \pc{1,\ldots,k}.
\end{equation}
This can be derived based on the well-known  ``single large jump'' heuristic:   when $\|\mbf{X}\|_{(r)}$ is large, it is only due to a single large $Z_j$  with    overwhelming probability.  See, e.g., \cite{medina2024spectral, einmahl2012m};  we mention that these works usually assume the same norm $\|\cdot \|_{(r)}=\|\cdot\|_{(s)}$ and $\alpha=1$,  although an extension  is  straightforward.   
In addition, the marginal standardization condition \eqref{eq:equiv tail} or equivalently \eqref{eq:stand cond}, imposes the following restriction on $B$:
\begin{equation}\label{eq:B restr}
 \sum_{j=1}^k b_{ij}^{\alpha}=1,\quad i\in \pc{1,\ldots,d}.
\end{equation}
We also mention that one may relax the models \eqref{eq:sum linear} and  \eqref{eq:max linear} by adding  a noise term, e.g.,  $\mbf{X}=B\mbf{Z}+\boldsymbol{\varepsilon}$ or $\mbf{X}=(B \odot \mbf{Z}) \vee \boldsymbol{\varepsilon}$, where $\boldsymbol{\varepsilon}=(\varepsilon_1,\ldots,\varepsilon_d)^\top$ is a vector of  i.i.d.\ positive noise terms, and the maximum $\vee$ is performed coordinate-wise. As long as each $\varepsilon_i$ has a  tail lighter than that of $Z_j$,  the conclusions made above still hold (see, e.g.,  \cite{einmahl2012m}).    The discussion also applies to the transformed-linear model of \cite{cooley2019decompositions}.  Finally, we mention that in the context of multivariate extremes,  one typically only considers fitting these models to an extremal subsample (see, e.g., \eqref{eq:W_n}) instead of the whole sample.

\subsection{Order selection and coefficient estimation}\label{sec:fact order coef est}

Due to the discrete nature of the spectral measure,  the likelihood functions of these models are inaccessible (see, e.g., \cite{einmahl2012m,yuen2014crps,einmahl2018continuous}).  Even without taking a perspective of extremes, the max-linear model \eqref{eq:max linear} does not admit a smooth density. Therefore, the usual model selection techniques based on information criteria are not available.  On the other hand,   the spectral measure of these factor models, including (39) and (40), is of the form \eqref{eq:disc spec}. Therefore, the penalized ASW method proposed in Section \ref{sec:order}  could be used to select the order of factors $k$, whose consistency is supported by Theorem \ref{Thm:sil cons}.

Suppose from now on the order $k$ is assumed to be known.
Another noteworthy issue deserving discussion is whether we can translate the estimation of the spectral measure through a $k$-clustering algorithm (refer to Section \ref{sec:sph clust MEV}) into an estimation of the coefficient matrix $B=\pp{\mbf{b}_1,\ldots,\mbf{b}_k}$ in \eqref{eq:sum linear} or \eqref{eq:max linear}. Note that the constraint \eqref{eq:B restr} also needs to be taken into account.  Combining   \eqref{eq:spec coef rel} and \eqref{eq:B restr}, to solve the $kd$ coefficients in $B$ from $p_j$'s and $\mbf{a}_j$'s, we have totally $kd+d-1$ free equations ($k-1$  from the equations for $p_j$'s,  $(d-1)k$ from the equations for $\mbf{a}_j$'s and $d$ from \eqref{eq:B restr}).   When $p_j$'s and $\mbf{a}_j$'s are estimated via $k$-clustering, the over-determined system may not admit a solution, although this over-determined relation holds asymptotically in view of Corollary \ref{Cor:disc consist}.  

 In the following, we describe a simple method to convert spectral estimation to an estimation of $B$ that satisfies the constraint \eqref{eq:B restr}.  
 Observe that the exponent measure $\Lambda$    for the models \eqref{eq:sum linear} and \eqref{eq:max linear}
 concentrates on the rays $\{t\mbf{b}_j: t>0\}$, $j\in \pc{1,\ldots,k}$.    Hence a spectral mass point
 $\mbf{a}_j=\mbf{b}_j/\|\mbf{b}_j\|_{(s)}$ on the $\|\cdot \|_{(s)}$-norm sphere corresponds to a spectral mass point $\mbf{b}_j/\|\mbf{b}_j\|_\alpha=\mbf{a}_j/\|\mbf{a}_j\|_\alpha$ on the $\alpha$-norm sphere,  $j\in \pc{1,\ldots,k}$. 
 The advantage of considering the  $\alpha$-norm sphere  is that $$\sum_{j=1}^k \|\mbf{b}_j\|_\alpha^\alpha=\sum_{i=1}^d \sum_{j=1}^k b_{ij}^\alpha=d
 $$  
 due to   relation \eqref{eq:B restr}. Therefore, under the choice $\|\cdot \|_{(r)}=\|\cdot \|_\alpha$ in \eqref{eq:spec coef rel}, we have $p_jd=\|\mbf{b}_j\|_\alpha^\alpha$, and hence
 \begin{equation}\label{eq:b_j solve}
 \mbf{b}_j= \pp{p_j d}^{1/\alpha} \frac{\mbf{a}_j}{\|\mbf{a}_j\|_\alpha},\quad j\in \pc{1,\ldots,k}.
 \end{equation}
 {Note that if  $\|\cdot\|_{(s)}=\|\cdot\|_\alpha$ already, then $\|\mbf{a}_j\|_\alpha=1$.
 So  one can plug  in estimated $\mbf{a}_j$ and $p_j$   via  $k$-clustering on the  $\alpha$-norm sphere into \eqref{eq:b_j solve},  obtaining,  say, $\wh{\mbf{b}}_j$, $j\in\{1,\ldots,k\}$. However,  the condition  \eqref{eq:B restr} may not be satisfied.  We propose the following simple correction: first, form the preliminary estimated coefficient matrix $\wh{B}:= \pp{\wh{\mbf{b}}_1,\ldots,\wh{\mbf{b}}_k}=: \pp{ {\mbf{r}}_1 ,\ldots, {\mbf{r}}_d}^\top$, where $\mbf{r}_i^\top$, $i\in \pc{1,\ldots,d}$, are row vectors of $\wh{B}$. Then we obtain the final estimate $\wt{B}=\pp{\wt{\mbf{b}}_1,\ldots,\wt{\mbf{b}}_k}$ of $B$  through replacing each row $\mbf{r}_i$ by $\mbf{r}_i/\|\mbf{r}_i\|_\alpha$, which ensures \eqref{eq:B restr}.   It follows from Corollary \ref{Cor:disc consist} and a continuous mapping argument that the thus obtained estimate of $B$ is consistent (up to a permutation of $\mbf{b}_i$'s).

\section{Simulation and real data studies} \label{sec:simdata}

\subsection{Simulation studies}\label{sec:sim}

In this section, we present some simulation studies to illustrate the performance of the penalized ASW method introduced in Section \ref{sec:order}.  We follow the setup in \cite[Section 4]{janssen2020k} to simulate the max-linear factor model \eqref{eq:max linear} with randomly generated coefficient matrix $B$.  In particular, we let the factors $Z_j$'s each follow a standard Fr\'echet ($\alpha=1$) distribution.  We consider 4 different combinations of dimensionality $d$ and true order $k$. Under each $(d,k)$ combination, we describe in the list below the way the coefficient vector $\mbf{b}_j$'s are generated. Note that due to the standardization \eqref{eq:B restr}, only $\mbf{b}_1,\ldots,\mbf{b}_{k-1}$ need to be specified. Let $U_i$'s stand for i.i.d.\ uniform random variables on $[0,1]$;
\begin{itemize}
    \item $d=4, k=2$:  $\mbf{b}_1=(U_1,U_2,U_3,U_4)^\top/2$.
    \item $d=4, k=6$:   $\mbf{b}_1=(U_1,U_2,U_3,U_4)^\top/3$, $\mbf{b}_2=(U_5,0,U_6,0)^\top/3$, $\mbf{b}_3=(0,U_7,0,U_8)^\top/3$, $\mbf{b}_4=(U_9,U_{10},0,0)^\top/3$, $\mbf{b}_5=(0,0,U_{11},U_{12})^\top/3$.
    \item $d=6, k=6$:   $\mbf{b}_1=(U_1,\cdots,U_6)^\top/3$, $\mbf{b}_2=(U_7,0,U_8,0,U_9,0)^\top/3$, $\mbf{b}_3=(0,U_{10},0,U_{11},0,U_{12})^\top/3$, $\mbf{b}_4=(U_{13},U_{14},U_{15},0,0,0)^\top/3$, $\mbf{b}_5=(0,0,0,U_{13},U_{14},U_{15})^\top/3$.
    \item $d=10, k=6$: First 5 factors are $\mbf{b}_1=(U_1,\cdots,U_{10})^\top/2$, $\mbf{b}_2=(U_{11},U_{12},0,\cdots,0)^\top/2$, \\ $\mbf{b}_3=(0,0,U_{13},U_{14},0,\cdots,0)^\top/2$, $\mbf{b}_4=(0,0,0,0,U_{15},U_{16},0,0,0,0)^\top/2$, $\mbf{b}_5=(0,\cdots,0,U_{17},U_{18},U_{19},U_{20})^\top/2$. 
\end{itemize}
For each of the 4 simulation setups described above, we randomly generate 100 models (i.e, 100  coefficient $B$ matrices).  From each of these generated models, we simulate a dataset of size $10000$, extract a subsample of size $1000$  with the largest $2$-norms, and project the subsample on the $2$-norm sphere, namely, we work with $\|\cdot \|_{(r)}=\|\cdot \|_{(s)}=\|\cdot \|_2$. Subsequently, a spherical clustering algorithm (spherical $k$-means or $k$-pc) and the computation of the penalized ASW score is carried out on this projected subsample.  Throughout the paper, for  the spherical $k$-means algorithm, we use the implementation in the $\textsf{R}$ package  \texttt{skmeans}  \cite{skmeans}, and for the $k$-pc algorithm, we use the \textsf{R} implementation provided in the supplementary material of \cite{fomichov2023spherical}. 

In Fig.\ \ref{d4k2} $\sim$ \ref{d10k6}, we demonstrate the simulation results through some graphical representations. Specifically, each colored matrix plot is associated with a $(d,k)$ setup as described above.  In each plot, a column corresponds to a simulated dataset, and there are 100 columns. The upper half of the plot corresponds to spherical $k$-means and the lower half corresponds to $k$-pc. Within each of these halves, a row corresponds to a $t$ penalty parameter specification. The color of a cell in the matrix signifies the order  $m$ chosen by maximizing the penalized ASW. We use a white color to indicate a coincidence of $m$ with the true order $k$, with a deeper shade of red indicating that the greater $m$ falls below the true $k$, and a deeper shade of blue indicating the greater it exceeds the true $k$. The bar graph to the right of the matrix indicates the success rate of order identification (that is, $m=k$) in all 100 instances.

In all these simulation setups, we can observe a tendency for the non-penalized ($t=0$) ASW to overestimate (sometimes greatly) the order. As the penalty parameter $t$ is tuned up from $0$,  we observe a significant bias correction effect, and the order identification success rate is noticeably improved over a range of $t>0$. Note that this success rate is calculated with respect to the same $t$ for different simulated data sets. We expect the success rate to improve if $t$ is adaptively tuned for each dataset following the visual method described in Section \ref{sec:order}.
It is also worth mentioning that the order identification based on $k$-pc tends to be more accurate than that based on $k$-means in most of these simulations.  Moreover, the lower success rate for $d=4,k=6$ seems to arise because this configuration tends to produce centers that are in proximity to one another.

 {In the supplementary material, we include more simulation studies as above with different choices of extremal subsample sizes. The results are similar in terms of the bias correction effect of the penalty. We also observe that when the extremal subsample size is chosen relatively large (e.g., $1500$), the accuracy of order selection deteriorates, which is expected since the effect of the convergence of Corollary \ref{Cor:disc consist} is not yet strong with a low threshold.   
}

\begin{figure}[h]
    \centering
    \includegraphics[width=0.65\textwidth]{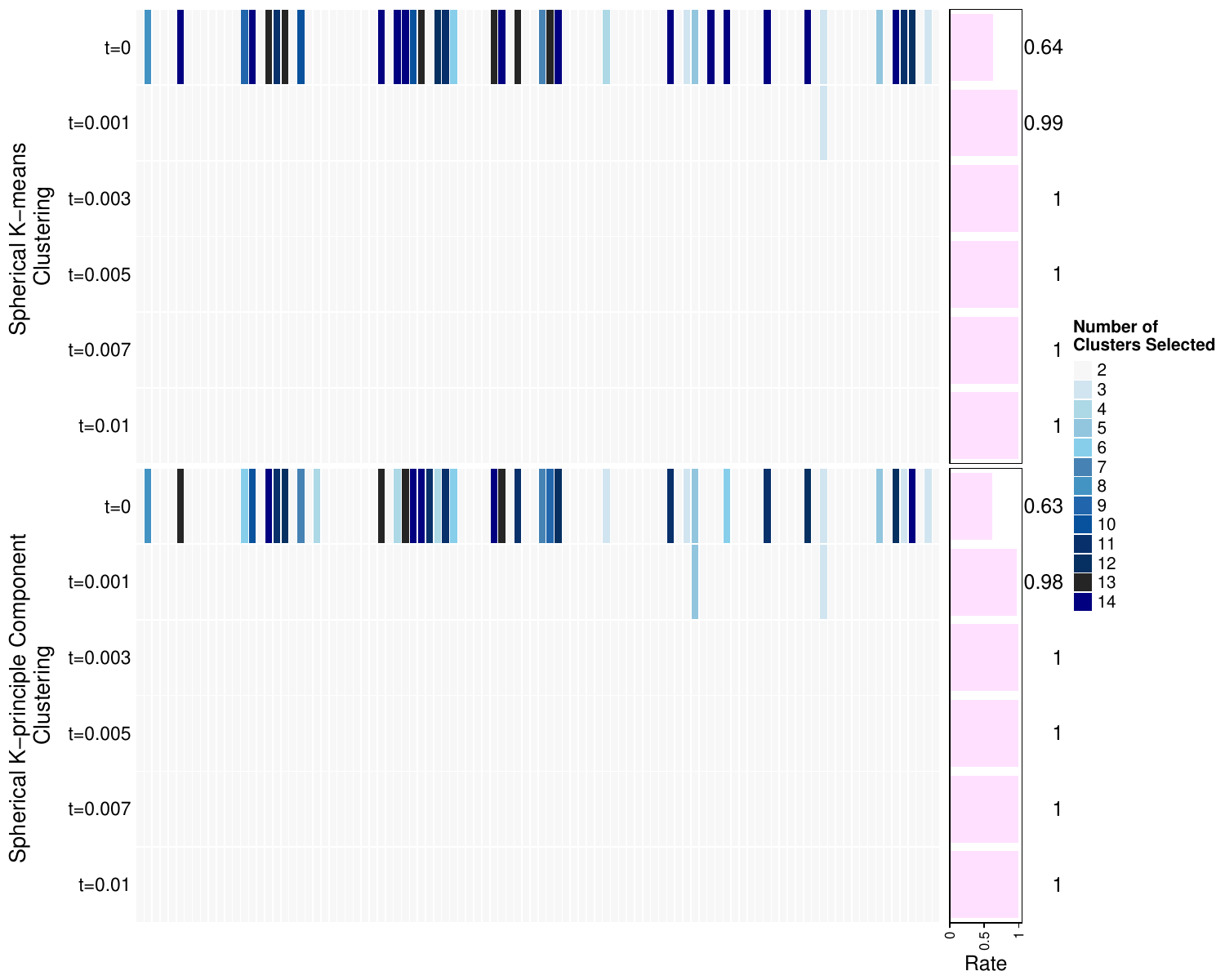}
    \caption{Simulation result visualization for the setup $d=4,k=2$ in Section \ref{sec:sim}.  A column corresponds to a simulated dataset, and a row corresponds to a $t$ penalty parameter specification. See Section \ref{sec:sim} for more details.}
    \label{d4k2}
\end{figure}

\begin{figure}[h]
    \centering
    \includegraphics[width=0.65\textwidth]{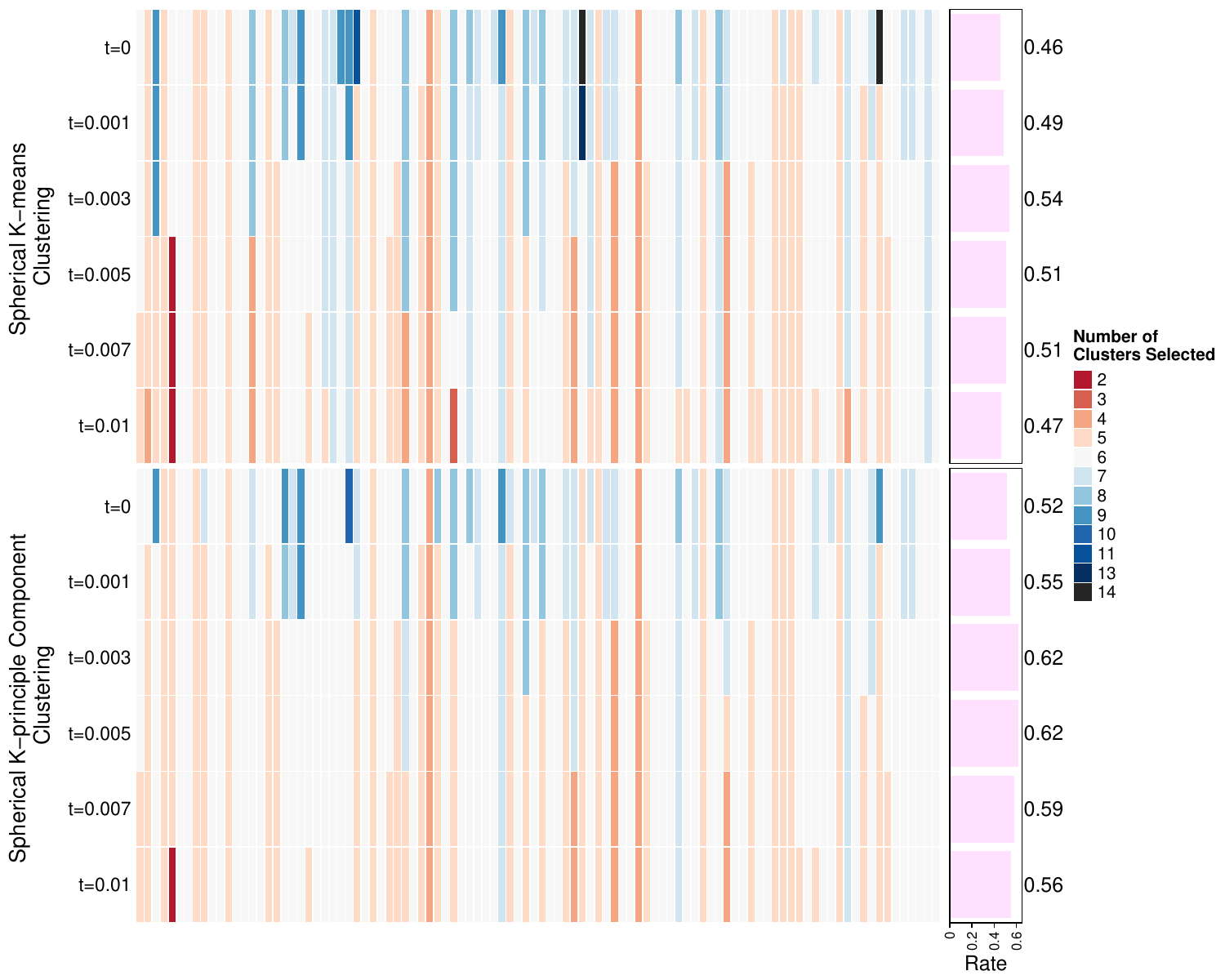}
    \caption{Simulation result visualization for the setup $d=4,k=6$ in Section \ref{sec:sim}. A column corresponds to a simulated dataset, and a row corresponds to a $t$ penalty parameter specification. See Section \ref{sec:sim} for more details.}
    \label{d4k6}
\end{figure}

\begin{figure}[h]
    \centering
    \includegraphics[width=0.65\textwidth]{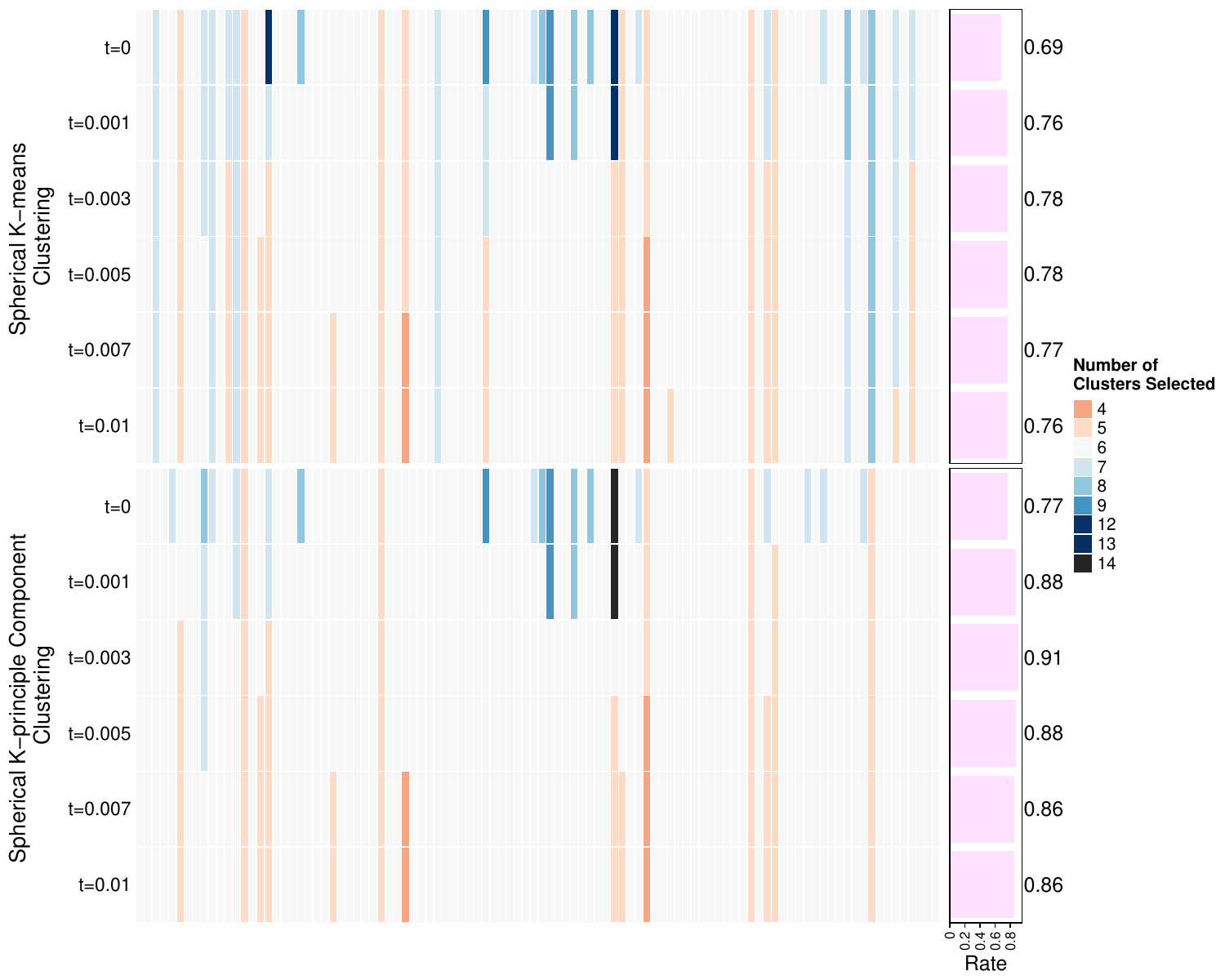}
    \caption{Simulation result visualization for the setup $d=6,k=6$ in Section \ref{sec:sim}. A column corresponds to a simulated dataset, and a row corresponds to a $t$ penalty parameter specification. See Section \ref{sec:sim} for more details.}
    \label{d6k6}
\end{figure}

\begin{figure}[h]
    \centering    \includegraphics[width=0.65\textwidth]{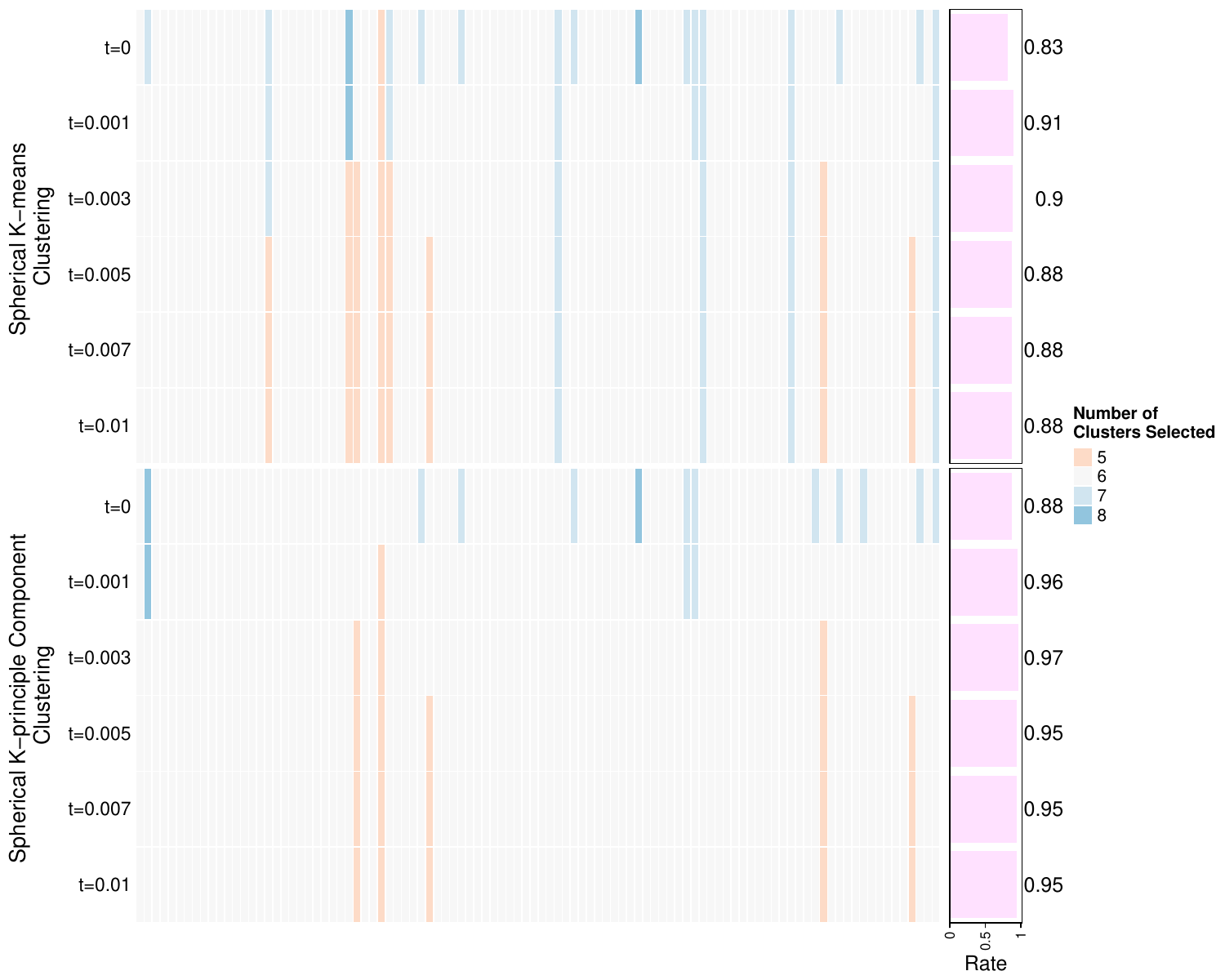}
    \caption{Simulation result visualization for the setup $d=10,k=6$ in Section \ref{sec:sim}. A column corresponds to a simulated dataset, and a row corresponds to a $t$ penalty parameter specification. See Section \ref{sec:sim} for more details.}
    \label{d10k6}
\end{figure}
\clearpage

\subsection{Real data demonstrations}\label{sec:data}

In this section, we use real data examples to demonstrate order selection through penalized ASW as introduced in Section \ref{sec:order}, as well as conversion of clustering-based spectral estimation to a factor coefficient matrix as mentioned in Section \ref{sec:fact order coef est}. We present only the analysis based on the spherical $k$-pc algorithm, that is, the dissimilarity measure $D$ is as in \eqref{eq:pc dis}. The reason for doing so is two-fold. Firstly, the simulation study in Section \ref{sec:sim} seems to suggest a better empirical performance for order selection based on the $k$-pc algorithm. Secondly, as pointed out in \cite{fomichov2023spherical}, the $k$-pc algorithm is more suitable for the detection of groups of concomitant extremes, namely, subsets of variables that
tend to be simultaneously large.  The second property facilitates the comparison of the order $k$ selected with some ``ground truth'' from the background information of the datasets.

 In each of these studies, suppose that the observed data is $(\mbf{x}_i)=(\mbf{x}_i= \pp{x_{i1},\ldots,x_{id}}^\top \in [0,\infty)^d,\ i\in \{1,\ldots,n\})$.
  We follow a conventional approach to marginally standardize a dataset,   so that the assumption \eqref{eq:stand cond} with $\alpha=2$ is roughly met.   In particular, setting $\hat{F}_j(x)=n^{-1}\sum_{i=1}^n \ind{x_{ij}<  x}$  (under this choice of empirical CDF we ensure $\hat{F}_j(x_{ij}))<1$), $j\in \{1,\ldots,d\}$,  the transformed data is given by $(\wt{\mbf{x}}_i)=(\wt{\mbf{x}}_i= \pp{\wt{x}_{i1},\ldots,\wt{x}_{id}}^\top \in [0,\infty)^d,\ i\in \{1,\ldots,n\})$, where $\wt{x}_{ij}:=\pb{-\log\pc{\hat{F}_j(x_{ij})}}^{-1/2}$; if $\hat{F}_j$ were the true CDF for the data in dimension $j$, then $\wt{x}_{ij}$ would follow a standard $2$-Fr\'echet distribution. Next, to prepare for the clustering of multivariate extremes, as in the simulation study in Section \ref{sec:sim}, we select the extremal subsample of $(\wt{\mbf{x}}_i)$  with  $10$\% largest $2$-norms and project the subsample onto the $2$-norm sphere, namely, we work with $\|\cdot \|_{(r)}=\|\cdot \|_{(s)}=\|\cdot \|_2$.

\subsubsection{Air Pollution Data}

 The air pollution dataset is found in the   $\textsf{R}$ package \texttt{texmex} \cite{texmex}, orginated from an online supplementary material of \cite{heffernan2004conditional}. It concerns air quality recordings in Leeds, U.K., specifically in the city center. The data span from 1994 to 1998, divided into summer and winter sets. The summer dataset comprises 578 observations, covering the months from April to July inclusively, while the winter dataset consists of 532 observations, encompassing the months from November to February inclusively. Each observation records the daily maximum values of five pollutants: Ozone, NO2, NO, SO2 and PM10. These datasets were also used in \cite{janssen2020k} to demonstrate the application of the spherical $k$-means clustering method to multivariate extremes.

 In Fig.\ \ref{summer1} and \ref{winter1}, following the same manner as in Fig.\ \ref{kmeanelbow}, the penalized ASW  is plotted against the number of clusters, where different curves correspond to different values of the tuning parameter $t$.  With the visual method described in Section \ref{sec:order}, we can identify orders as  $5$  (although $4$ seems to be a reasonable choice as well) for the summer data and  $3$  for the winter data respectively.
 These orders are similar to the choices $5$ for the summer data and $4$  for the winter data made in \cite{janssen2020k}  under the guidance of certain elbow plots (see \cite[Fig.\ 1]{janssen2020k}).   From the elbow plot in \cite[Fig.\ 1]{janssen2020k}, it seems that $k=3$ for the winter data is also plausible.  Recall also that here we use the spherical $k$-pc algorithm of \cite{fomichov2023spherical} while \cite{janssen2020k} used the spherical $k$-means.

    Following the method introduced in Section \ref{sec:fact order coef est}  with   $\|\cdot \|_{(s)}=\|\cdot \|_{(r)}=\|\cdot \|_{2}$ and $\alpha=2$, we compute the factor coefficient matrix $B$   for the two datasets; see  Fig.\ \ref{summer2} and \ref{winter2}.   
  For the summer data in Fig.\ \ref{summer2}, whose order has been chosen as $5$, the factor coordinates concentrate sharply near coordinate directions, which to an extent indicates an asymptotic (or say extremal) independence (see, e.g., \cite[Chapter 8]{beirlant2006statistics})  of the pollutants.  
   In contrast, for the winter data in Fig.\ \ref{winter2},  whose order has been chosen as $3$, a factor indicates a group of concomitant extremes consisting of NO, NO2 and PM10. The asymptotic dependence between these 3 variables has been observed in \cite{heffernan2004conditional}. This serves as a support for our order choice which has placed these 3 variables in the same concomitant group.



\begin{figure}[h]
    \centering
    \begin{minipage}{0.45\textwidth}
        \centering
        \includegraphics[width=\textwidth]{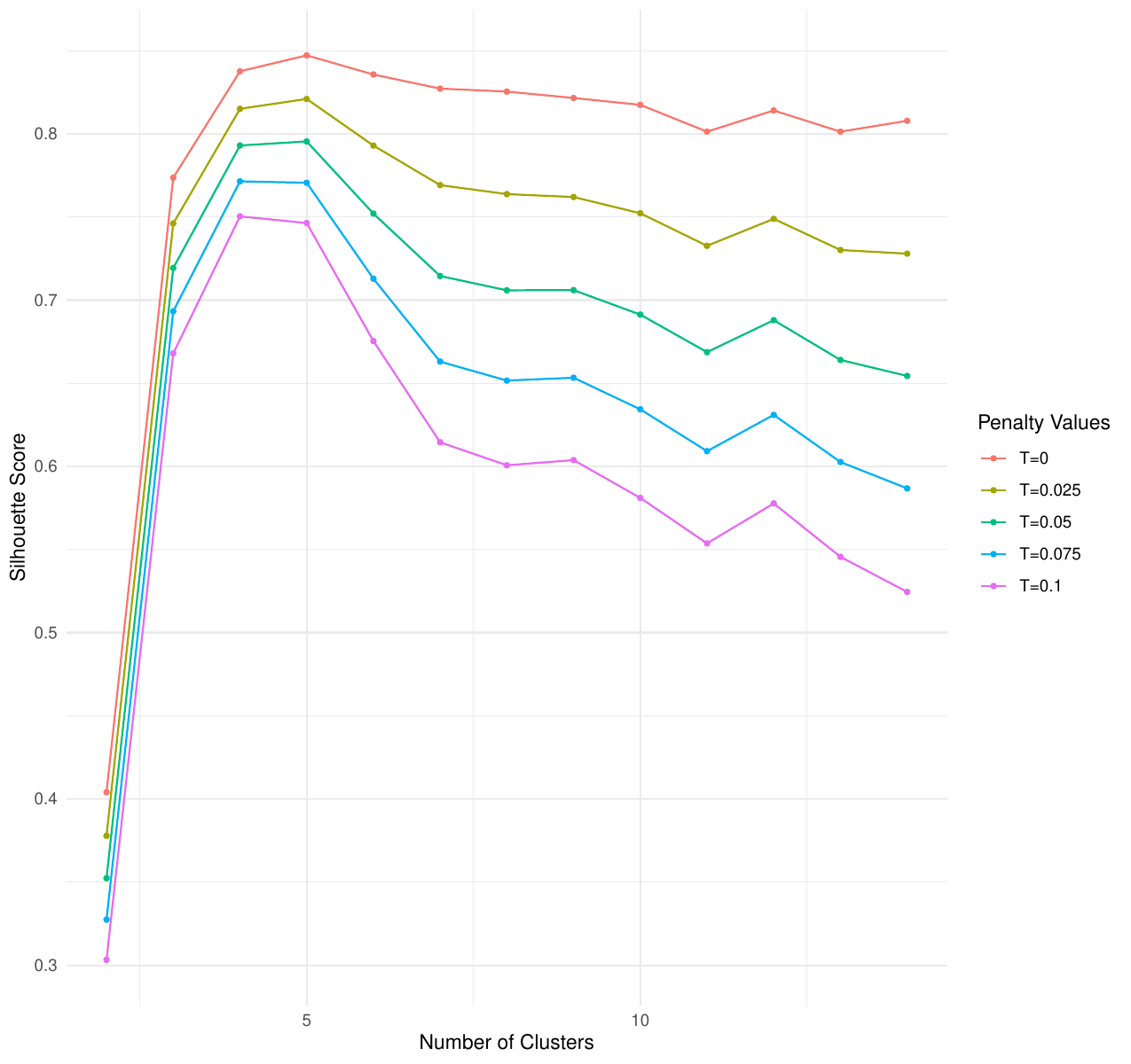} 
        \caption{Penalized ASW curves for summer air pollution data (top $10\%$ norms).}  \label{summer1}
    \end{minipage}
    \hfill
    \begin{minipage}{0.4\textwidth}
        \centering
        \includegraphics[width=\textwidth]{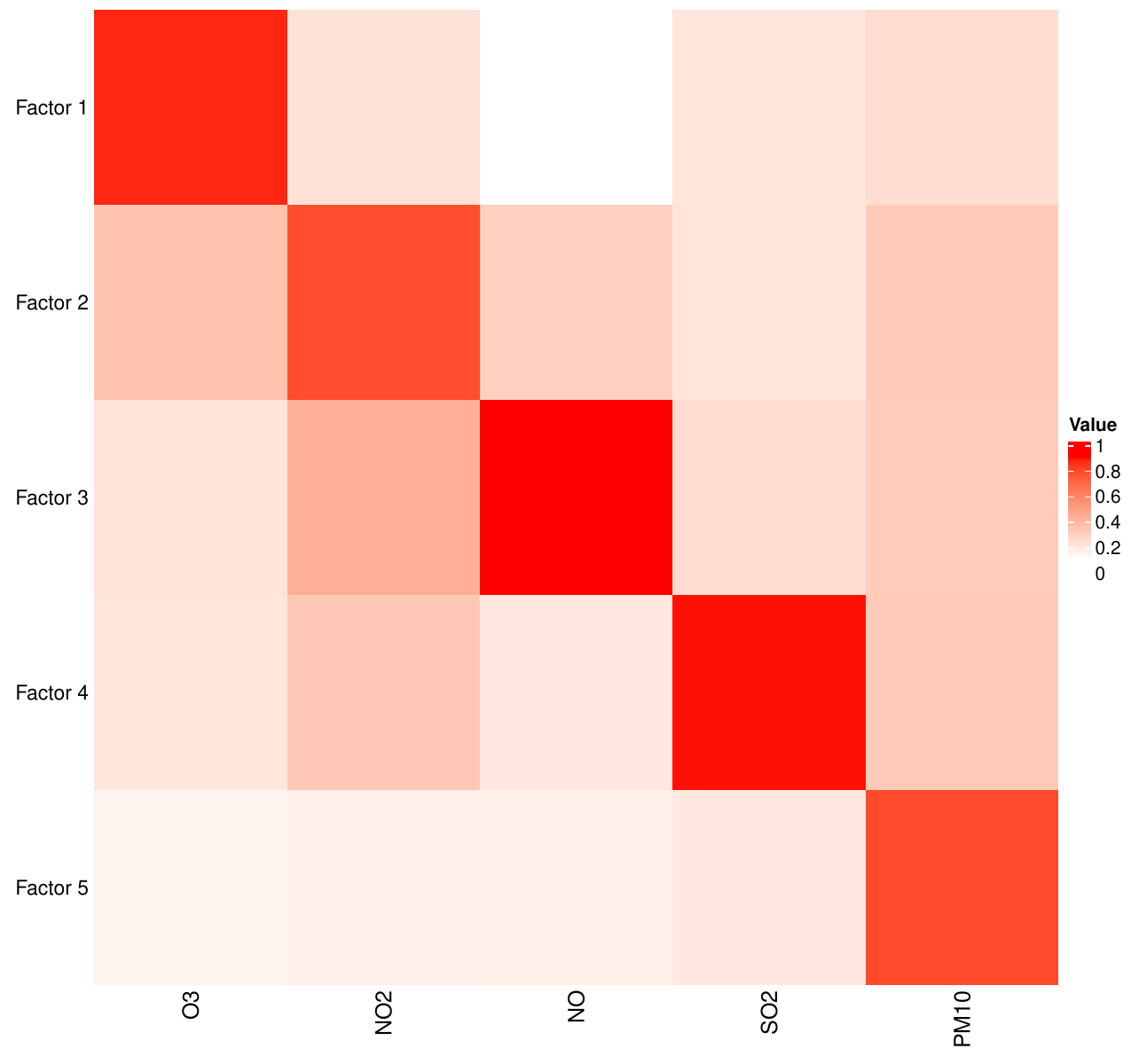} 
        \caption{Estimated $B^\top$ for summer 
       pollution data  (top $10\%$ norms).}\label{summer2}
    \end{minipage}
\end{figure}

\begin{figure}[h]
    \centering
    \begin{minipage}{0.45\textwidth}
        \centering
        \includegraphics[width=\textwidth]{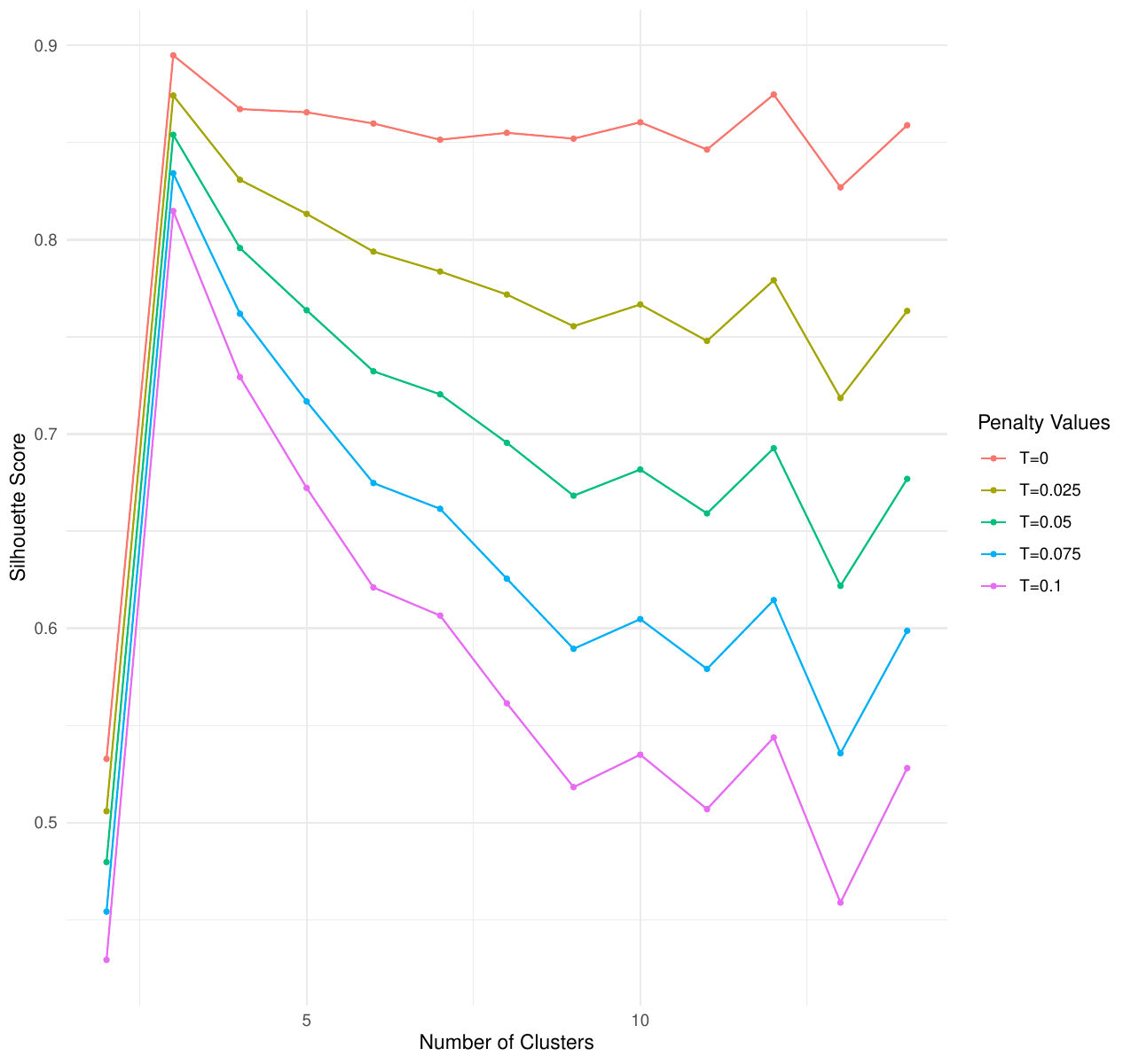} 
        \caption{Penalized ASW curves for winter air pollution data (top $10\%$ norms).} \label{winter1}
    \end{minipage}
    \hfill
    \begin{minipage}{0.4\textwidth}
        \centering
        \includegraphics[width=\textwidth]{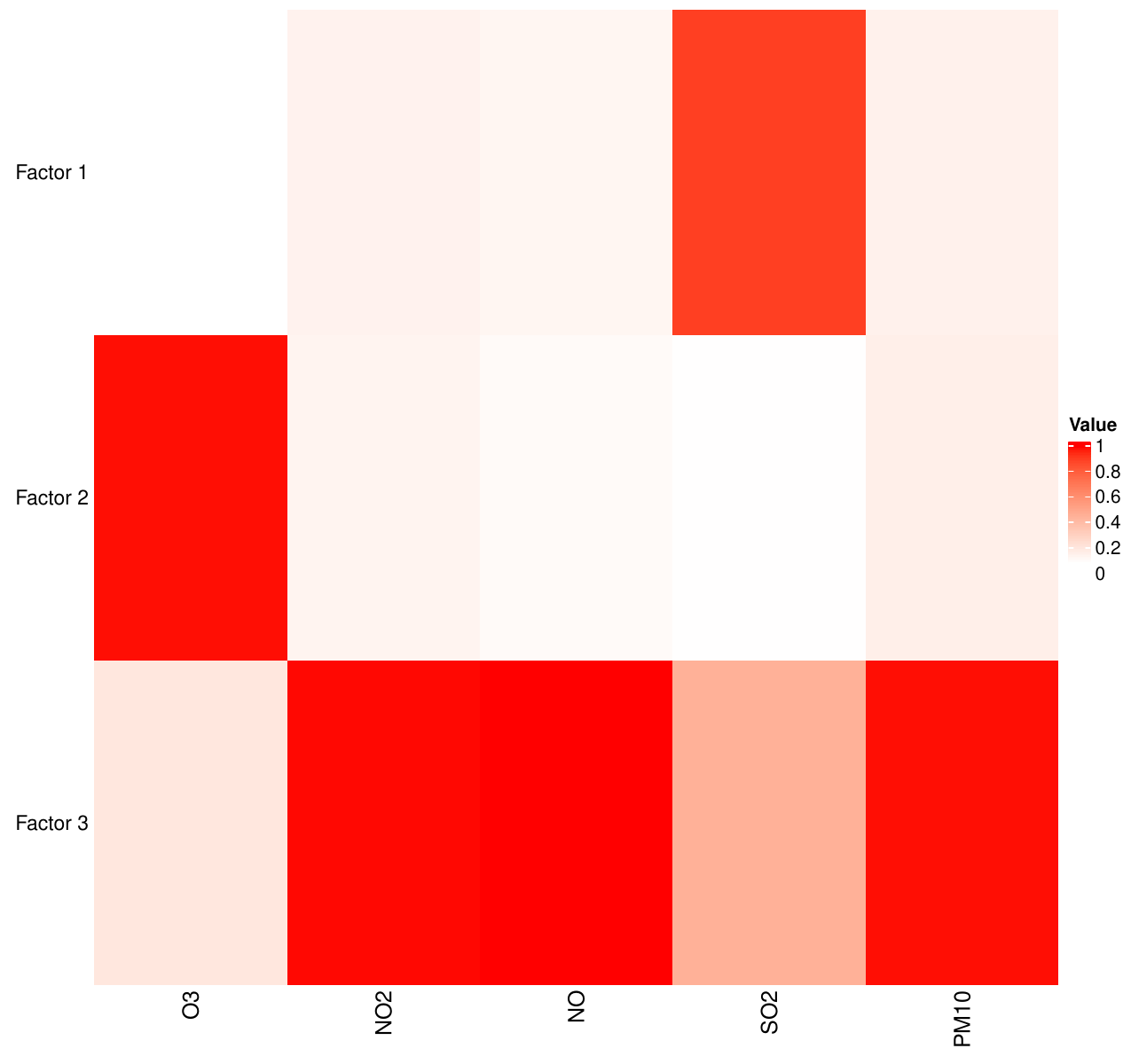} 
        \caption{Estimated $B^\top$ for winter 
       pollution data (top $10\%$ norms).}\label{winter2}
    \end{minipage}
\end{figure}


\clearpage

\subsubsection{River Discharge Data}\label{sec:river}

The river discharge data concerns the daily discharge rate of rivers in North America sourced from the Global Runoff Data Centre  \cite{grdc_portal}. The dataset comprises 16,386 daily records of discharge values from 13 stations spanning the period from December 1, 1976, to October 11, 2021. These 13 stations, shown in Table \ref{tab:my_label} and Fig.\ \ref{map}, are positioned along 5 rivers in America: Willamette River, Mississippi River, Williamson River, Hudson River, and Broad River.







 \begin{table}[h!]
    \caption{Clustering of 13 river discharge stations based   on concomitant Extremes.}
    \label{tab:my_label}
 
 \vskip-0.3cm\hrule

\smallskip
\centering\small
    \centering
    \begin{tabular}[b]{ccc}
      Station Name & River Name &   Factor (Cluster) Index  \\ 
      SALEM, OR	& WILLAMETTE RIVER  &  4   \\
    PORTLAND, OR	& WILLAMETTE RIVER & 4  \\
    HARRISBURG, OR	& WILLAMETTE RIVER &   4  \\
    ST.PAUL, MN	& MISSISSIPPI RIVER & 1  \\
    AITKIN, MN	& MISSISSIPPI RIVER & 1 \\
    THEBES, IL	& MISSISSIPPI RIVER &  6 \\
    CHESTER, IL	& MISSISSIPPI RIVER & 6 \\
    BELOW SPRAGUE RIVER NEAR CHILOQUIN, OR	& WILLIAMSON RIVER & 2 \\
    GREEN ISLAND, NY	& HUDSON RIVER & 5 \\
    FORT EDWARD, NY	& HUDSON RIVER & 5  \\
    NORTH CREEK, NY	& HUDSON RIVER & 5 \\
    NEAR CARLISLE, SC	& BROAD RIVER & 3 \\
    NEAR BELL, GA	& BROAD RIVER &  3 \\
    \end{tabular}
\hrule
\end{table}

\begin{figure}[h!]
    \centering
    \includegraphics[width=0.8\textwidth]{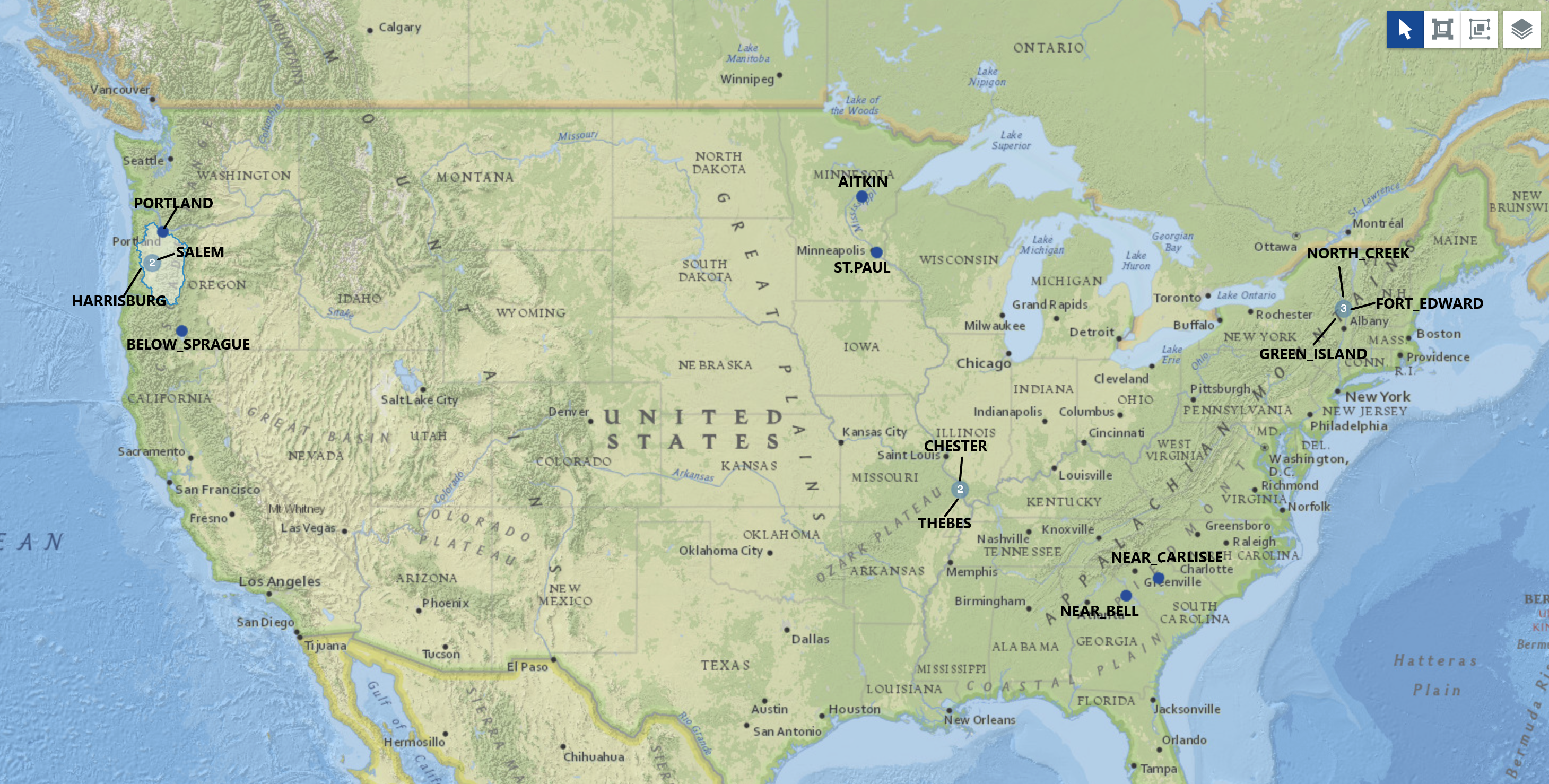}
    \caption{Geographical locations of the 13 river discharge stations.}
    \label{map}
\end{figure}

\begin{figure}[h]
    \centering
    \begin{minipage}{0.45\textwidth}
        \centering
        \includegraphics[width=\textwidth]{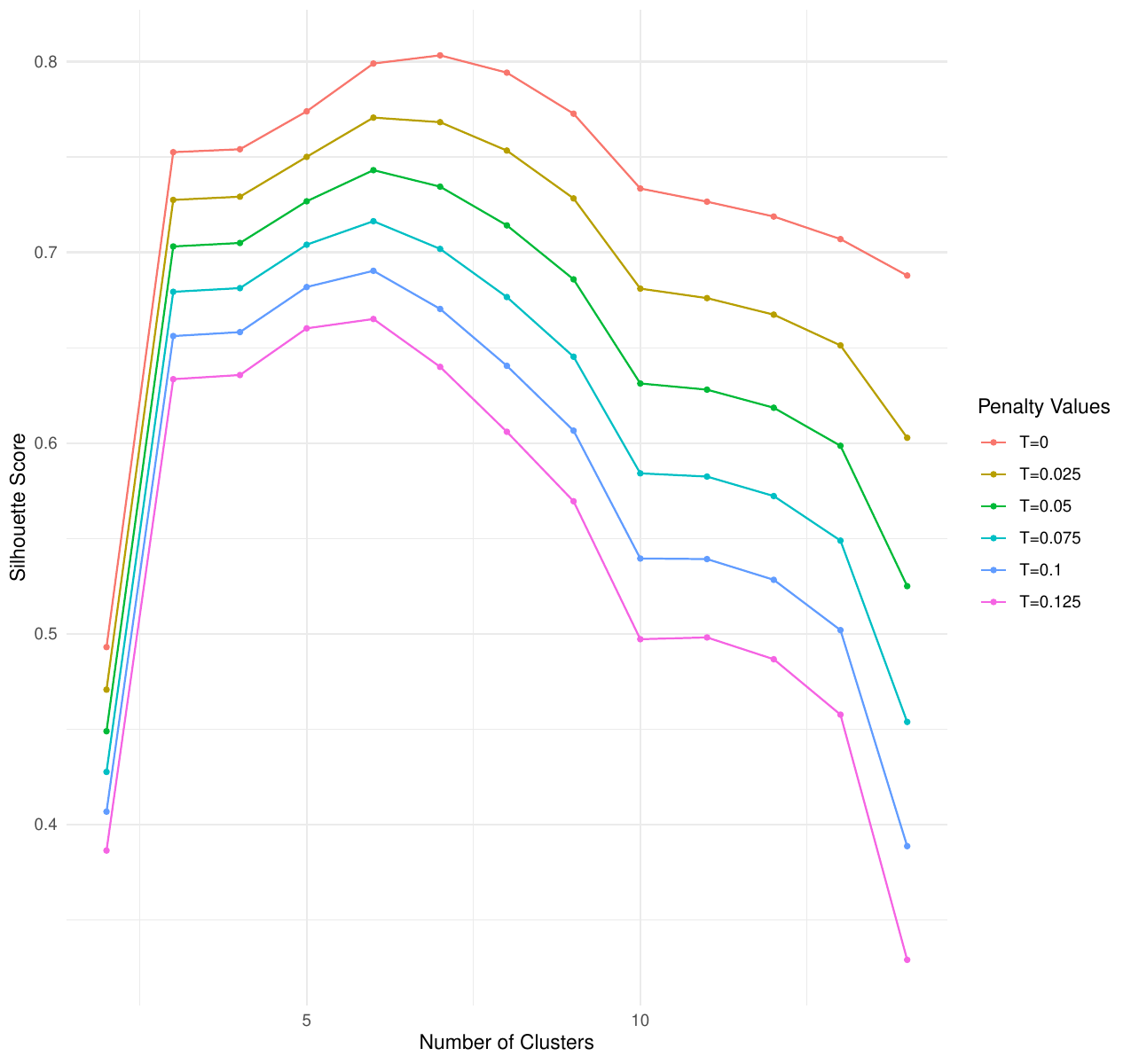} 
        \caption{Penalized ASW curves for river discharge data (top $10\%$ norms).}\label{silgrdc}
    \end{minipage}
    \hfill
    \begin{minipage}{0.45\textwidth}
        \centering
        \includegraphics[width=\textwidth]{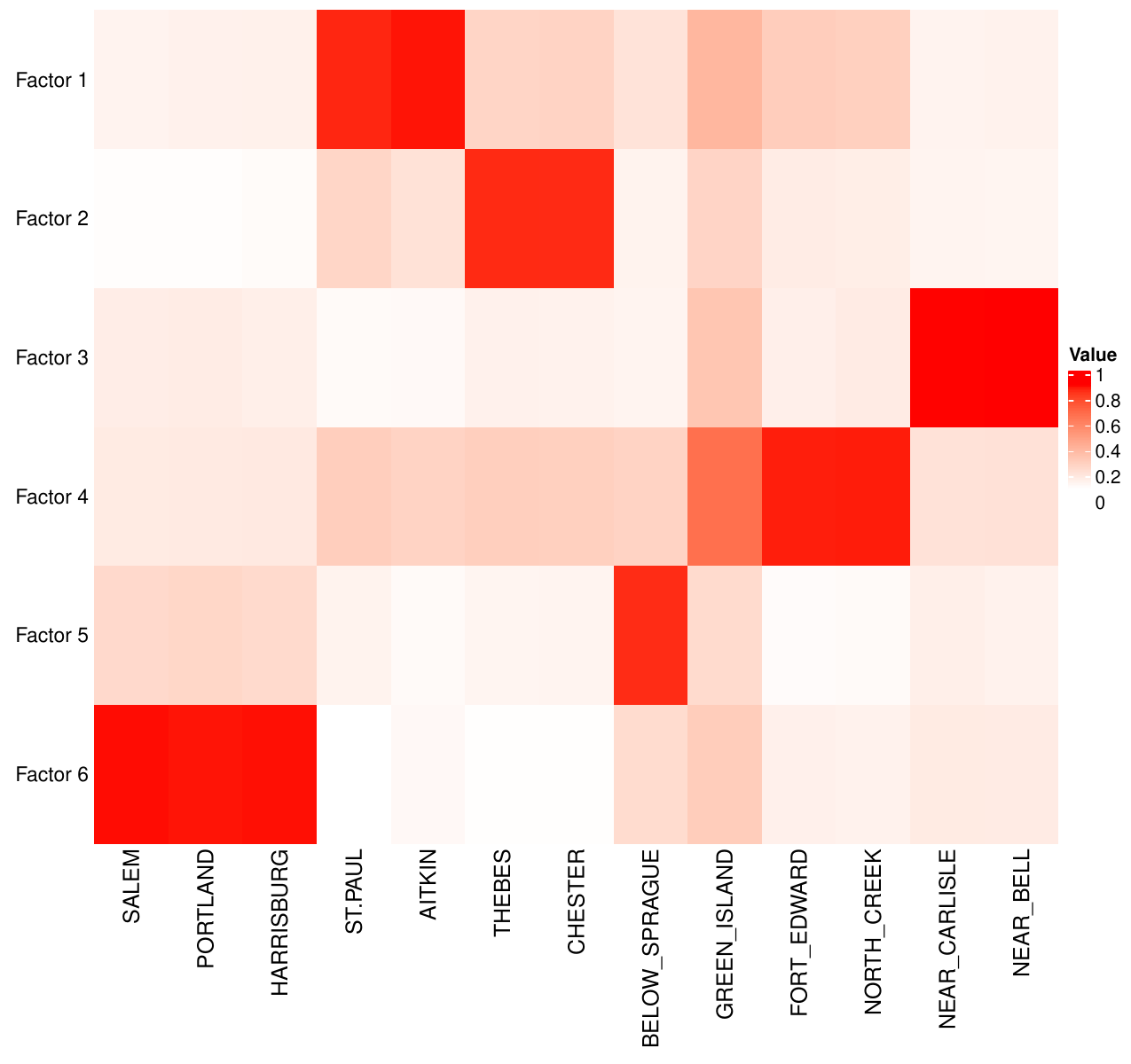} 
        \caption{Estimated $B^\top$ for river discharge data (top $10\%$ norms).}  \label{cgrdc}
    \end{minipage}
\end{figure}


As in the previous example,  Fig.\ \ref{silgrdc} presents the penalized ASW    curves,    from which we found that 6 seems to be an appropriate choice of order. Fig.\ \ref{cgrdc} illustrates the  factor matrix $B$ converted from the spectral estimation  following the method in Section \ref{sec:fact order coef est} with $\|\cdot \|_{(s)}=\|\cdot \|_{(r)}=\|\cdot \|_{2}$ and $\alpha=2$. In addition, for each row of the matrix $B$, we find to which factor index (the same as the cluster index in Fig.\ \ref{cgrdc}) the largest value  corresponds. We include these factor indices in the last column of Table \ref{tab:my_label}, which can be viewed roughly as markings of groups of concomitant extremes. {The results align well with the expected geographical context:} These 6 groups are in good accordance with the geographical context: most of the stations located along the same river are found in the same group, with the only exception of the 4 stations along the Mississippi River. The further division of these 4 stations into 2 groups may be easily justified by the large geographical distance  between the 2 groups: one group located in Minnesota (MN) and the other located in Illinois (IL).


 {In the supplementary material,  regarding the real data demonstrations presented above, we have included more results with different choices of extremal subsample sizes, or say, top norm percentages. For the air pollution data,  the results are relatively stable when the percentage varies. For the river discharge data, when the percentage is decreased to $5\%\sim 1\%$, the penalized ASW criterion starts to favor $k=5$. In this case, the station BELOW SPRAGUE RIVER NEAR CHILOQUIN labeled as group 2 in Table \ref{tab:my_label} is merged to the concomitant group labeled as 4 formed by SALEM, PORTLAND and HARRISBURG, all of which are geographically  close (see Fig.\ \ref{map}).    
}
 
\section{{Summary and Discussion}}\label{sec:Summary}
Following recent developments in literature, we explore the estimation of multivariate extreme models with a discrete spectral measure by employing spherical clustering techniques. Our main contribution is a method for selecting the order, that is, the number of clusters, which consistently identifies the true number of spectral atoms. Specifically, we introduce an additional penalty term to the well-known simplified average silhouette width, which penalizes small cluster sizes and small dissimilarities between cluster centers.  As a by-product, we offer an approach to determining the order of a max-linear factor model. This method is not only straightforward to implement, but also performs effectively in practical applications.
We also conduct a large-deviation-type analysis for estimating the discrete spectral measure through clustering methods. This analysis provides insights into the convergence quality of clustering-based estimation for multivariate extremes. In addition, we show how these estimates can be applied to the parameter estimation of heavy-tailed factor models.

At last, we point out several potential future work directions. First, the tuning parameter $t$ for the penalty \eqref{eq:pen} is  chosen with the help of visual inspection. It is of interest to explore the data-driven method to guide the choice of $t$, which may require a refined understanding of the consistency result in Theorem \ref{Thm:sil cons}. In addition, it may be worth exploring alternative clustering assessment criteria to ASW in \eqref{eq:ASW}. Some preliminary experiments show that the cross-validation method based on algorithmic instability of \cite{wang2010consistent} might also be a competent criterion in the context of clustering multivariate extremes.  It would also be desirable to develop a method for choosing the threshold $l_n$ in \eqref{eq:W_n} in the context of clustering multivariate extremes.  The recent work \cite{wan2019threshold} might be relevant in this regard.  
Furthermore, it may be of interest to enhance the large-deviation-type analysis in Section \ref{sec:rate} to a finite-sample probabilistic guarantee bound.   

\medskip

\noindent\textbf{Acknowledgment.} We thank two anonymous reviewers and the associate editor for their helpful comments that lead to substantial improvements of the article.

\appendix
\section*{Appendix: Supplementary Material}

In this supplementary material, we provide results in addition to \cite{deng2024estimation}. In particular, additional simulations studies and real data demonstrations with different choices of extremal subsample thresholds are included.

\section{Simulation studies}
The results here correspond to Section 6.1 of \cite{deng2024estimation}. We follow the same simulation setup there and the datasets are generated separately independently. The spherical $k$-means and spherical $k$-principal component methods are applied using varying proportions of data with top norms: 1\%, 5\%, 10\%, and 15\%, corresponding to effective sample sizes of 100, 500, 1000, and 1500, respectively. The results align with the \cite{deng2024estimation}, showing that the proposed penalization has a significant bias correction effect on the selection of the optimal \( k \). 
\subsection{Case: $d=4,k=2$}
\begin{figure}[h]
    \centering
    \includegraphics[width=0.55\textwidth]{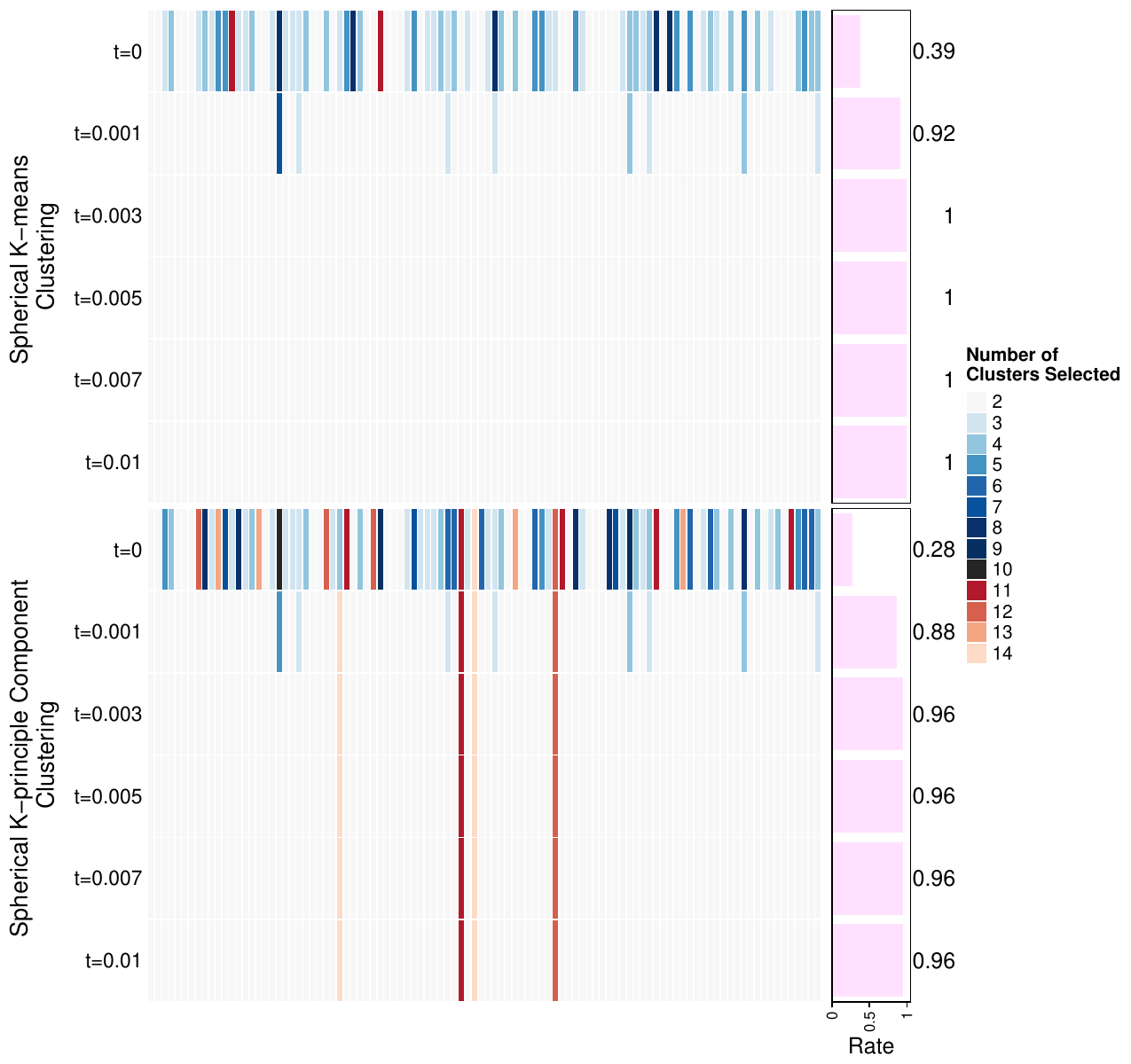}
    \caption{Simulation result visualization for the setup $d=4,k=2$ in selection of 1\% data}
    \label{Pd4k2_1}
\end{figure}

\begin{figure}[h]
    \centering
    \includegraphics[width=0.55\textwidth]{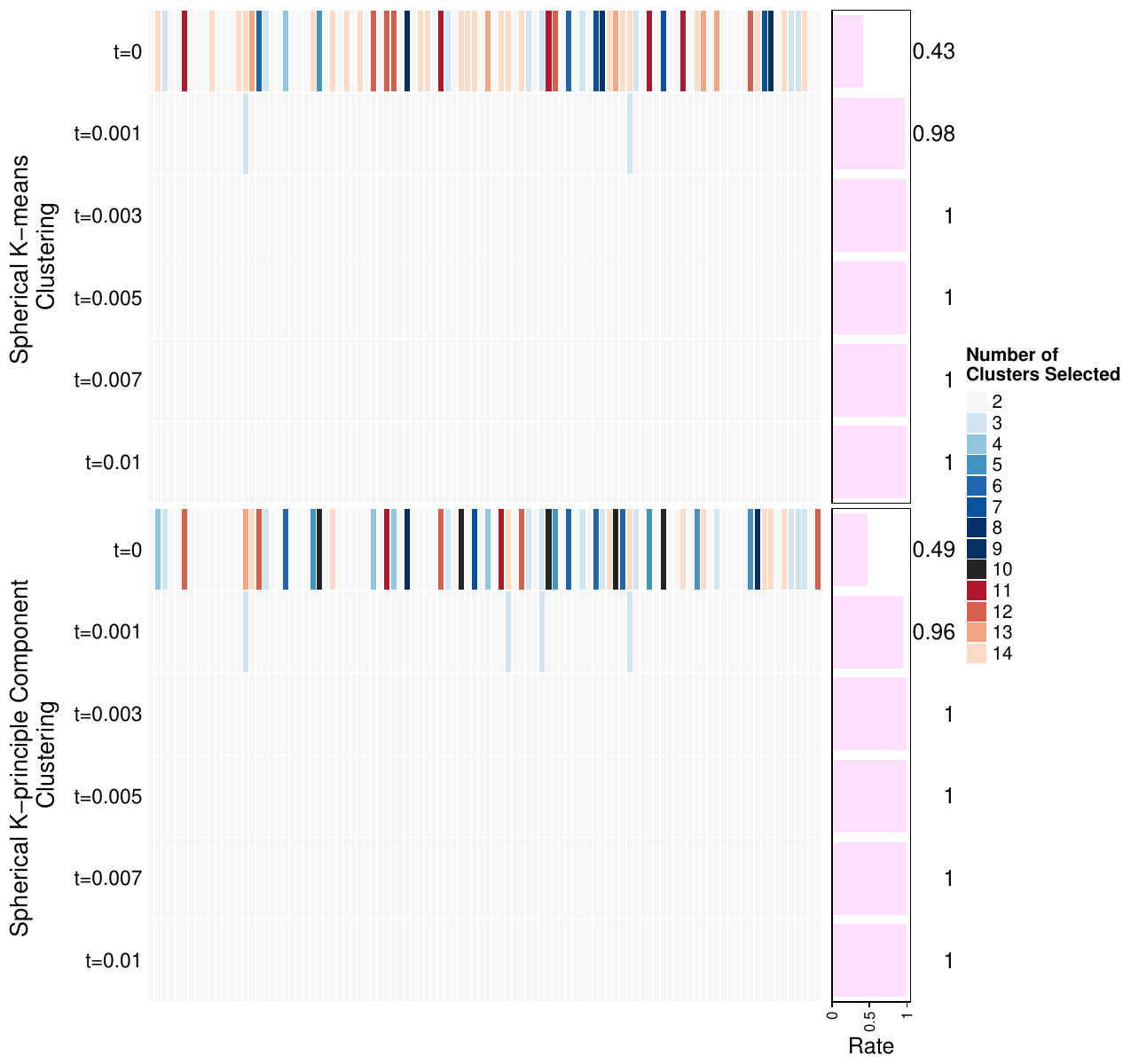}
    \caption{Simulation result visualization for the setup $d=4,k=2$ in selection of 5\% data}
    \label{Pd4k2_5}
\end{figure}

\begin{figure}[h]
    \centering
    \includegraphics[width=0.55\textwidth]{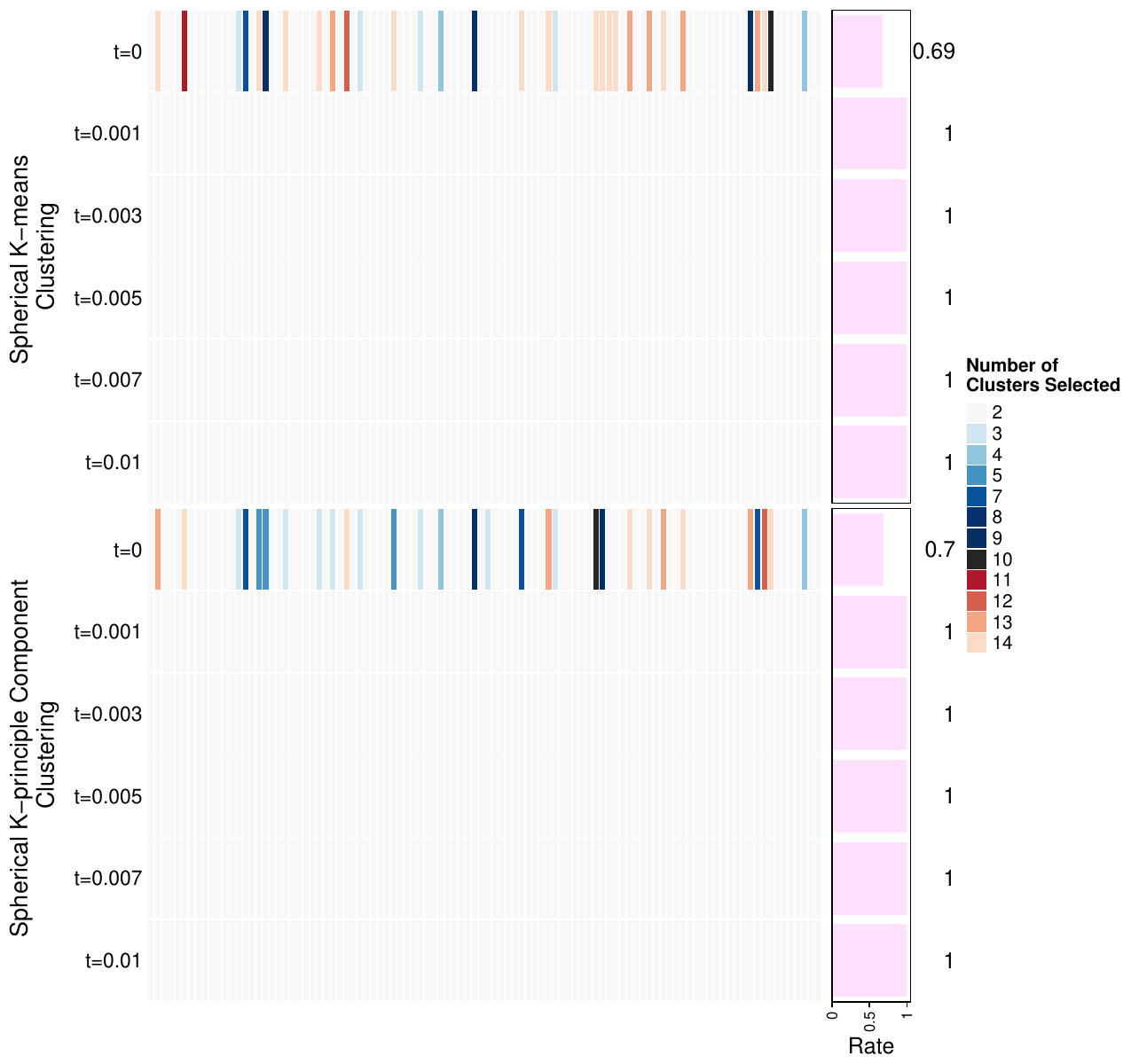}
    \caption{Simulation result visualization for the setup $d=4,k=2$ in selection of 10\% data}
    \label{Pd4k2_10}
\end{figure}

\begin{figure}[h]
    \centering
    \includegraphics[width=0.55\textwidth]{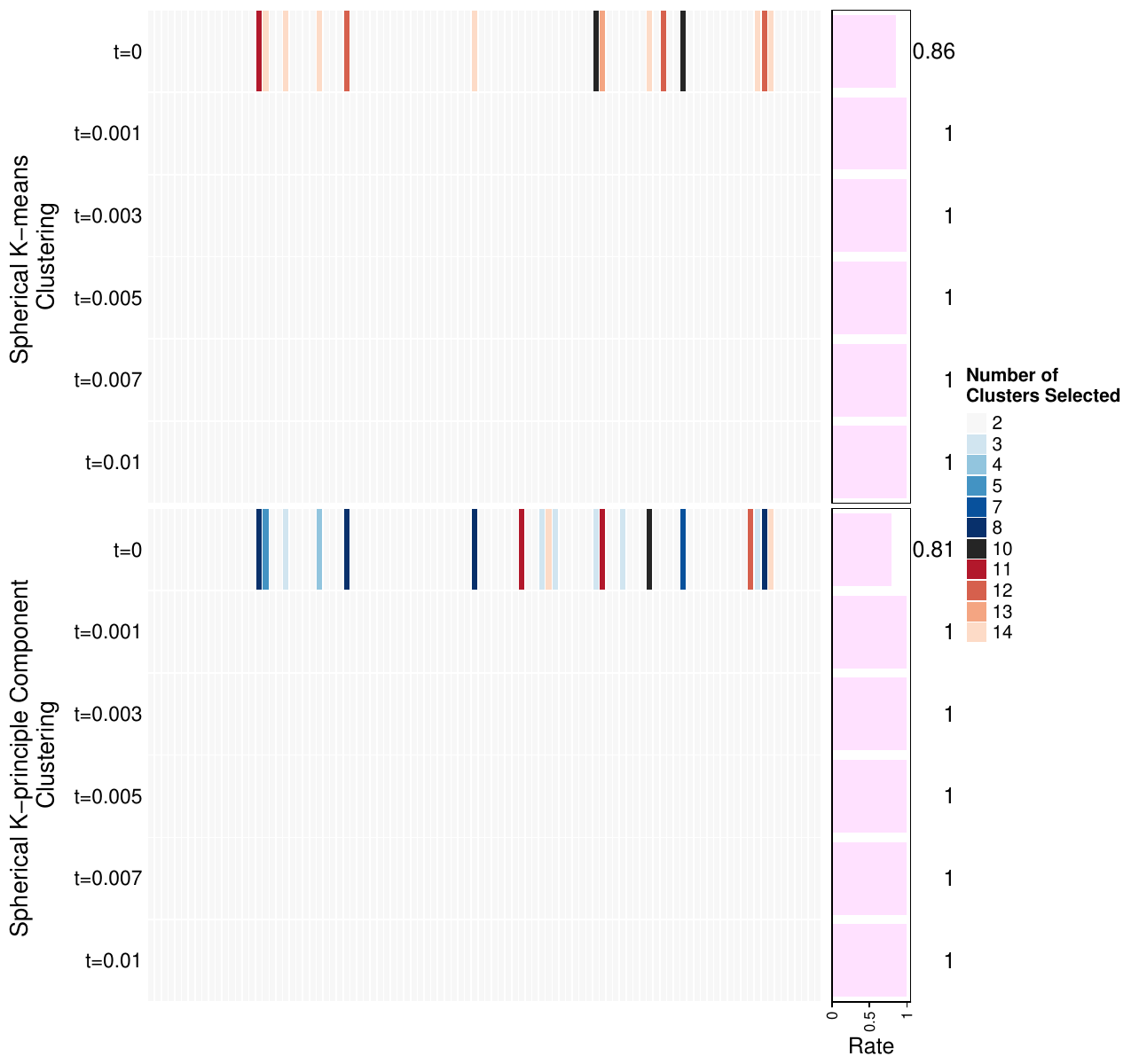}
    \caption{Simulation result visualization for the setup $d=4,k=2$ in selection of 15\% data}
    \label{Pd4k2_15}
\end{figure}
\clearpage

\subsection{Case: $d=4,k=6$}
\begin{figure}[h]
    \centering
    \includegraphics[width=0.55\textwidth]{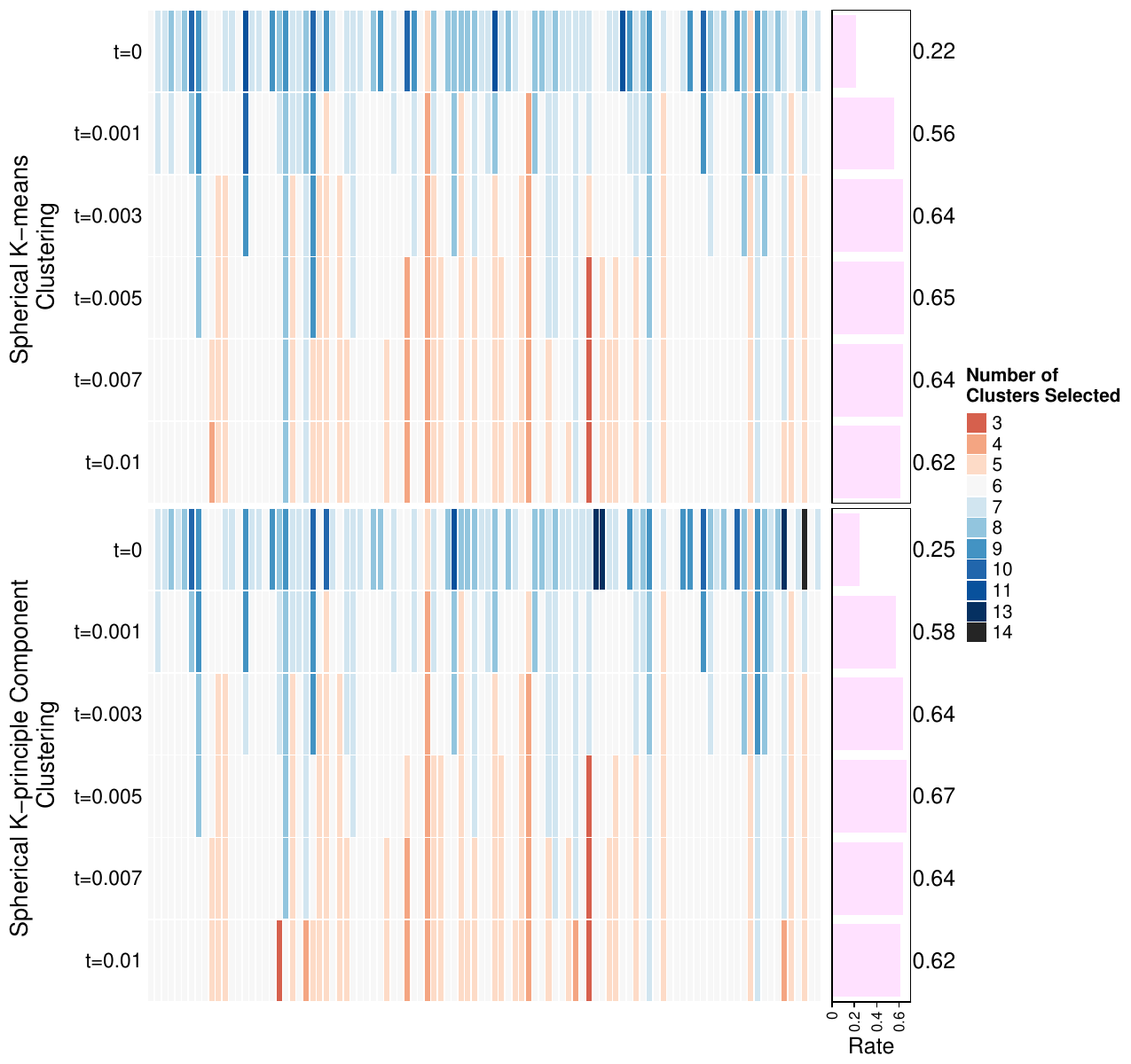}
    \caption{Simulation result visualization for the setup $d=4,k=6$ in selection of 1\% data}
    \label{Pd4k6_1}
\end{figure}

\begin{figure}[h]
    \centering
    \includegraphics[width=0.55\textwidth]{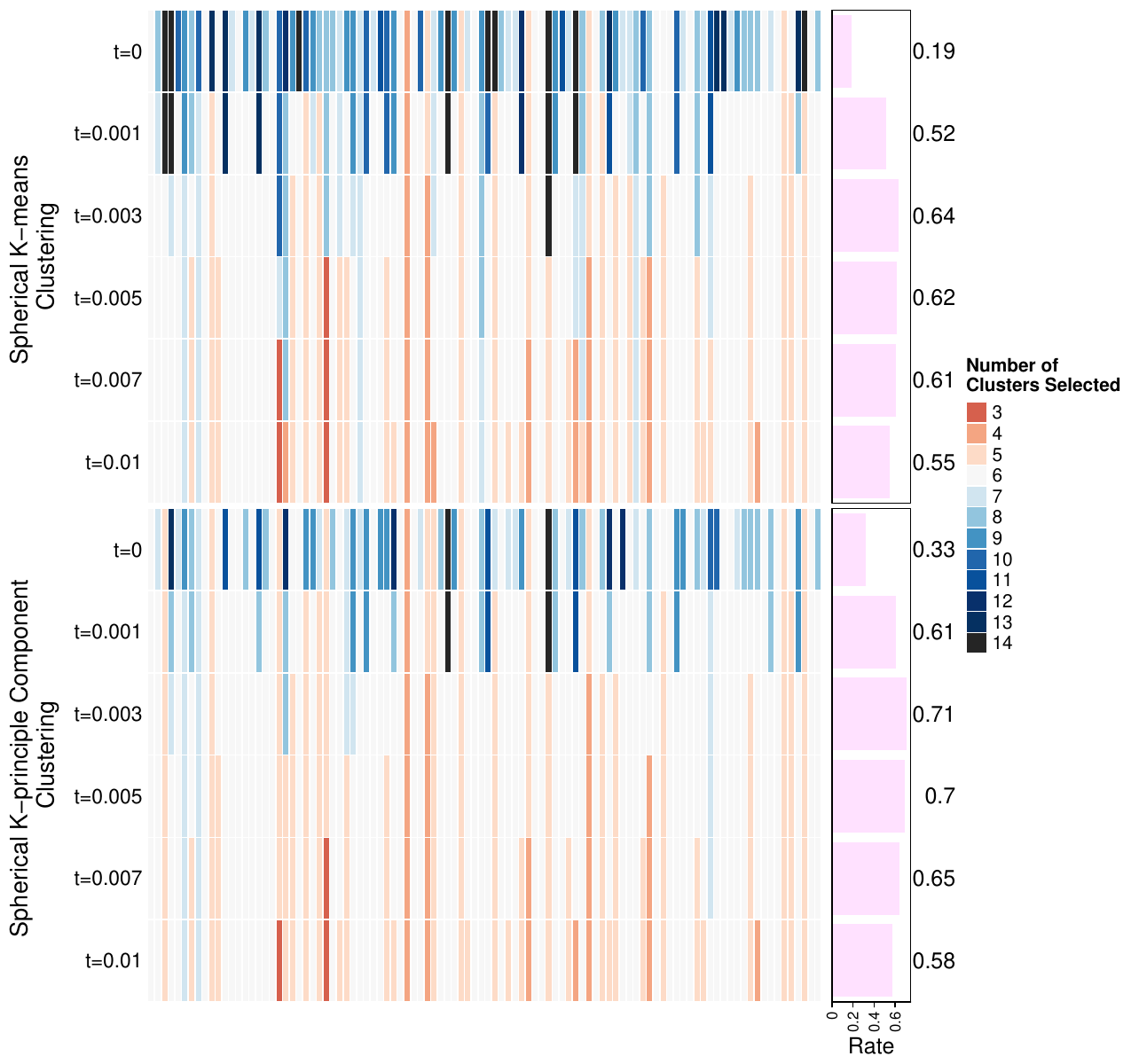}
    \caption{Simulation result visualization for the setup $d=4,k=6$ in selection of 5\% data}
    \label{Pd4k6_5}
\end{figure}

\begin{figure}[h]
    \centering
    \includegraphics[width=0.55\textwidth]{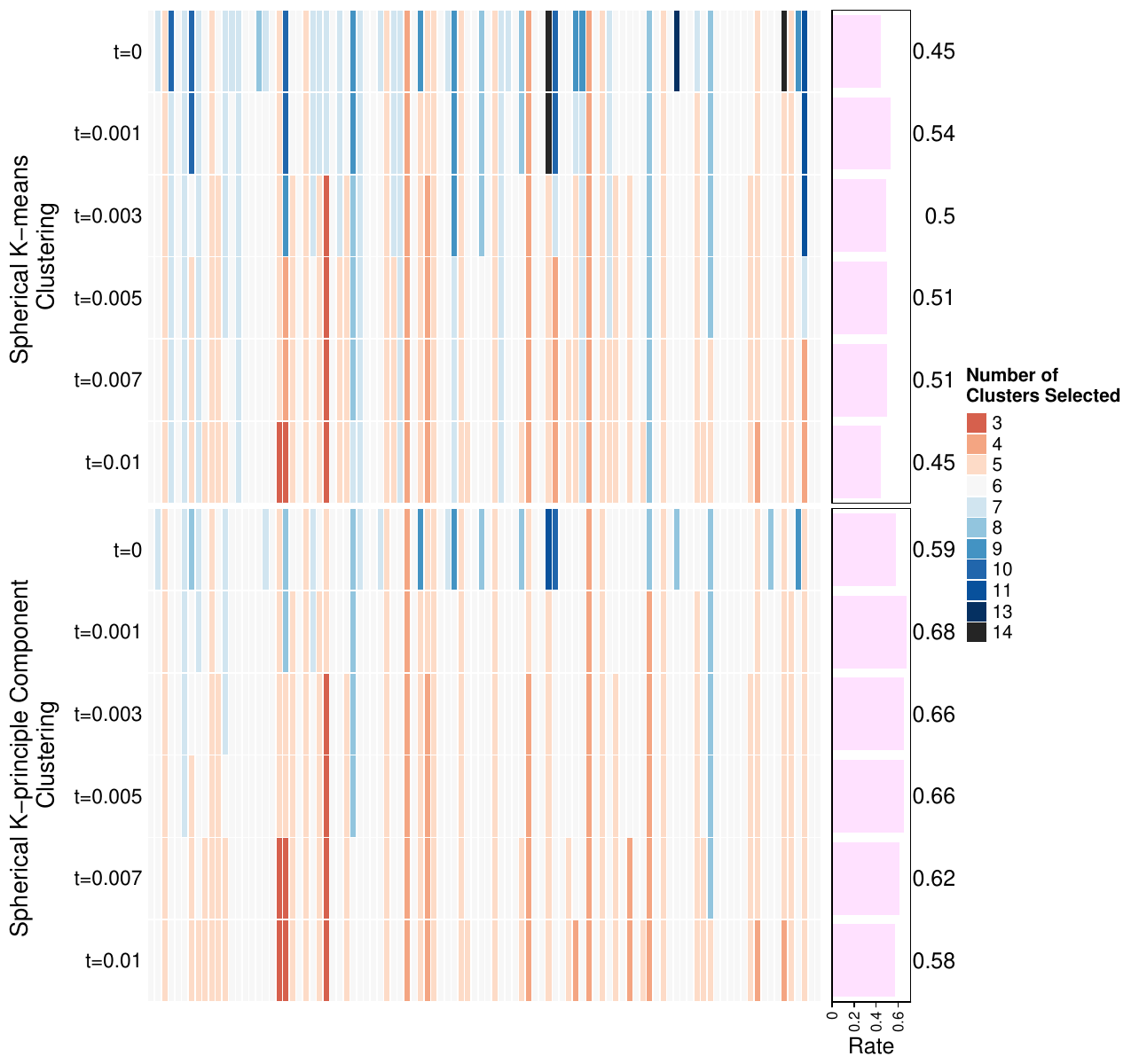}
    \caption{Simulation result visualization for the setup $d=4,k=6$ in selection of 10\% data}
    \label{Pd4k6_10}
\end{figure}

\begin{figure}[h]
    \centering
    \includegraphics[width=0.55\textwidth]{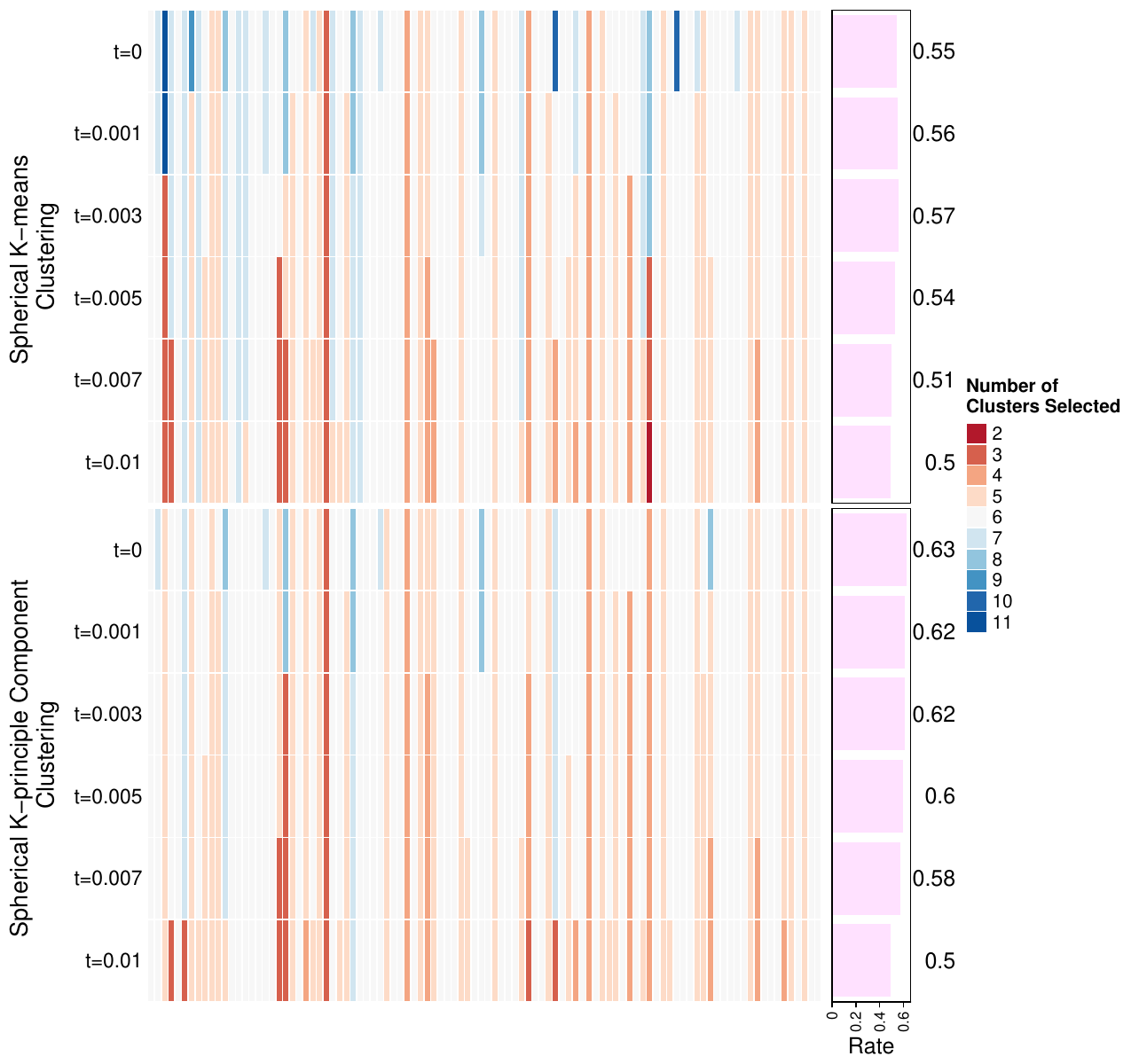}
    \caption{Simulation result visualization for the setup $d=4,k=6$ in selection of 15\% data}
    \label{Pd4k6_15}
\end{figure}
\clearpage

\subsection{Case: $d=6,k=6$:}
\begin{figure}[h]
    \centering
    \includegraphics[width=0.55\textwidth]{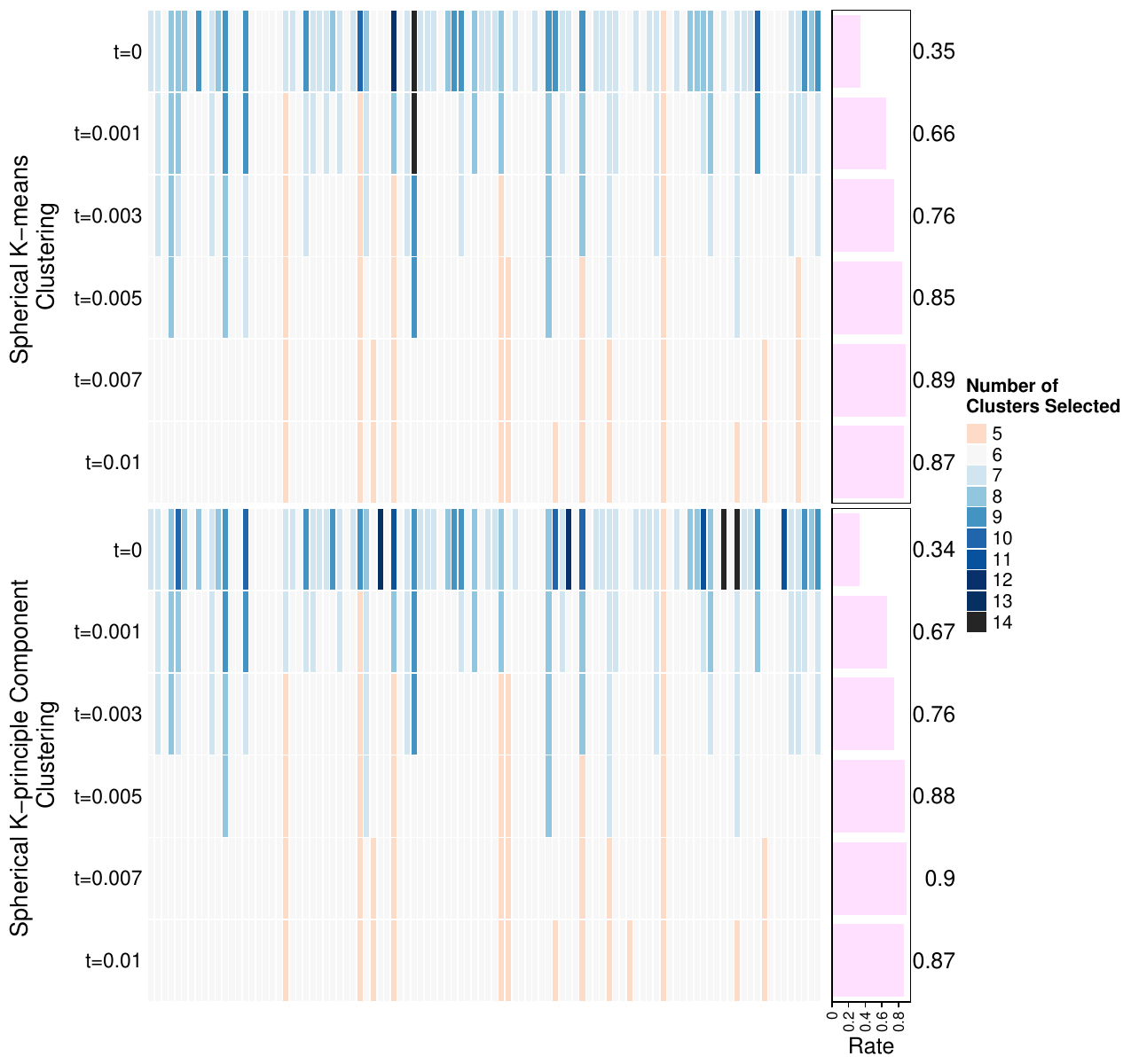}
    \caption{Simulation result visualization for the setup $d=6,k=6$ in selection of 1\% data}
    \label{Pd6k6_1}
\end{figure}

\begin{figure}[h]
    \centering
    \includegraphics[width=0.55\textwidth]{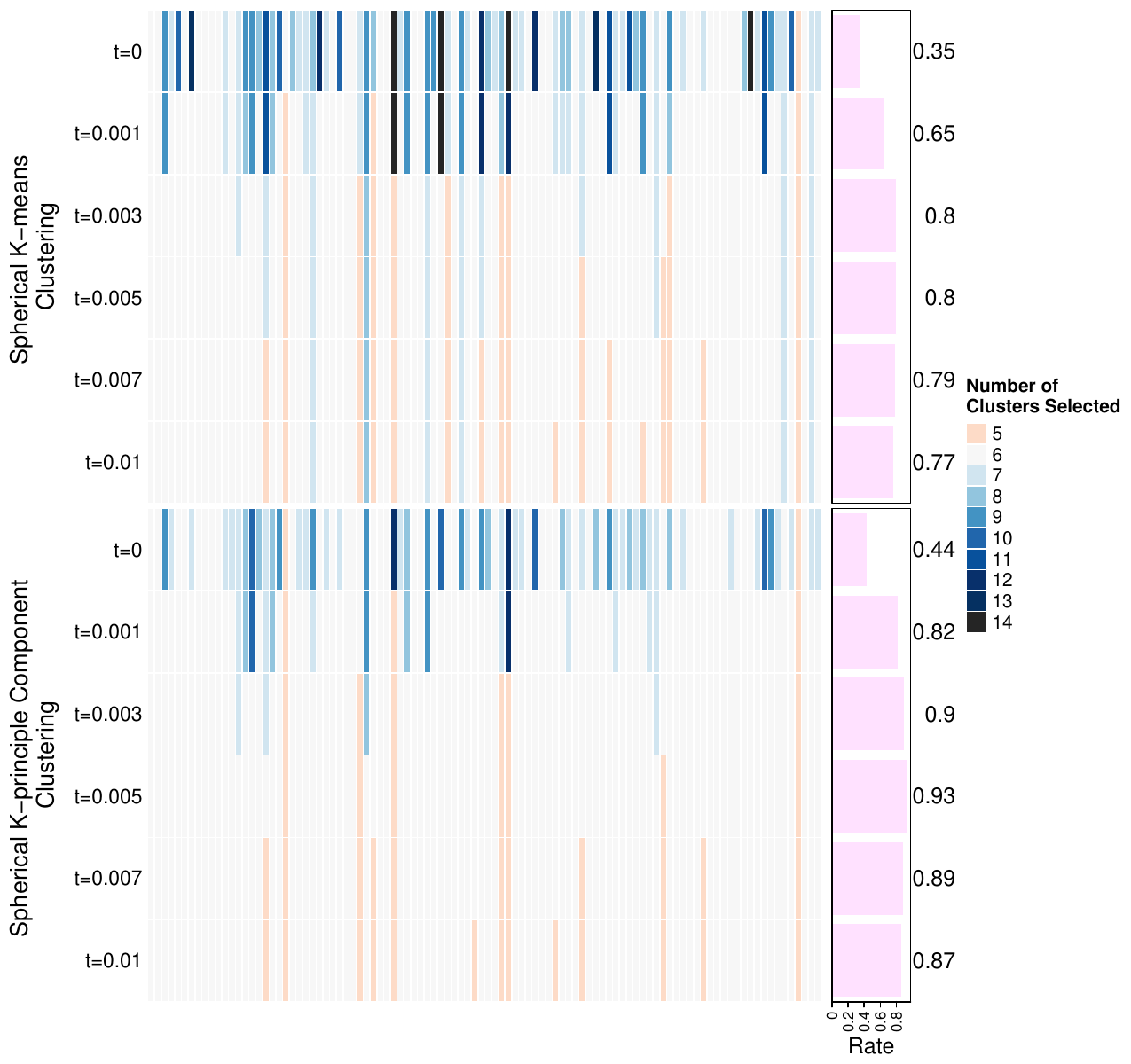}
    \caption{Simulation result visualization for the setup $d=6,k=6$ in selection of 5\% data}
    \label{Pd6k6_5}
\end{figure}

\begin{figure}[h]
    \centering
    \includegraphics[width=0.55\textwidth]{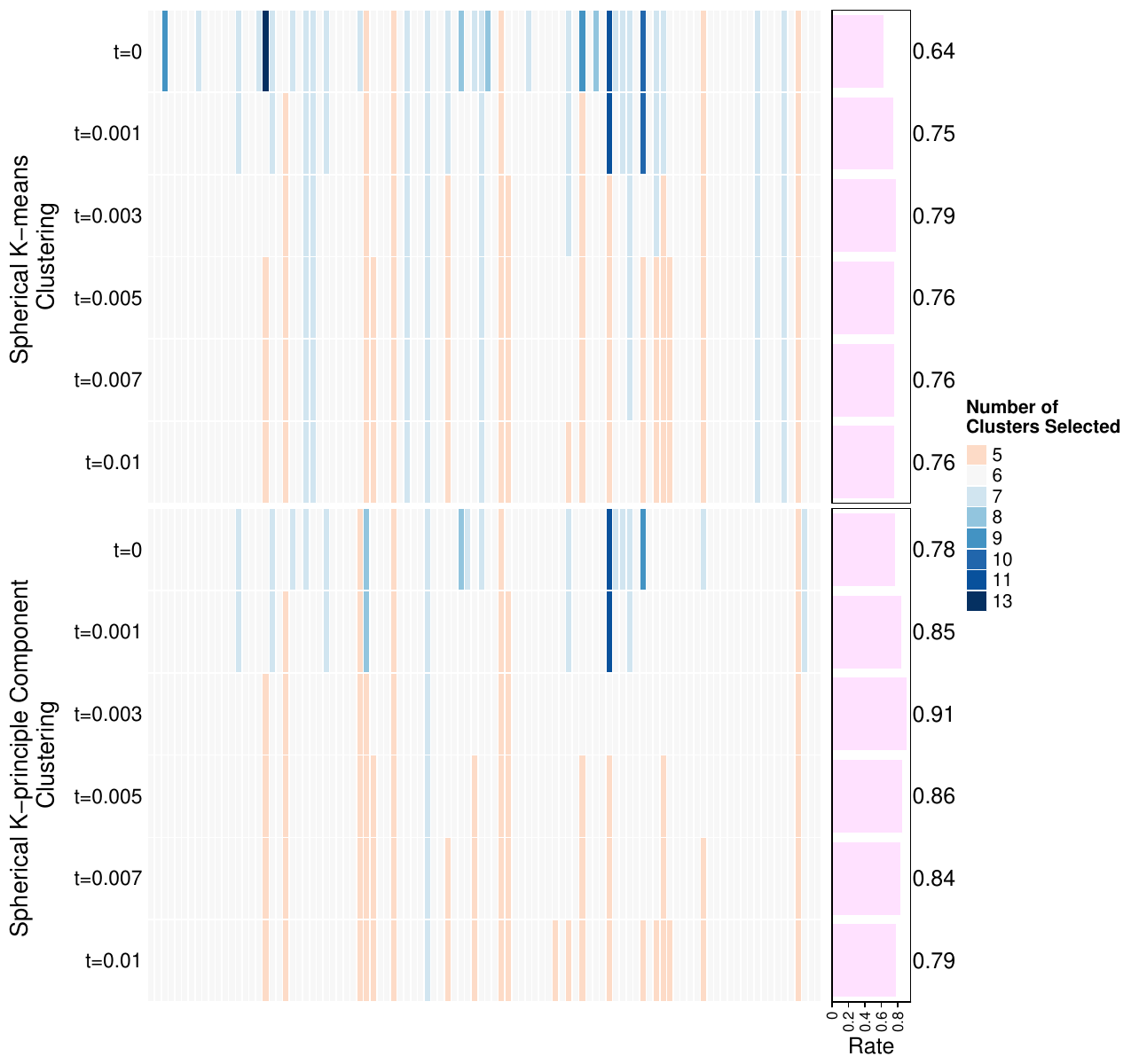}
    \caption{Simulation result visualization for the setup $d=6,k=6$ in selection of 10\% data}
    \label{Pd6k6_10}
\end{figure}

\begin{figure}[h]
    \centering
    \includegraphics[width=0.55\textwidth]{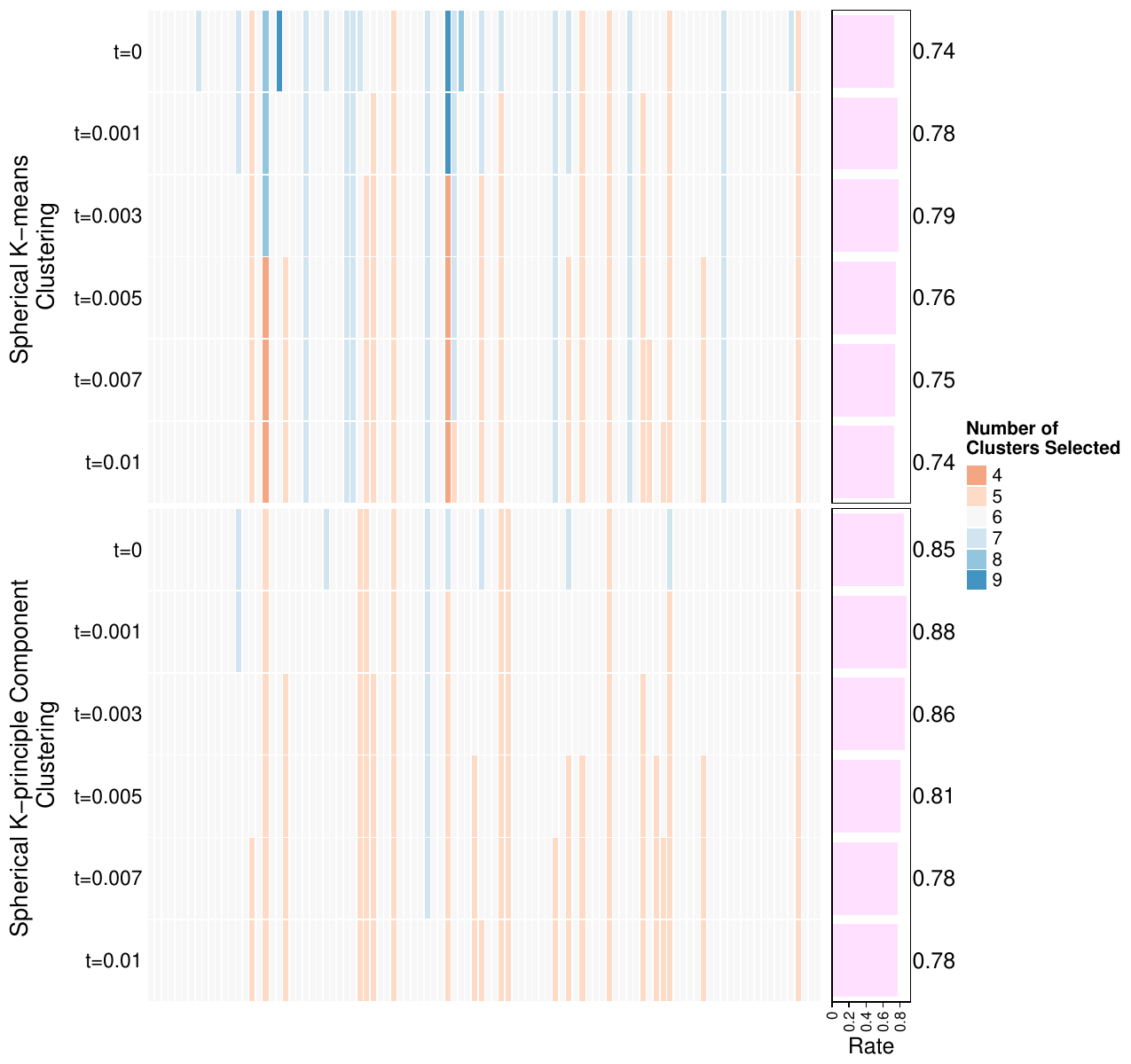}
    \caption{Simulation result visualization for the setup $d=6,k=6$ in selection of 15\% data}
    \label{Pd6k6_15}
\end{figure}

\clearpage

\subsection{Case: $d=10,k=6$:}
\begin{figure}[h]
    \centering
    \includegraphics[width=0.55\textwidth]{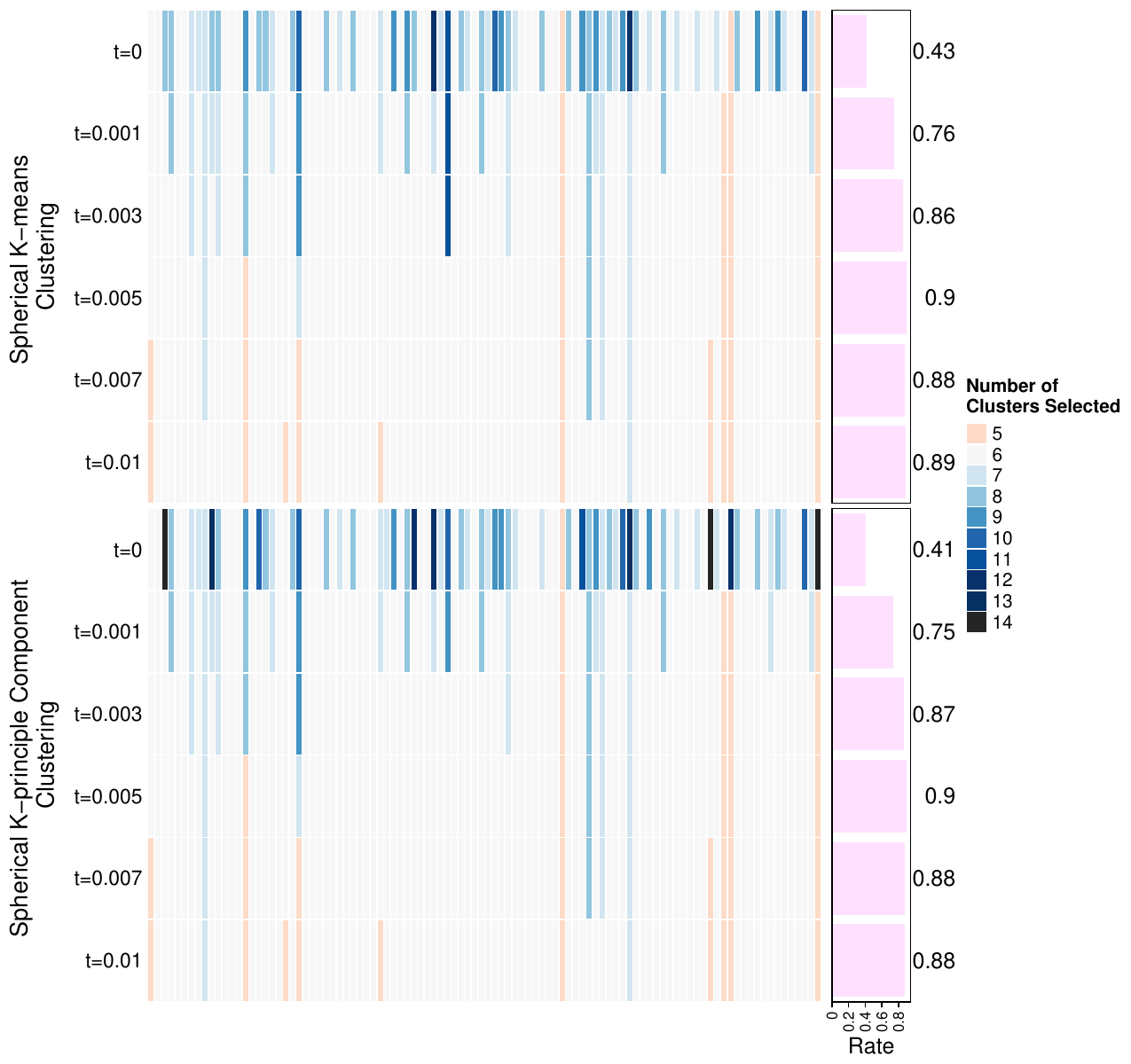}
    \caption{Simulation result visualization for the setup $d=10,k=6$ in selection of 1\% data}
    \label{Pd10k6_1}
\end{figure}

\begin{figure}[h]
    \centering
    \includegraphics[width=0.55\textwidth]{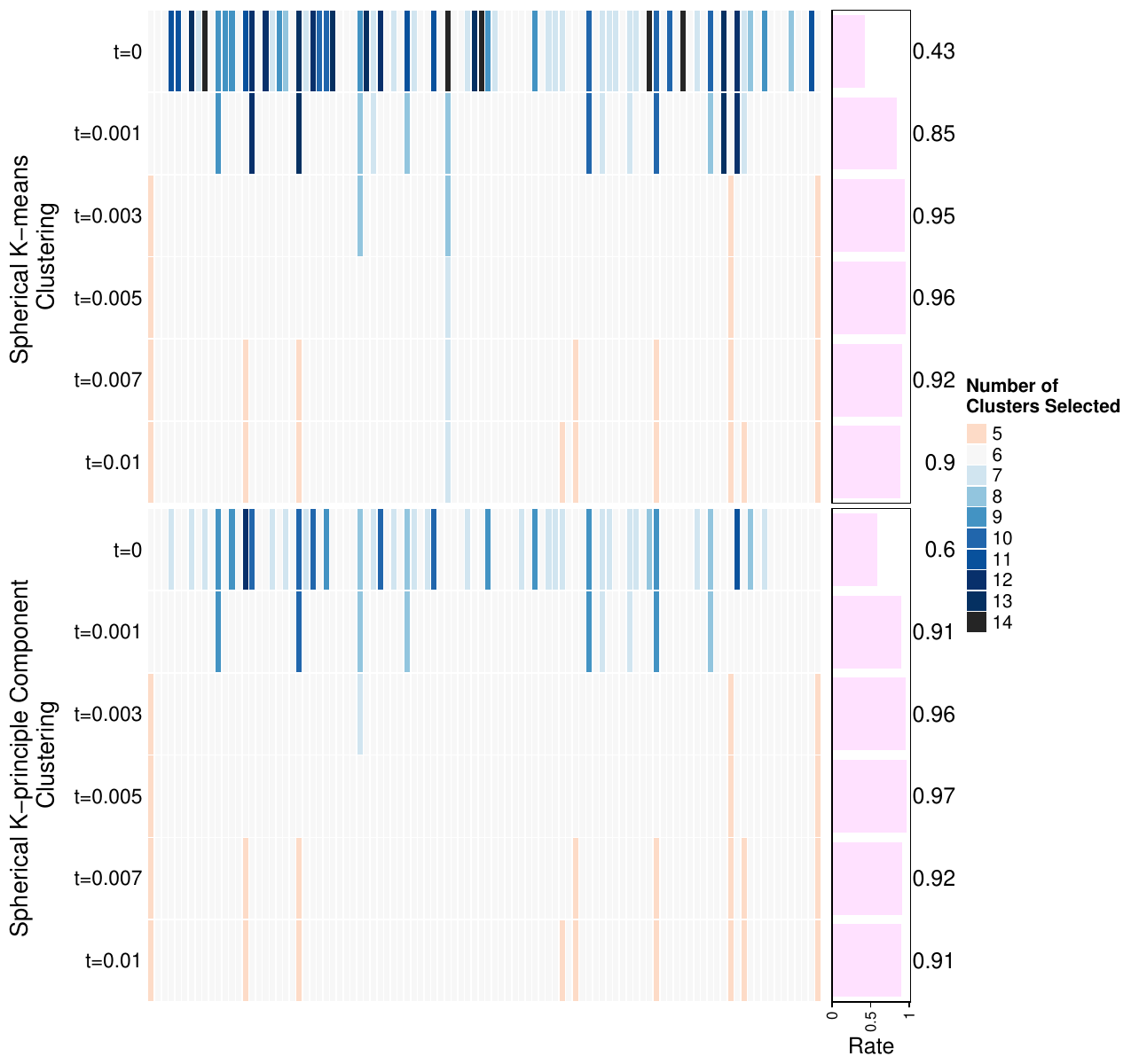}
    \caption{Simulation result visualization for the setup $d=10,k=6$ in selection of 5\% data}
    \label{Pd10k6_5}
\end{figure}

\begin{figure}[h]
    \centering
    \includegraphics[width=0.55\textwidth]{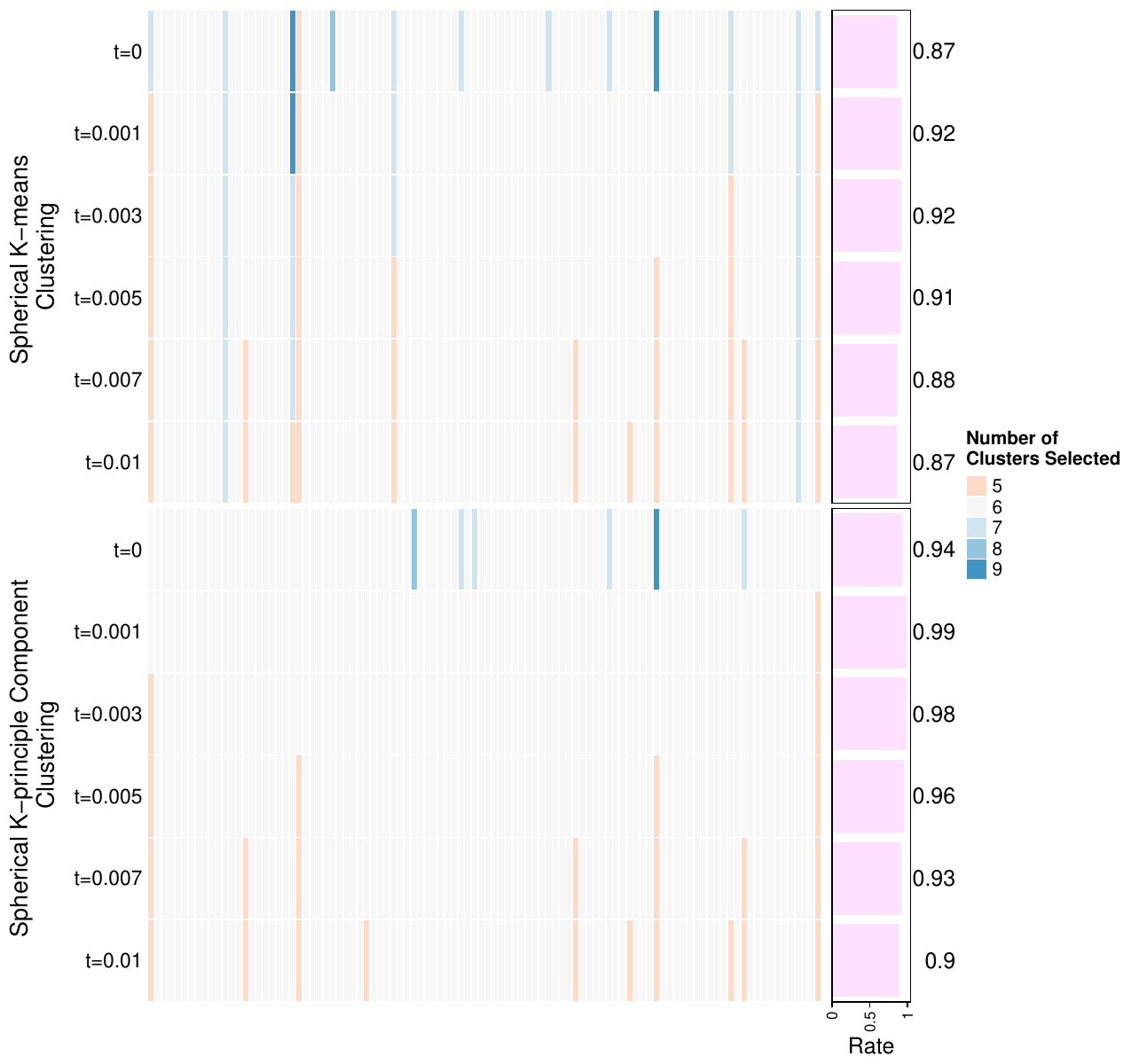}
    \caption{Simulation result visualization for the setup $d=10,k=6$ in selection of 10\% data}
    \label{Pd10k6_10}
\end{figure}

\begin{figure}[h]
    \centering
    \includegraphics[width=0.55\textwidth]{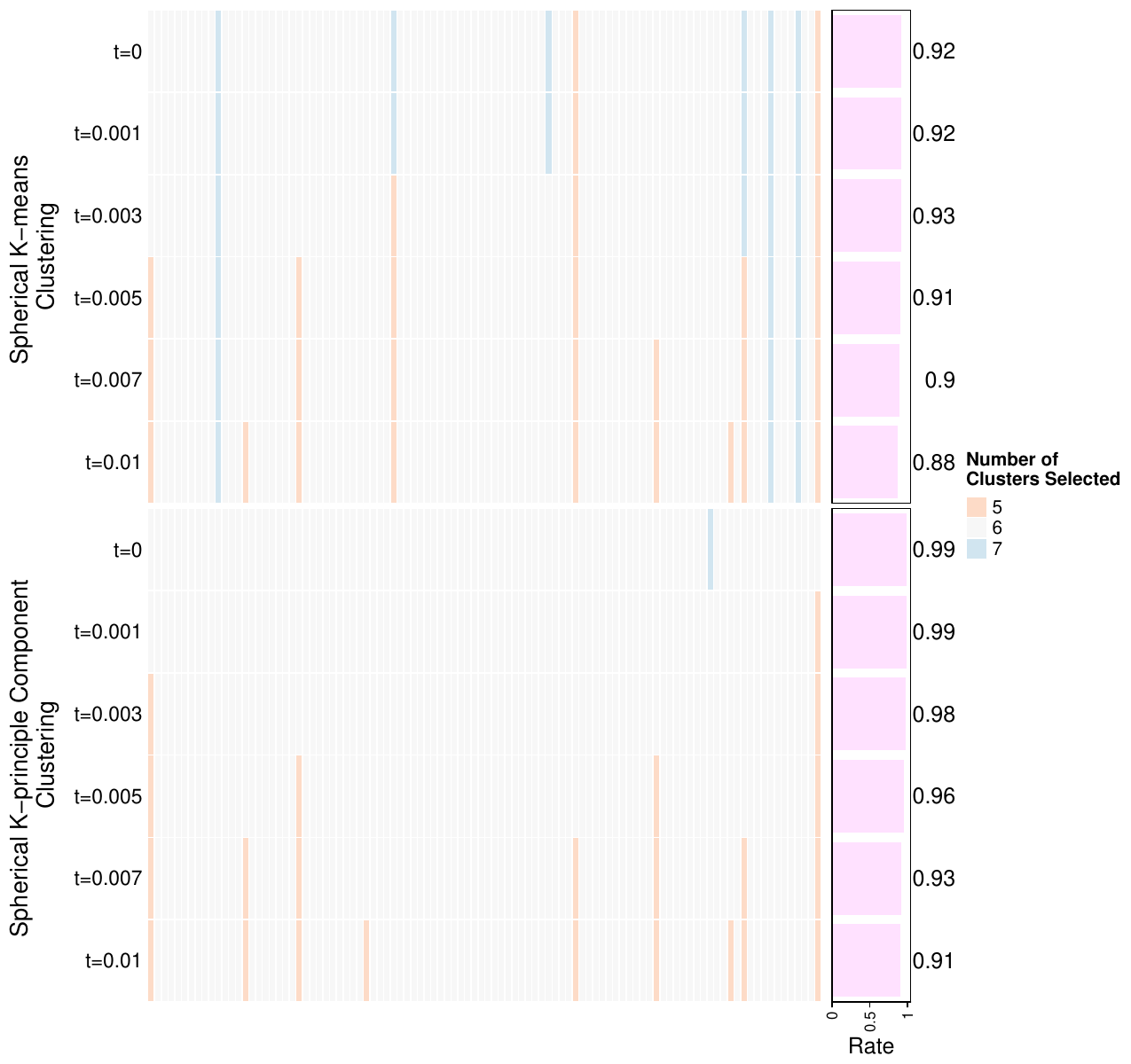}
    \caption{Simulation result visualization for the setup $d=10,k=6$ in selection of 15\% data}
    \label{Pd10k6_15}
\end{figure}

\clearpage
\section{Real data demonstrations}
The results here correspond to Section 6.2 of \cite{deng2024estimation}, and we follow exactly the same analysis except for different choices of extremal subsample percentages.  The estimated $B^\top$ is computed based on the optimal $k$ chosen by inspecting the bending behavior of the corresponding penalized ASW curves.

\subsection{Air Pollution Data}
The results here correspond to Section 6.2.1 of \cite{deng2024estimation}.

\begin{figure}[h]
    \centering
    \begin{minipage}{0.45\textwidth}
        \centering
        \includegraphics[width=\textwidth]{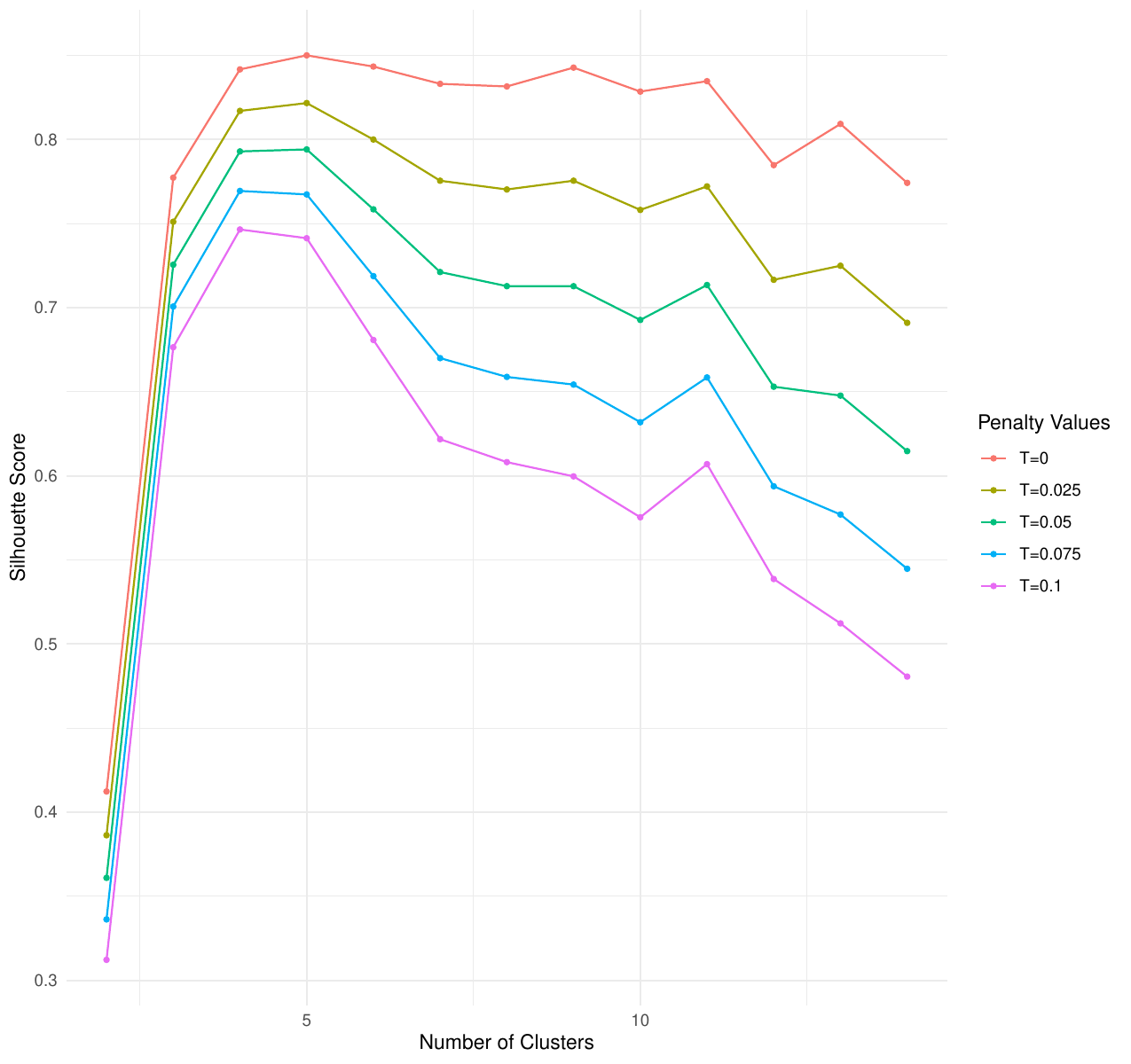} 
        \caption{Penalized ASW Curves for Summer Air Pollution Data (10\%)}
    \end{minipage}
    \hfill
    \begin{minipage}{0.45\textwidth}
        \centering
        \includegraphics[width=\textwidth]{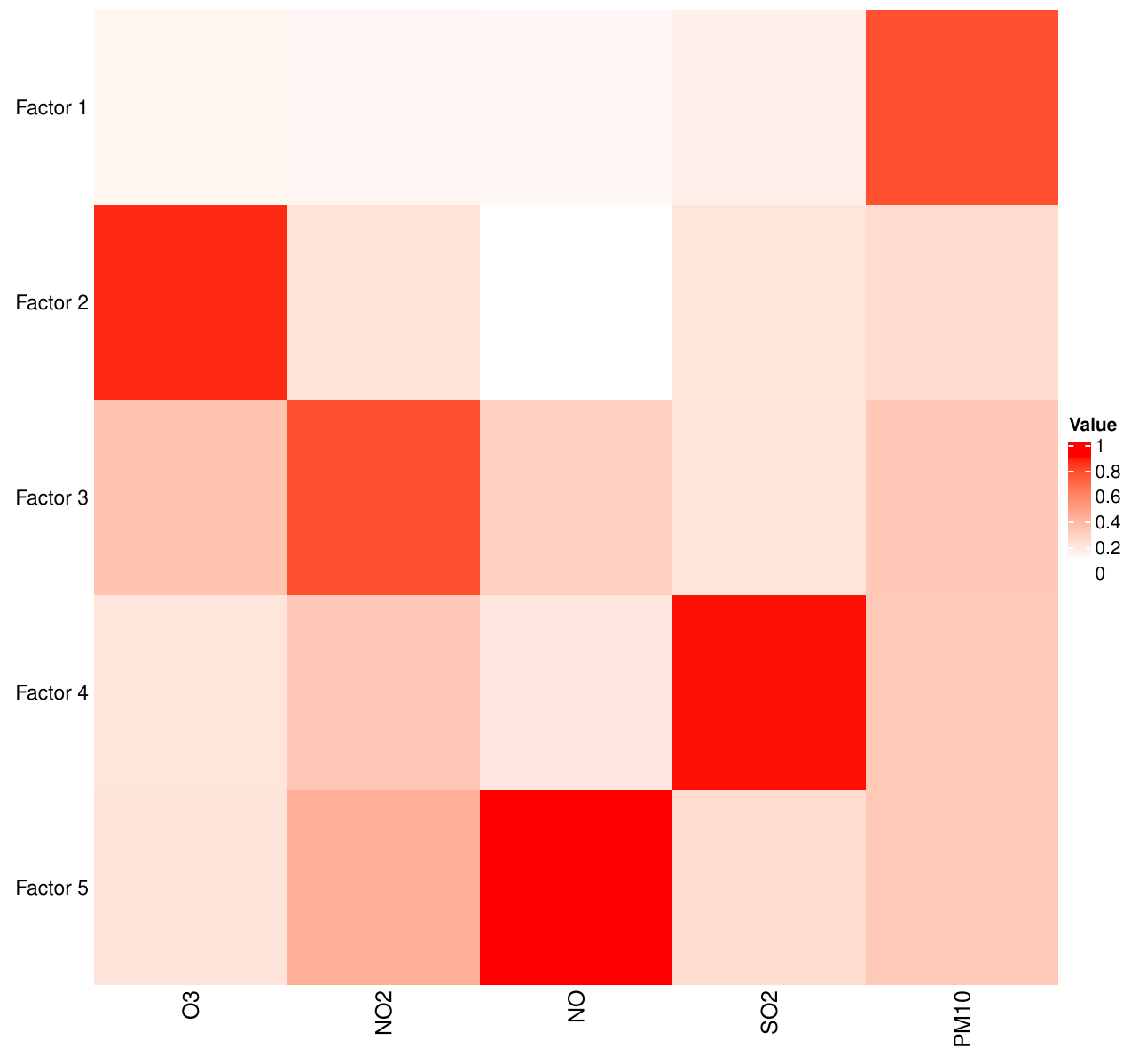} 
        \caption{Estimated $B^\top$ (10\%)}
    \end{minipage}
\end{figure}

\begin{figure}[h]
    \centering
    \begin{minipage}{0.45\textwidth}
        \centering
        \includegraphics[width=\textwidth]{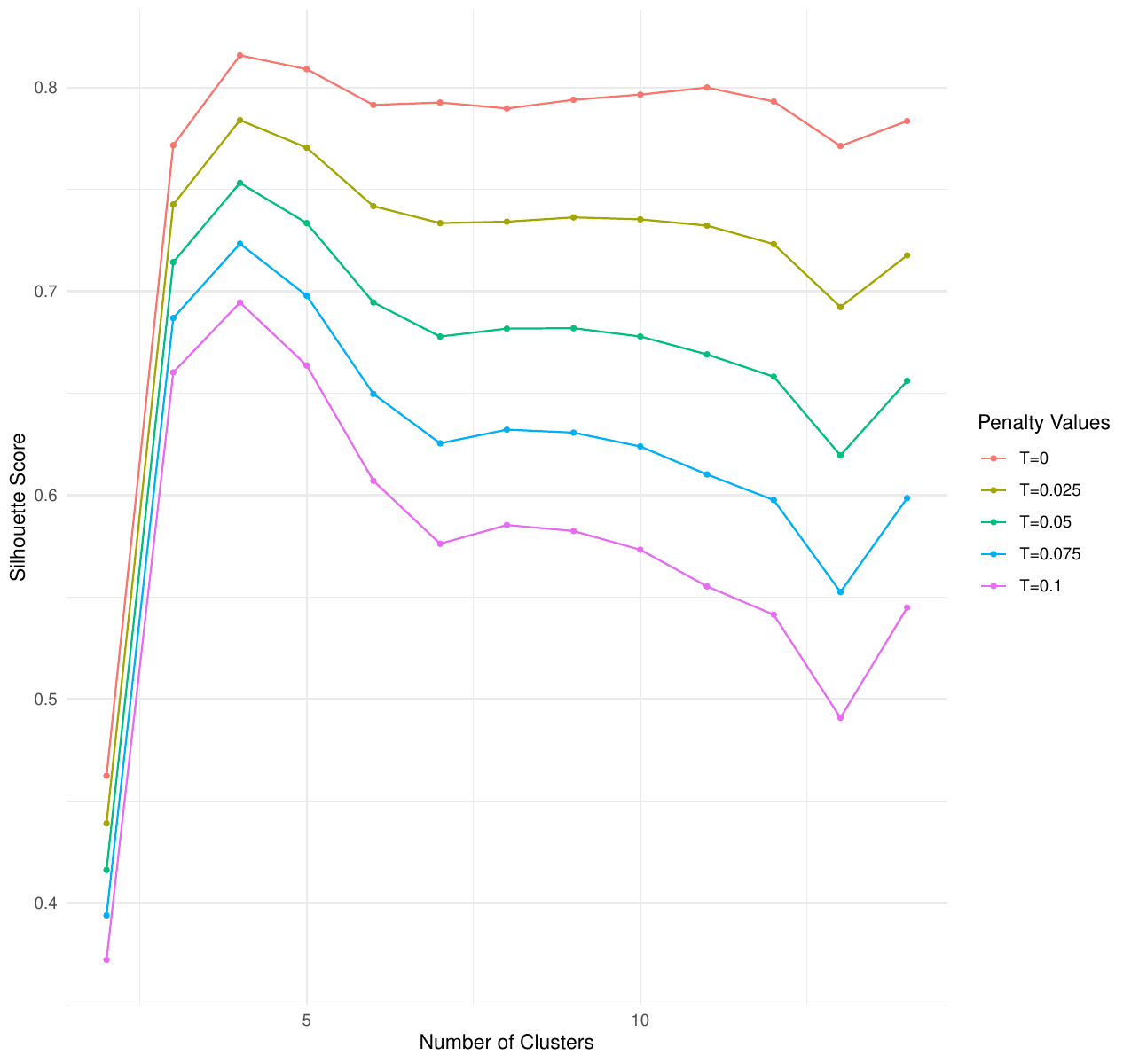} 
        \caption{Penalized ASW Curves for Summer Air Pollution Data (15\%)}
    \end{minipage}
    \hfill
    \begin{minipage}{0.45\textwidth}
        \centering
        \includegraphics[width=\textwidth]{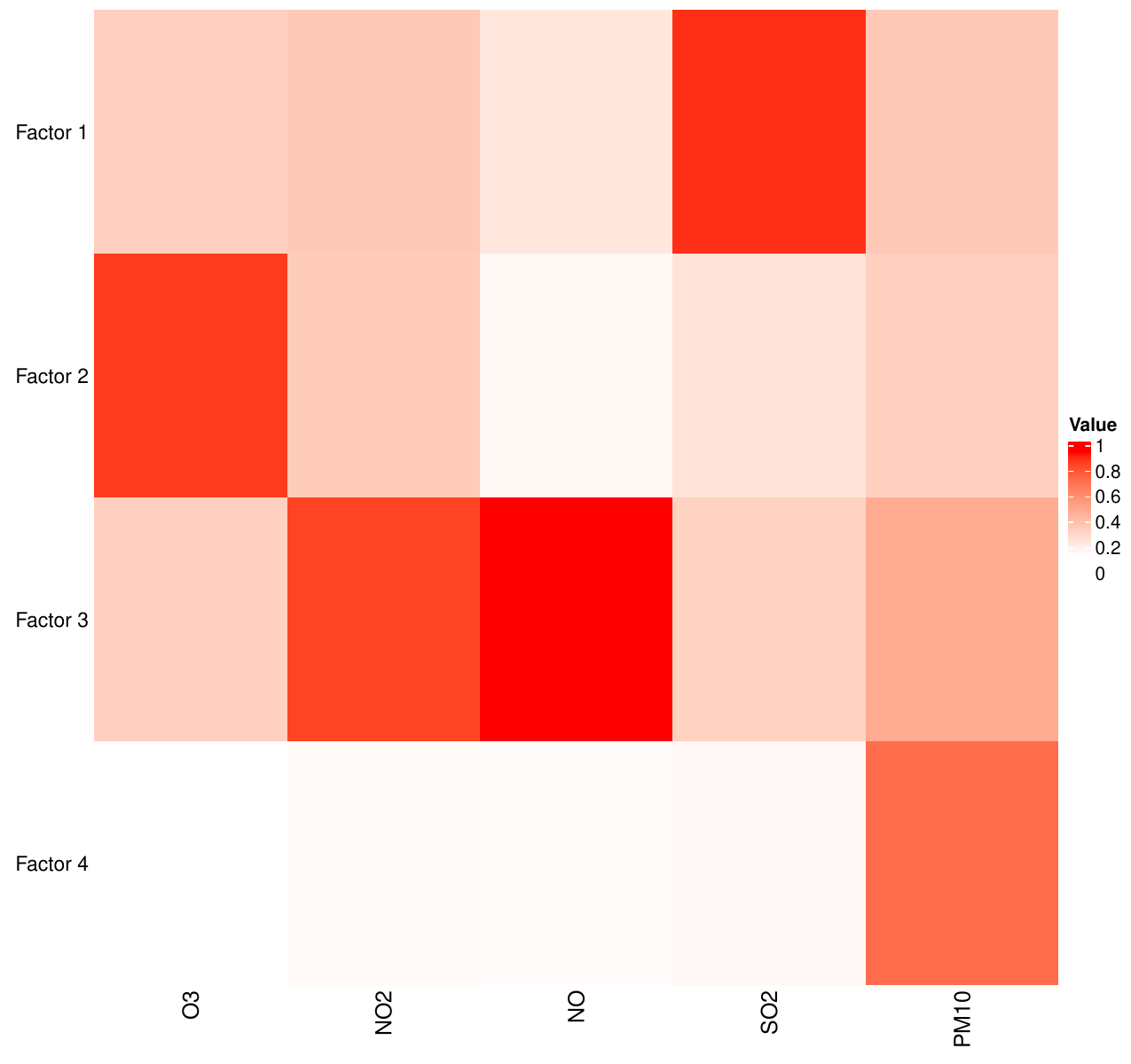} 
        \caption{Estimated $B^\top$ (15\%)}
    \end{minipage}
\end{figure}

\begin{figure}[h]
    \centering
    \begin{minipage}{0.45\textwidth}
        \centering
        \includegraphics[width=\textwidth]{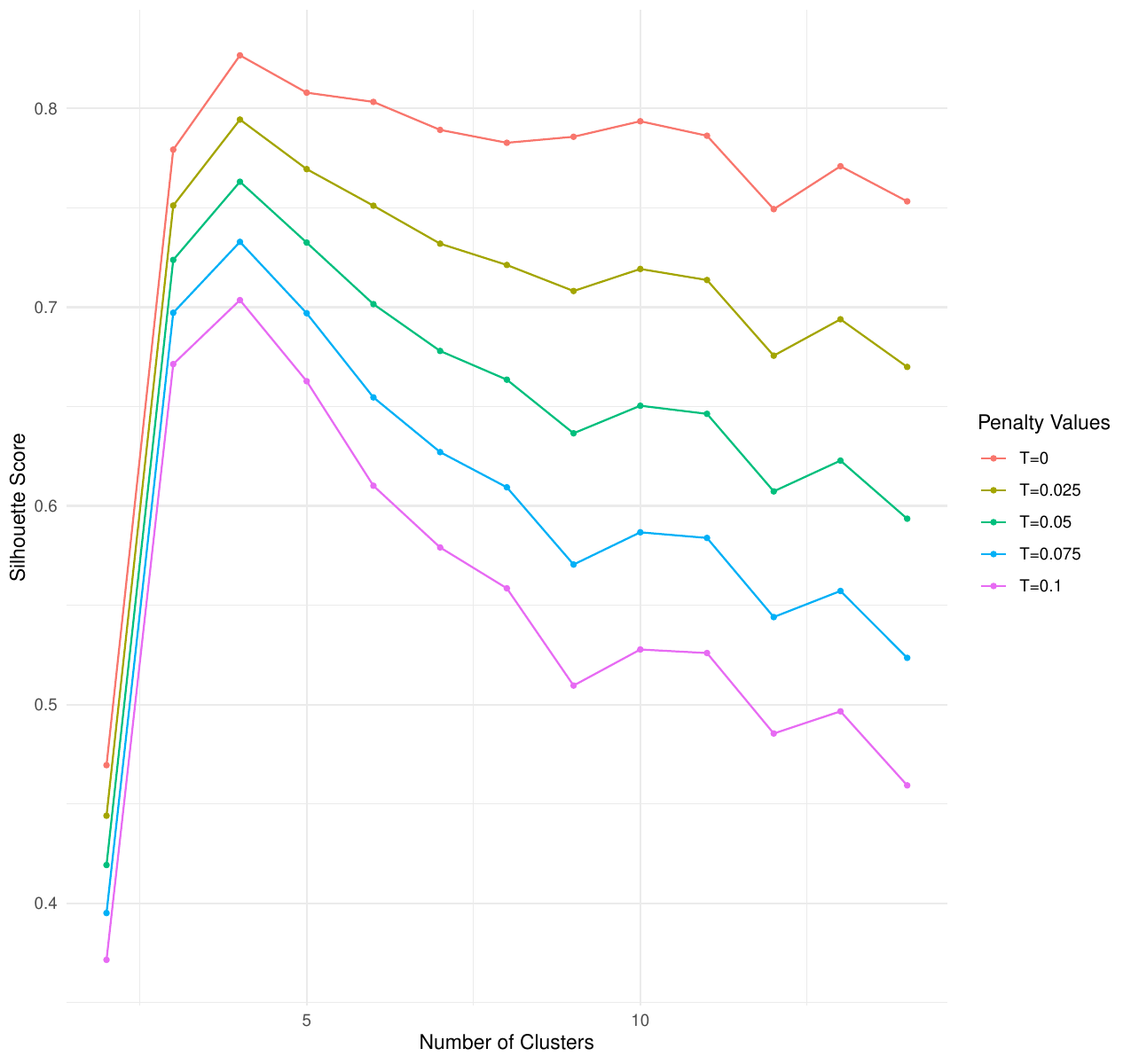} 
        \caption{Penalized ASW Curves for Summer Air Pollution Data (20\%)}
    \end{minipage}
    \hfill
    \begin{minipage}{0.45\textwidth}
        \centering
        \includegraphics[width=\textwidth]{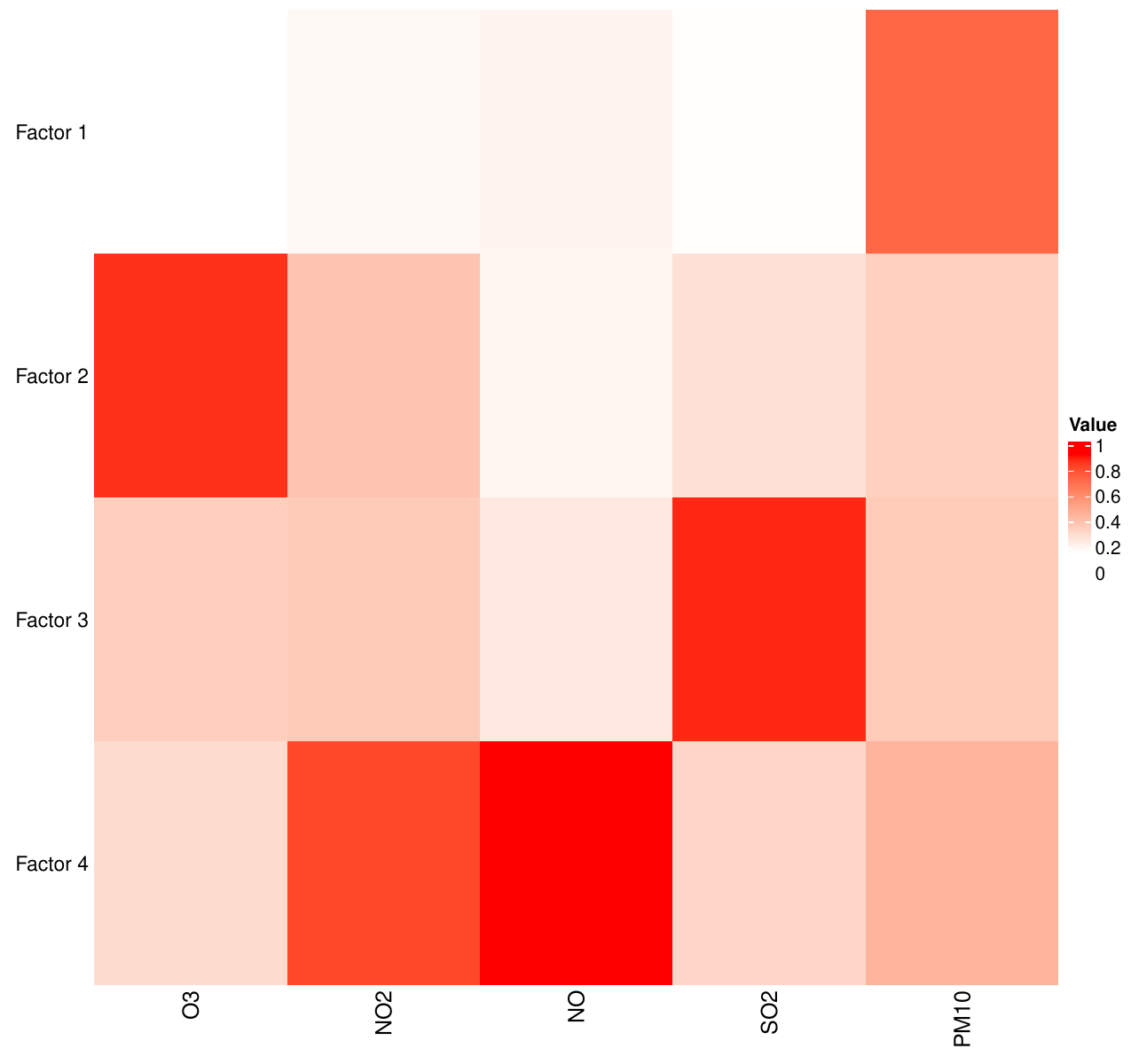} 
        \caption{Estimated $B^\top$ (20\%)}
    \end{minipage}
\end{figure}
\clearpage

\begin{figure}[h]
    \centering
    \begin{minipage}{0.45\textwidth}
        \centering
        \includegraphics[width=\textwidth]{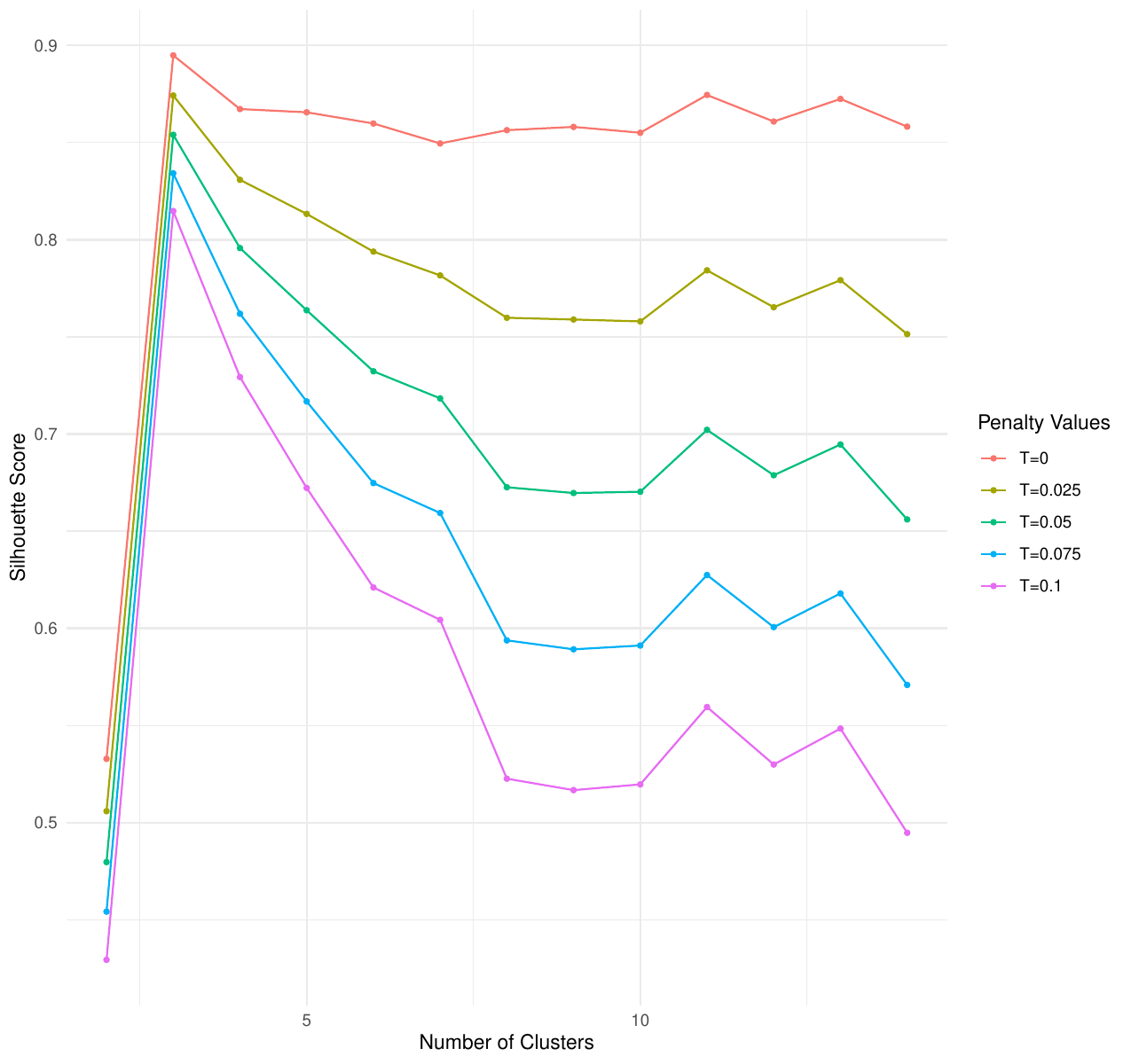} 
        \caption{Penalized ASW Curves for Winter Air Pollution Data (10\%)}
    \end{minipage}
    \hfill
    \begin{minipage}{0.45\textwidth}
        \centering
        \includegraphics[width=\textwidth]{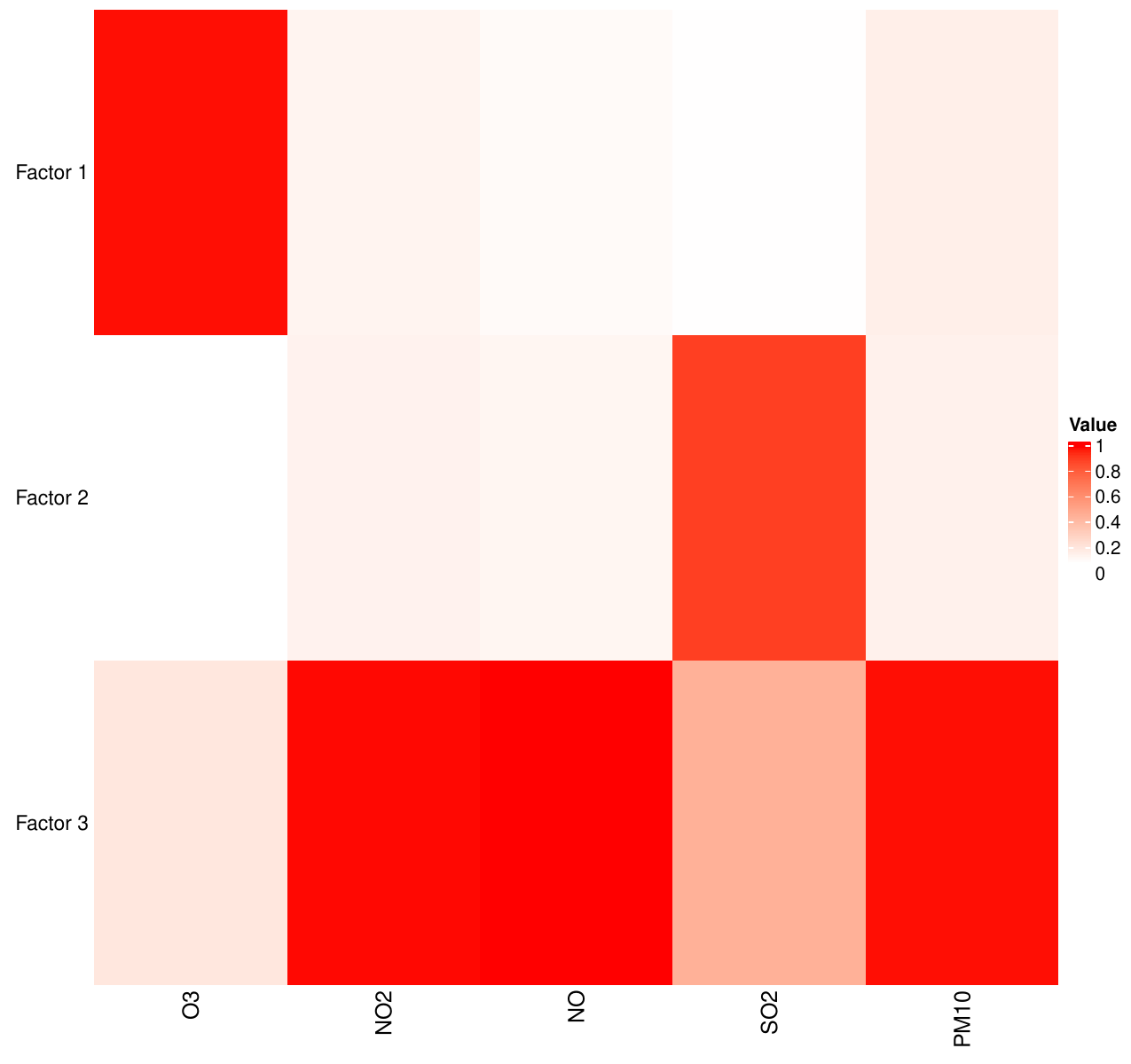} 
        \caption{Estimated $B^\top$ (10\%)}
    \end{minipage}
\end{figure}

\begin{figure}[h]
    \centering
    \begin{minipage}{0.45\textwidth}
        \centering
        \includegraphics[width=\textwidth]{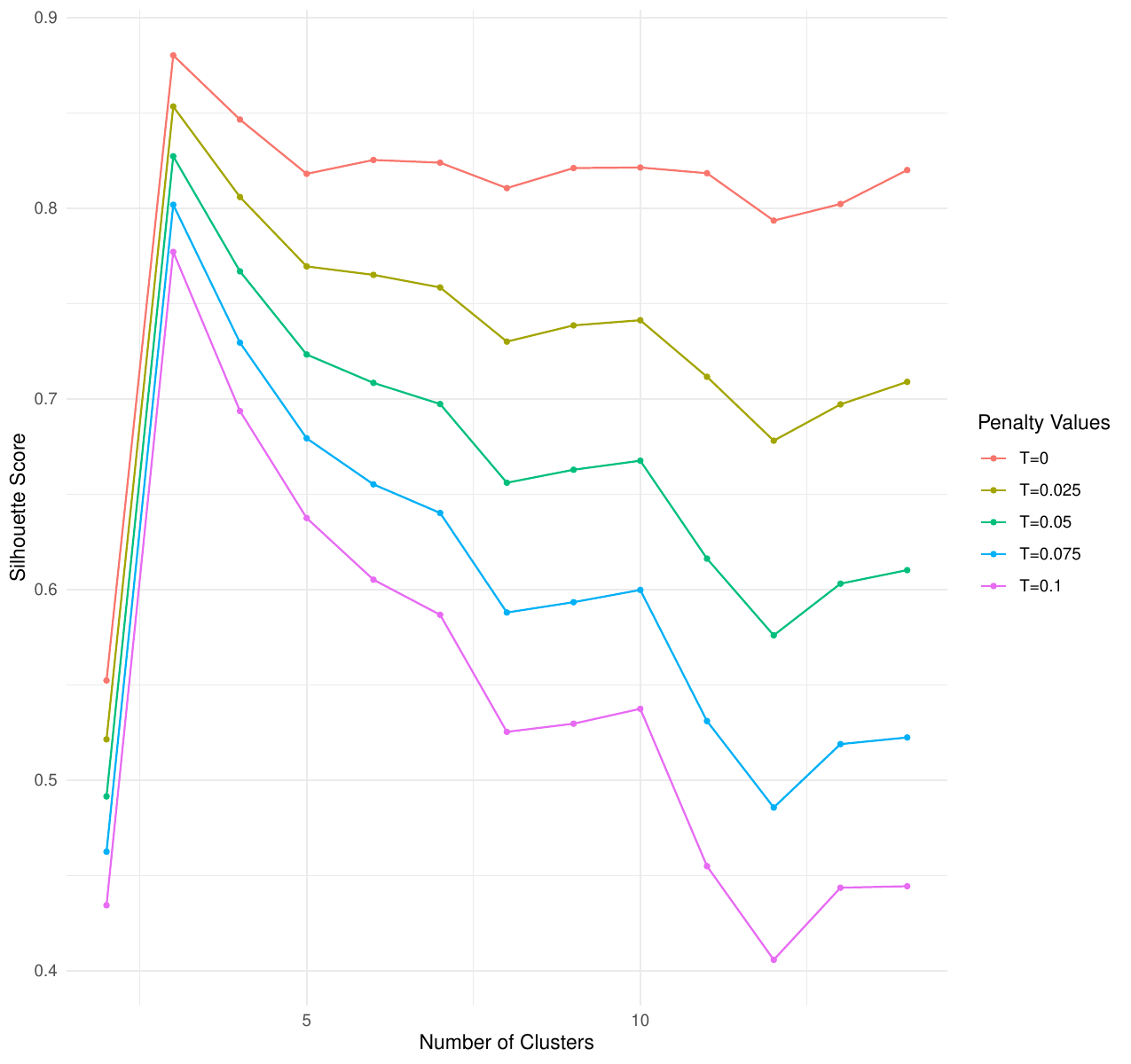} 
        \caption{Penalized ASW Curves for Winter Air Pollution Data (15\%)}
    \end{minipage}
    \hfill
    \begin{minipage}{0.45\textwidth}
        \centering
        \includegraphics[width=\textwidth]{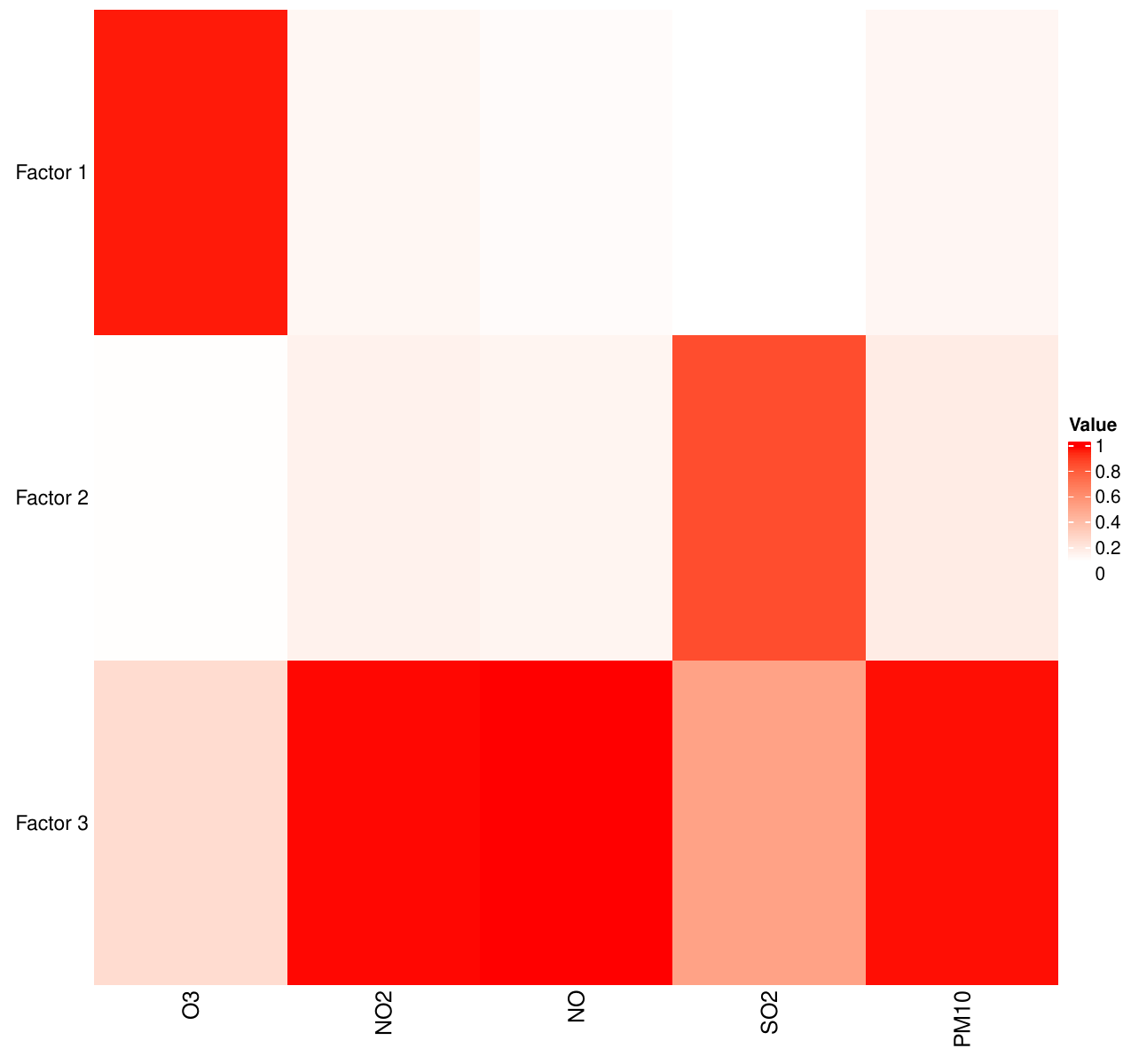} 
        \caption{Estimated $B^\top$ (15\%)}
    \end{minipage}
\end{figure}

\begin{figure}[h]
    \centering
    \begin{minipage}{0.45\textwidth}
        \centering
        \includegraphics[width=\textwidth]{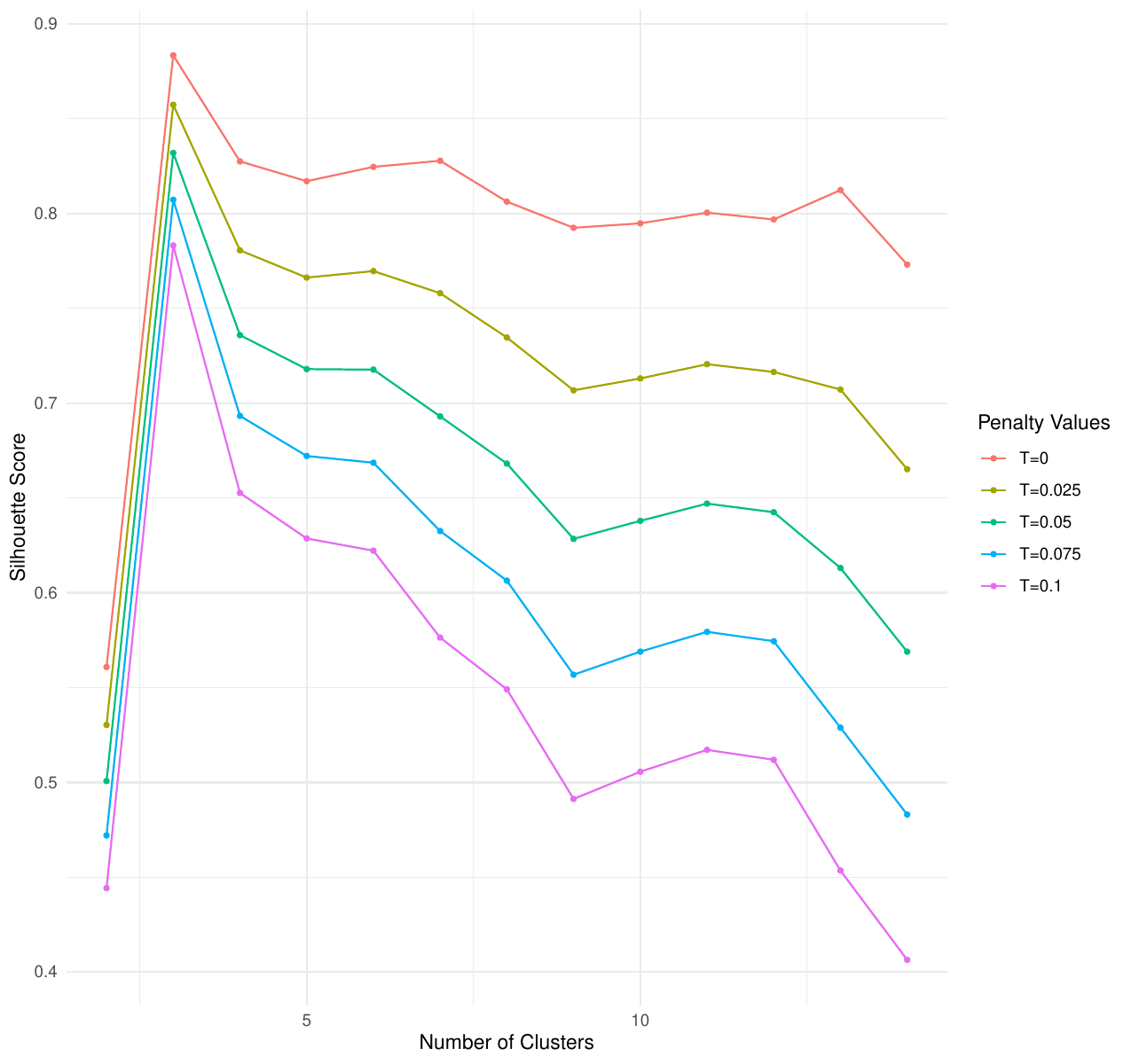} 
        \caption{Penalized ASW Curves for Winter Air Pollution Data (20\%)}
    \end{minipage}
    \hfill
    \begin{minipage}{0.45\textwidth}
        \centering
        \includegraphics[width=\textwidth]{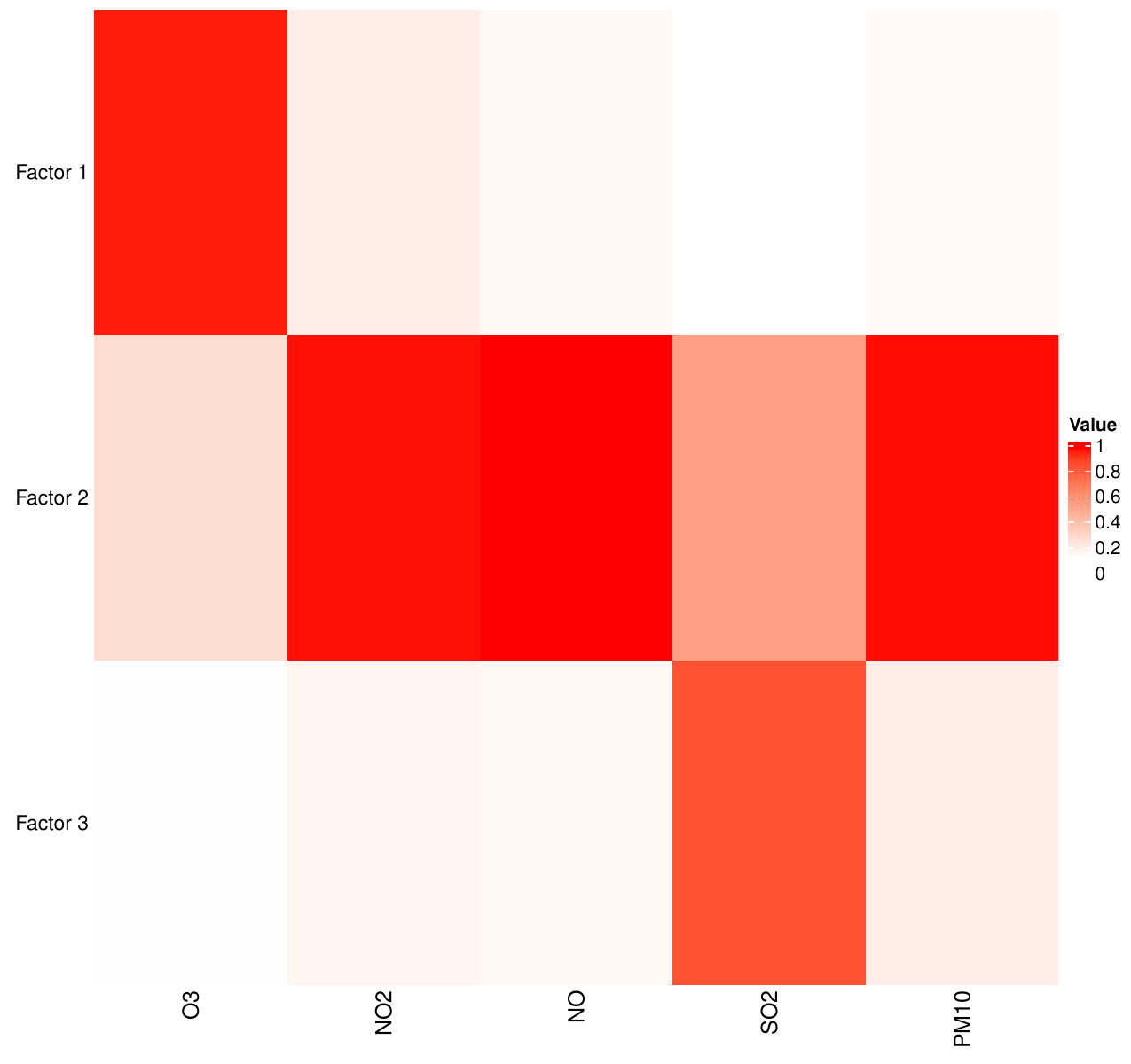} 
        \caption{Estimated $B^\top$ (20\%)}
    \end{minipage}
\end{figure}
\clearpage
\subsection{River Discharge Data}

The results here correspond to Section 6.2.2 of \cite{deng2024estimation}.


\begin{figure}[h]
    \centering
    \begin{minipage}{0.45\textwidth}
        \centering
        \includegraphics[width=\textwidth]{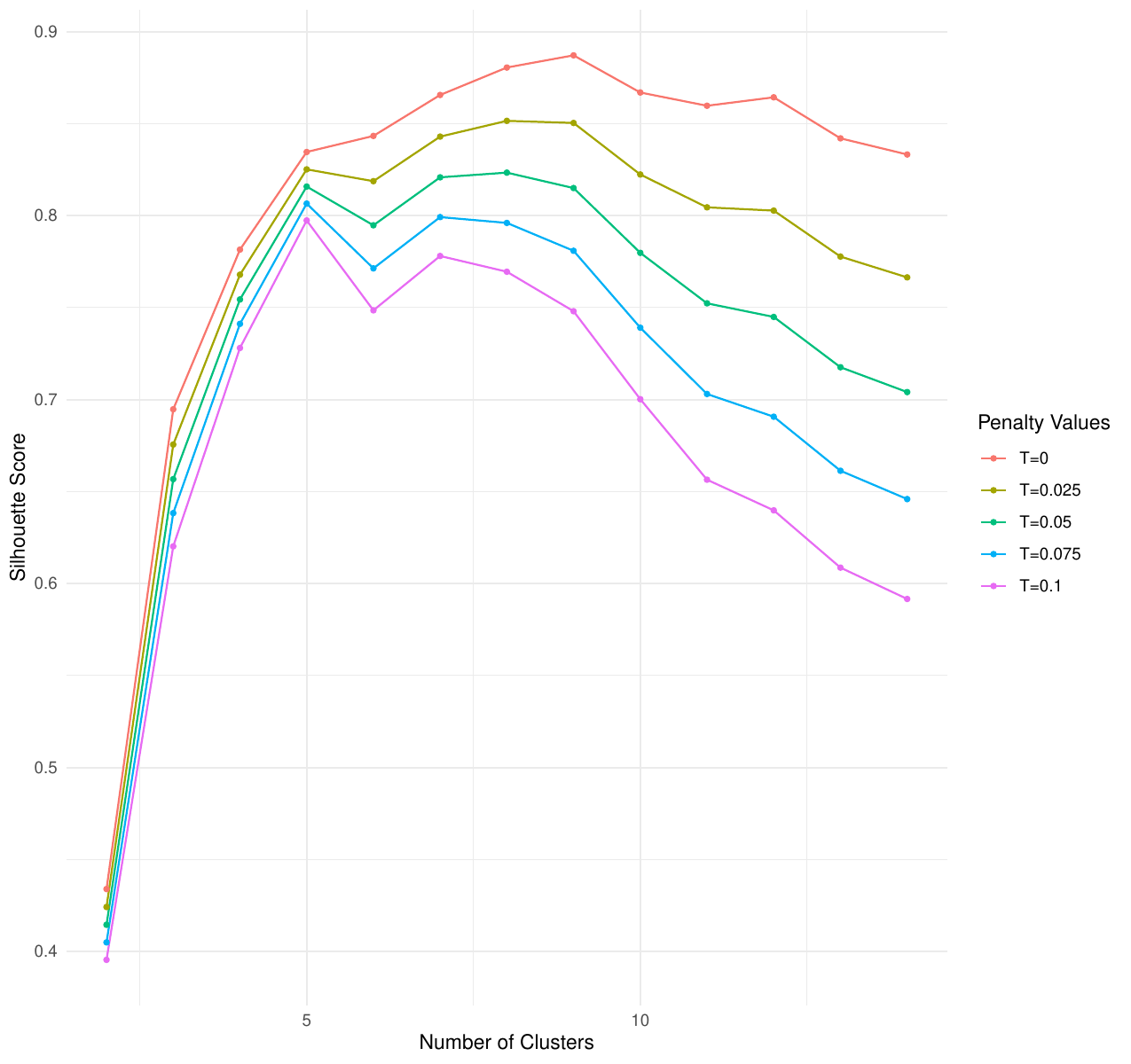} 
        \caption{Penalized ASW Curves for River Discharge Data (1\%)}
    \end{minipage}
    \hfill
    \begin{minipage}{0.45\textwidth}
        \centering
        \includegraphics[width=\textwidth]{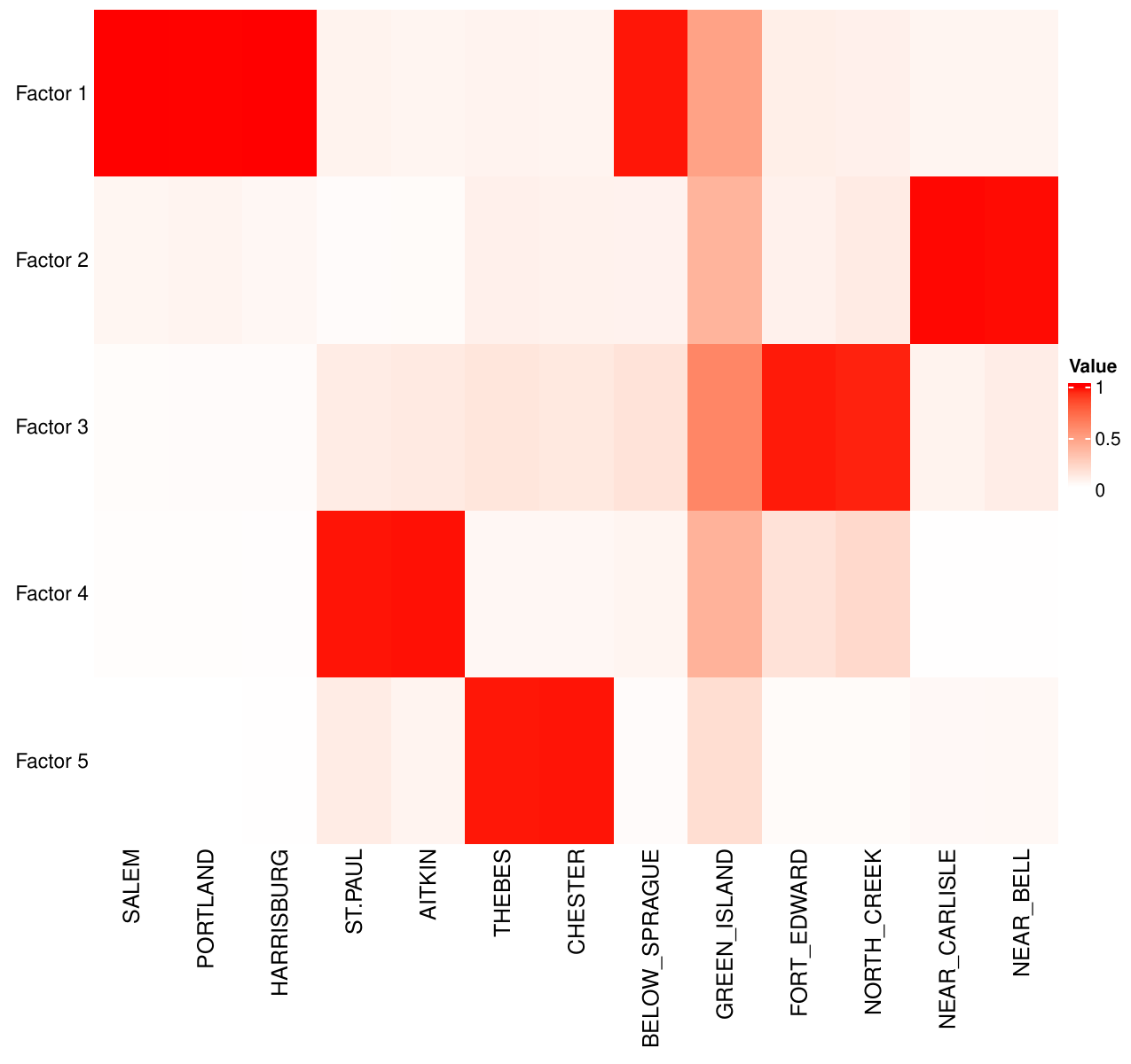} 
        \caption{Estimated $B^\top$ (1\%)}
    \end{minipage}
\end{figure}

\begin{figure}[h]
    \centering
    \begin{minipage}{0.45\textwidth}
        \centering
        \includegraphics[width=\textwidth]{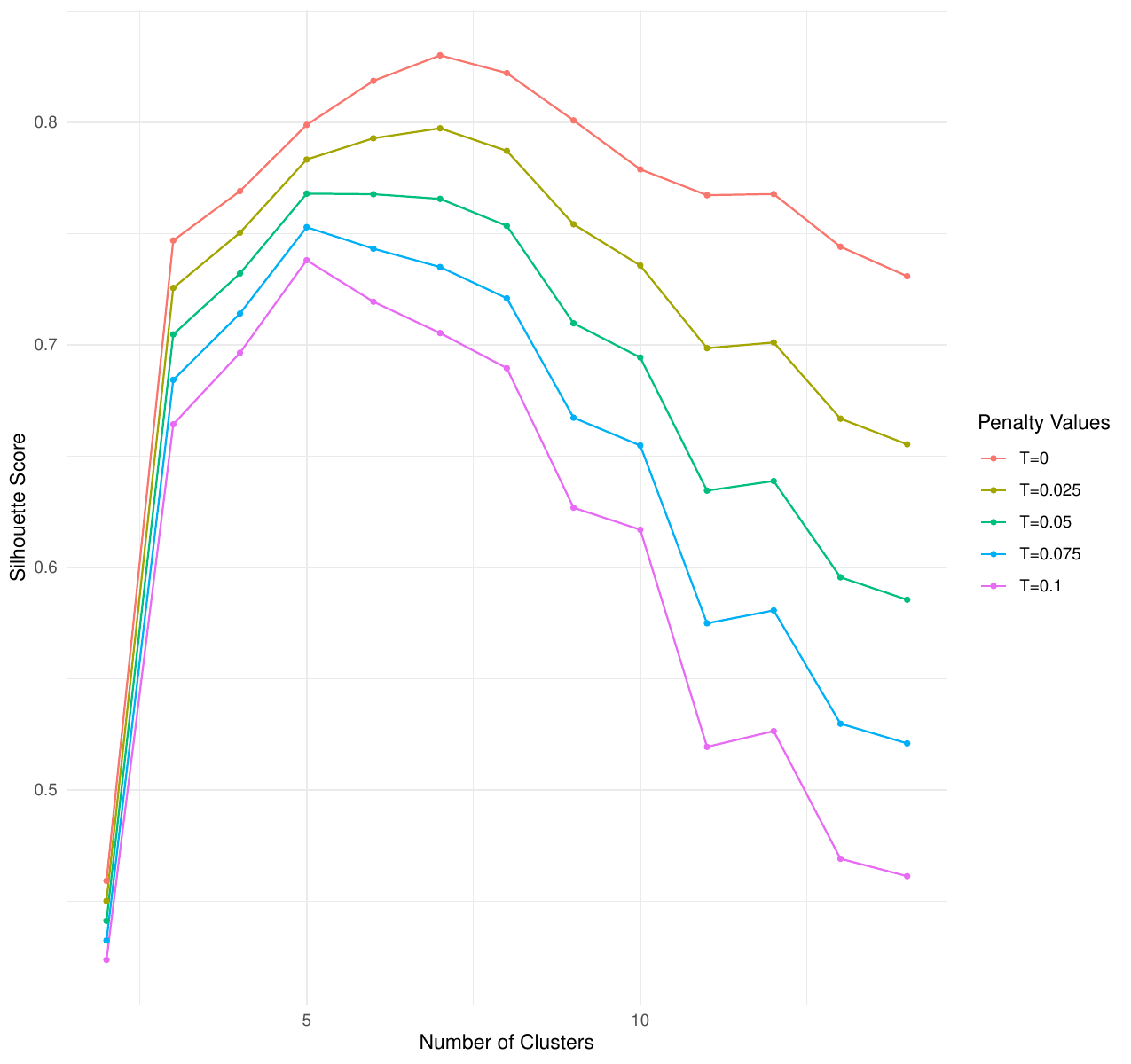} 
        \caption{Penalized ASW Curves for River Discharge Data (5\%)}
    \end{minipage}
    \hfill
    \begin{minipage}{0.45\textwidth}
        \centering
        \includegraphics[width=\textwidth]{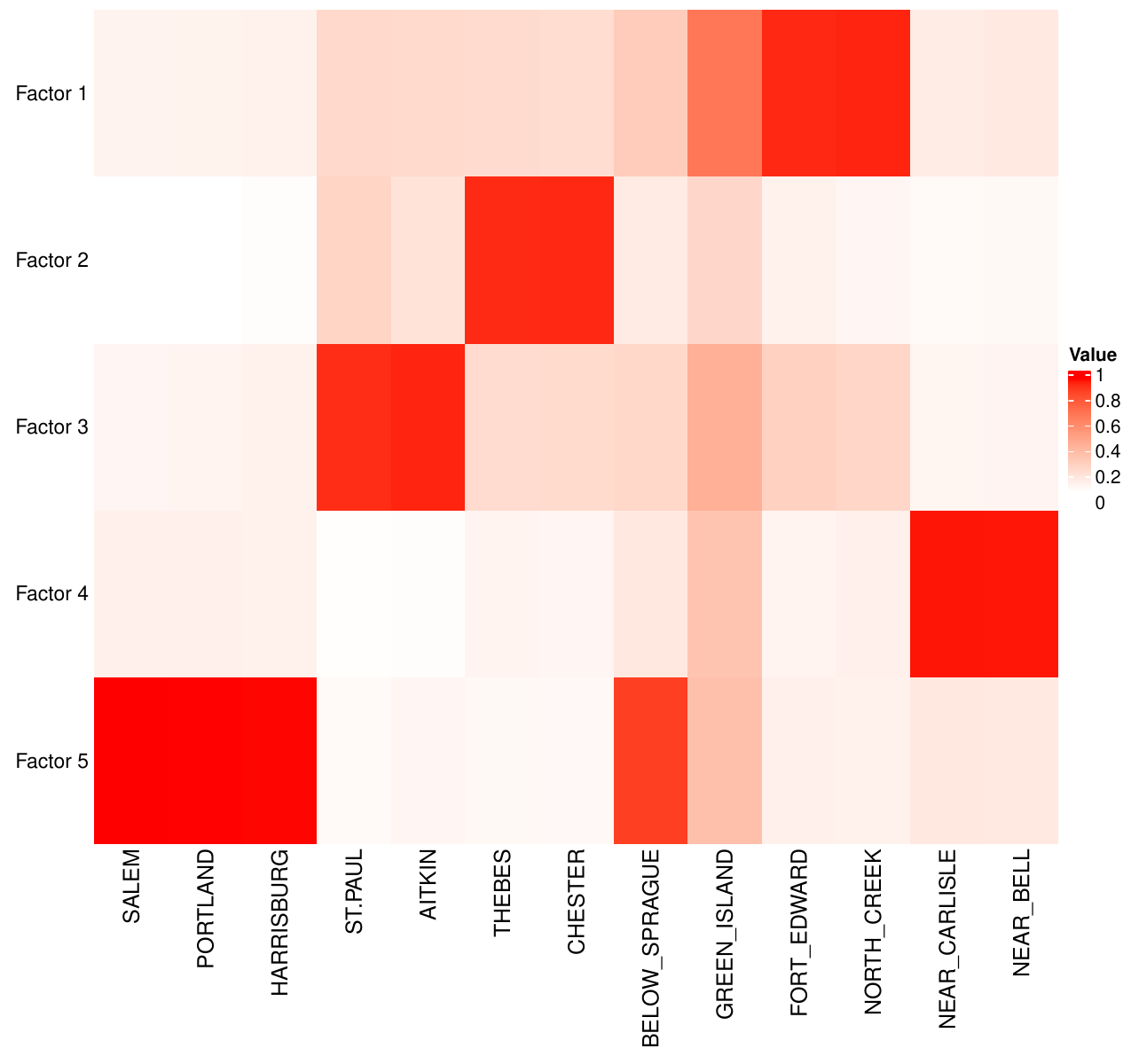} 
        \caption{Estimated $B^\top$ (5\%)}
    \end{minipage}
\end{figure}

\begin{figure}[h]
    \centering
    \begin{minipage}{0.45\textwidth}
        \centering
        \includegraphics[width=\textwidth]{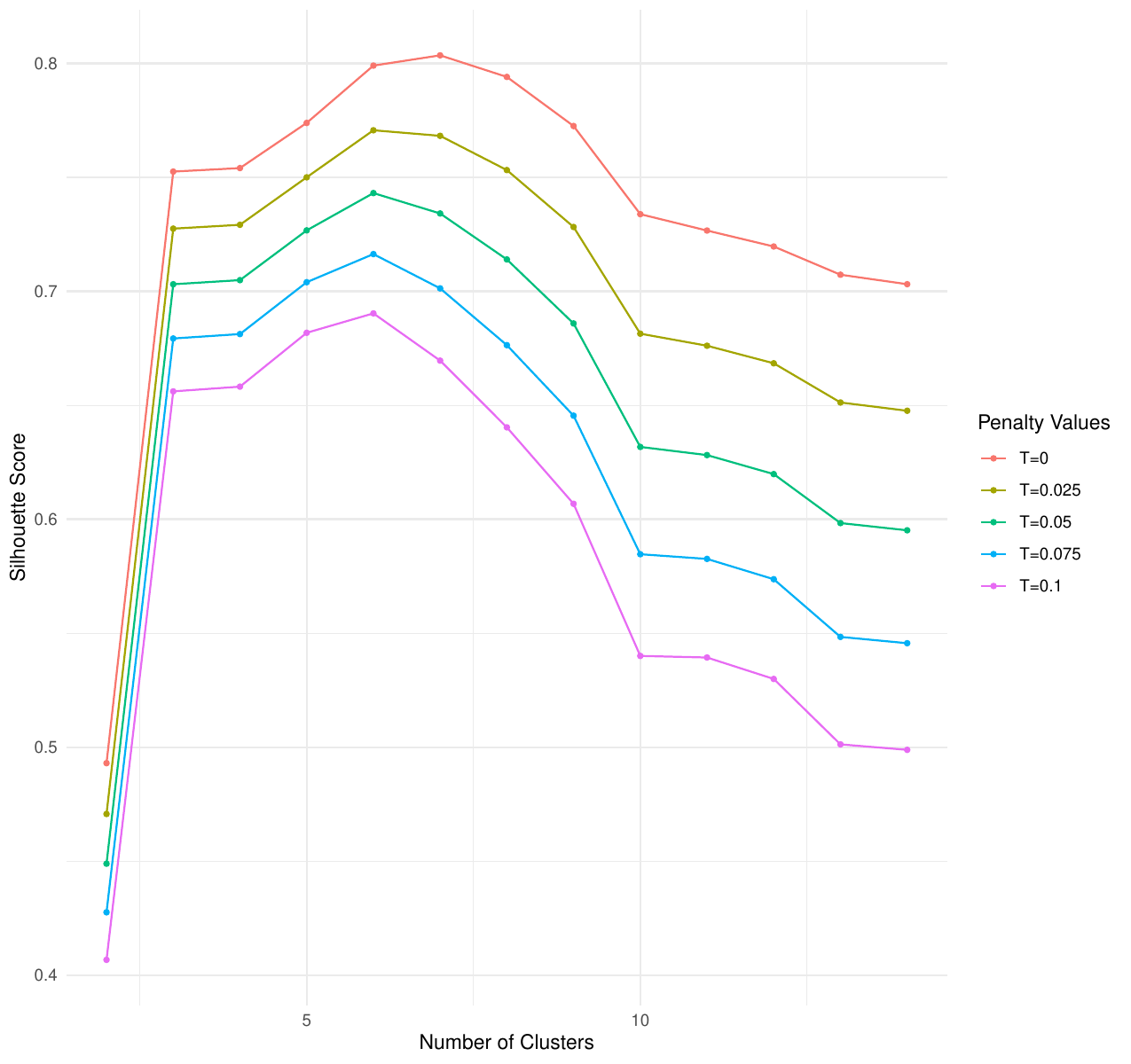} 
        \caption{Penalized ASW Curves for River Discharge Data (10\%)}
    \end{minipage}
    \hfill
    \begin{minipage}{0.45\textwidth}
        \centering
        \includegraphics[width=\textwidth]{figures-new/riverplot2-update10.pdf} 
        \caption{Estimated $B^\top$ (10\%)}
    \end{minipage}
\end{figure}

\begin{figure}[h]
    \centering
    \begin{minipage}{0.45\textwidth}
        \centering
        \includegraphics[width=\textwidth]{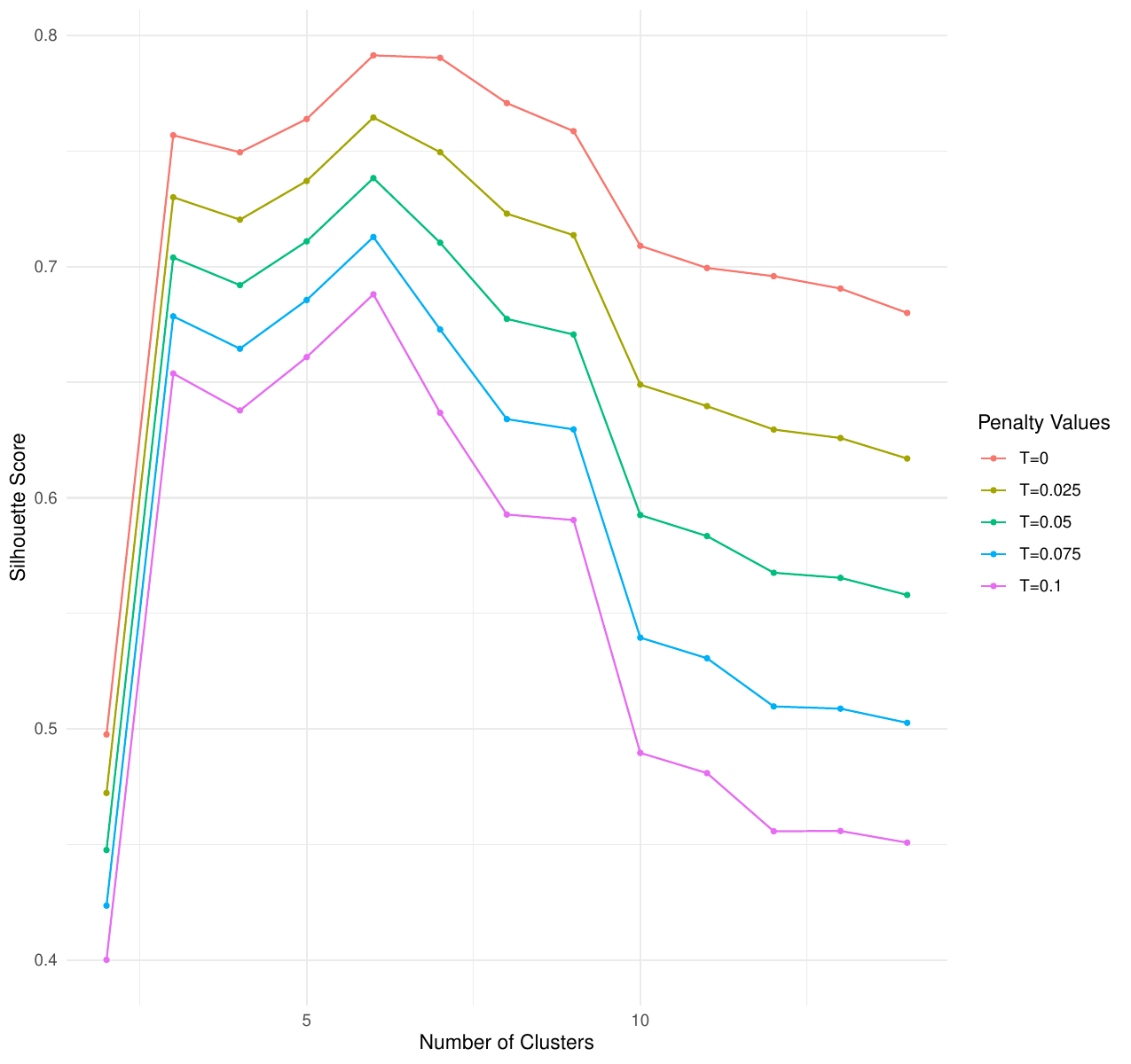} 
        \caption{Penalized ASW Curves for River Discharge Data (15\%)}
    \end{minipage}
    \hfill
    \begin{minipage}{0.45\textwidth}
        \centering
        \includegraphics[width=\textwidth]{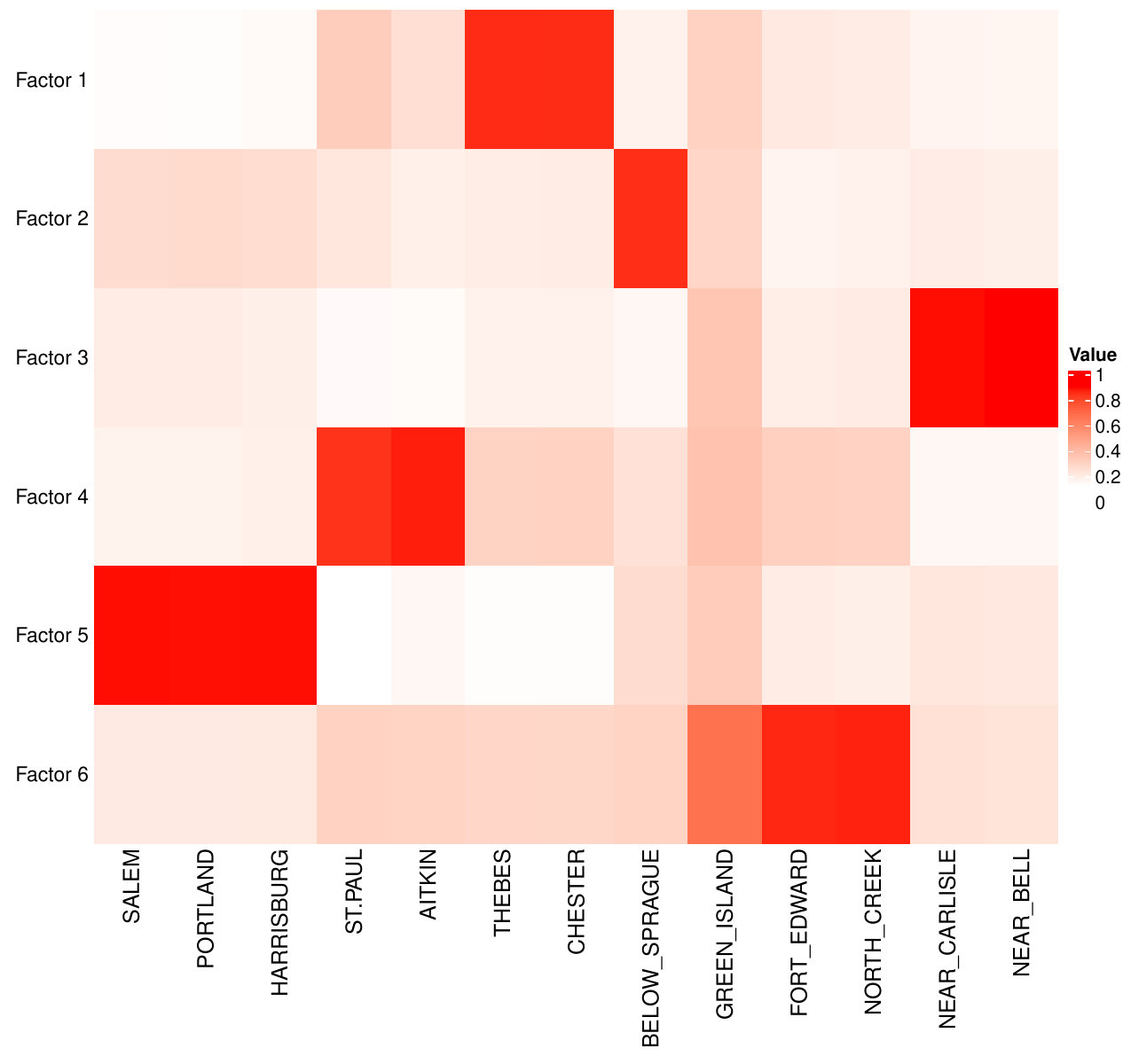} 
        \caption{Estimated $B^\top$ (15\%)}
    \end{minipage}
\end{figure}

\begin{figure}[h]
    \centering
    \begin{minipage}{0.45\textwidth}
        \centering
        \includegraphics[width=\textwidth]{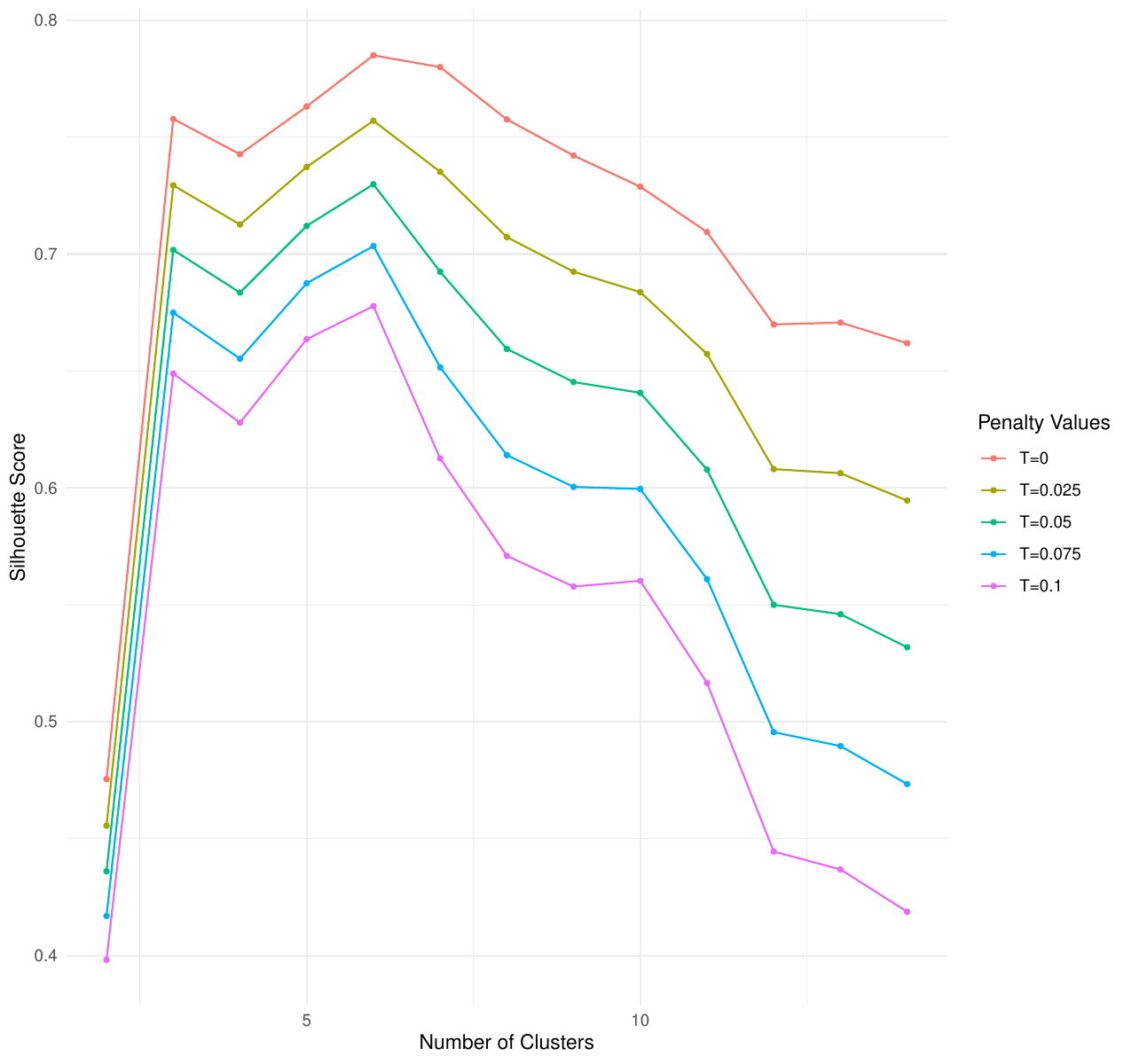} 
        \caption{Penalized ASW Curves for River Discharge Data (20\%)}
    \end{minipage}
    \hfill
    \begin{minipage}{0.45\textwidth}
        \centering
        \includegraphics[width=\textwidth]{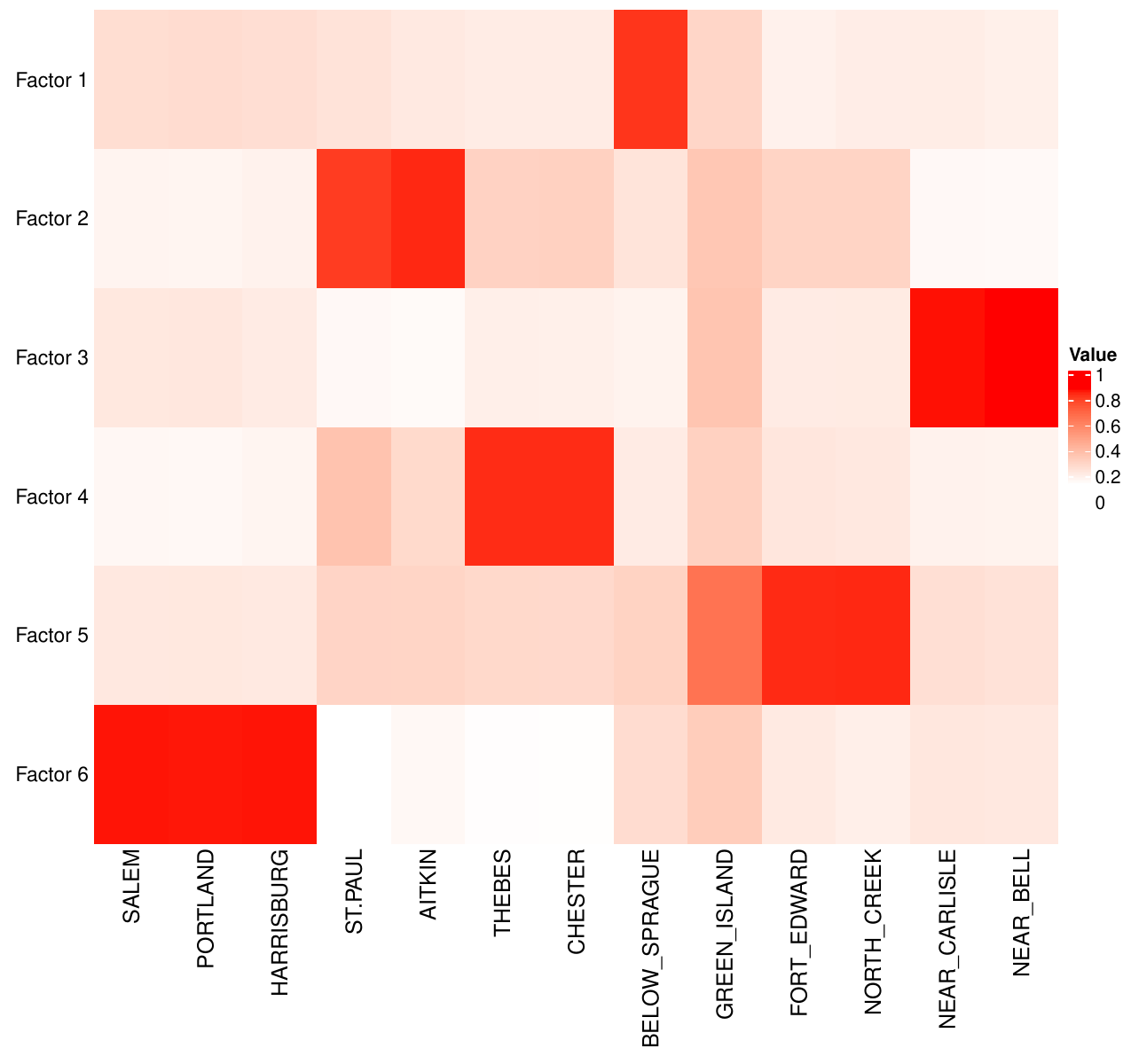} 
        \caption{Estimated $B^\top$ (20\%)}
    \end{minipage}
\end{figure}

\clearpage
\section{Codes}
 The R codes that implement the simulation and real data studies can be found at \url{https://github.com/SyuanD/SphCluster.git}.

\bibliographystyle{plainnat}
\bibliography{trial.bib}
 
\end{document}